\def \arxivversion {1}
    \def \numsides {1}
    \def \numsides {2}
\renewcommand{\vec}{\mathbf}
\renewcommand{\bar}{\overline}
\renewcommand{\hat}{\widehat}
\renewcommand{\tilde}{\widetilde}
\renewcommand{\d}{\mathrm{d}}
\DeclareMathOperator{\argmin}{arg\,min}
\DeclareMathOperator*{\argminbelow}{arg\,min}
\newcommand{\diag}{\mathrm{diag}}
\newcommand{\erf}{\mathrm{erf}}
\newcommand{\ubar}{\underline}
\newcommand{\bbR}{\mathbb{R}}
\newcommand{\bbN}{\mathbb{N}}
\newcommand{\bbP}{\mathbb{P}}
\newcommand{\bbJ}{\mathbb{J}}
\newcommand{\bbK}{\mathbb{K}}
\newcommand{\calN}{\mathcal{N}}
\newcommand{\calK}{\mathcal{K}}
\newcommand{\calL}{\mathcal{L}}
\newcommand{\calC}{\mathcal{C}}
\newcommand{\proj}{\mathrm{proj}}
\newcommand{\indenti}[1]{\transparent{0}{#1}\transparent{1}}
\newtheorem{theorem}{Theorem}
\newtheorem{lemma}{Lemma}
\newtheorem{proposition}{Proposition}
\newtheorem{assumption}{Assumption}
\newtheorem{assumptionsec}{Assumption}[section]
\newenvironment{assumptionp}[1]{
  
  \assumptionalt
}{\endassumptionalt}
\newtheorem{definition}{Definition}
\newtheorem{remark}{Remark}
\newtheorem{problem}{Problem}
\newcommand{\remove}[1]{\iffalse {#1} \fi}
\def\BibTeX{{\rm B\kern-.05em{\sc i\kern-.025em b}\kern-.08em
    T\kern-.1667em\lower.7ex\hbox{E}\kern-.125emX}}
\begin{document}

%manuscript title
\if \arxivversion 1
    \title{An Error-Based Safety Buffer for Safe Adaptive Control (Extended Version)}
    \author{Peter A. Fisher, Johannes Autenrieb, and Anuradha M. Annaswamy}
    \date{}
\else
    \title{An Error-Based Safety Buffer for Safe Adaptive Control}
    \author{Peter A. Fisher, Johannes Autenrieb, \IEEEmembership{Member, IEEE}, and Anuradha M. Annaswamy, \IEEEmembership{Fellow, IEEE}
    \thanks{Manuscript received Jan, 2026. This work is supported by the Boeing Strategic University Initiative and by the Air Force Research Laboratory.}
    \thanks{Peter A. Fisher is with the Mechanical Engineering Department at the Massachusetts Institute of Technology, Cambridge, MA 02139 USA (e-mail: pafisher@mit.edu).}
    \thanks{Johannes Autenrieb is with the Institute of Flight Systems at the German Aerospace Center and the Technical University of Braunschweig, Braunschweig, Germany (e-mail: 	johannes.autenrieb@dlr.de).}
    \thanks{Anuradha M. Annaswamy is with the Mechanical Engineering Department at the Massachusetts Institute of Technology, Cambridge, MA 02139 USA (e-mail: aanna@mit.edu).}}
\fi

%author infomations
% \author{IEEE Publication Technology,~\IEEEmembership{Staff,~IEEE,}
%         <-this % stops a space
% acknowledgement       
% \thanks{This paper was produced by the IEEE Publication Technology Group. They are in Piscataway, NJ.}% <-this % stops a space
% \thanks{Manuscript received Month XXX, 2024; revised Month XX, 20XX.}}

% hyphenation of words
% \hyphenation{op-tical net-works semi-conduc-tor IEEE-Xplore}

% The paper headers
% \markboth{Journal of \LaTeX\ Class Files,~Vol.~14, No.~8, August~2021}%
% {Shell \MakeLowercase{\textit{et al.}}: A Sample Article Using IEEEtran.cls for IEEE Journals}

%\IEEEpubid{0000--0000/00\$00.00~\copyright~2021 IEEE}
% Remember, if you use this you must call \IEEEpubidadjcol in the second
% column for its text to clear the IEEEpubid mark.

\maketitle

%----------------------------------------------------------------
%----------------------------------------------------------------
% ABSTRACT AND KEYWORDS
%----------------------------------------------------------------
%----------------------------------------------------------------

% \textbf{Information regarding comments}\\
% \amacomment{Comments of Dr. Annaswamy can be added via "\textbackslash amacomment" and are given in blue.}\\
% \pfcomment{Comments of Peter Fisher can be added via "\textbackslash pfcomment" and are given in green.}\\
% \jacomment{Comments of Johannes Autenrieb can be added via "\textbackslash jacomment"and are given in purple.}

\begin{abstract}
% We consider the problem of adaptive control subject to a state constraint. We propose two control solutions for two classes of parametric uncertainties, both of which integrate model-reference adaptive control with control barrier functions to track a reference trajectory as closely as possible while ensuring constraint enforcement. Both approaches are designed for state constraints of any relative degree with respect to the plant, and are proven to guarantee that the uncertain plant satisfies the state constraint for all time with conservatism that decays to near-zero with no requirements on excitation or learning. Simulation results demonstrate the non-conservatism of our control solutions relative to comparable prior approaches.

We consider the problem of adaptive control of a class of feedback linearizable plants with matched parametric uncertainties whose states are accessible, subject to state constraints, which often arise due to safety considerations. In this paper, we combine adaptation and control barrier functions into a real-time control architecture that guarantees stability, ensures control performance, and remains safe even with  parametric uncertainties. Two problems are considered, differing in the nature of the parametric uncertainties. In both cases, the control barrier function is assumed to have an arbitrary relative degree. In addition to guaranteeing stability, it is proved that both the control objective and the safety objective are met with near-zero conservatism. No excitation conditions are imposed on the command signal. Simulation results demonstrate the non-conservatism of all of the theoretical developments.
\end{abstract}

\if \arxivversion 0
    \begin{IEEEkeywords}
    Adaptive Control, State Constraints, Parametric Uncertainties, Control Barrier Functions, Stability, Safety 
    \end{IEEEkeywords}
\fi

%----------------------------------------------------------------
%----------------------------------------------------------------
% MAIN TEXT
%----------------------------------------------------------------
%----------------------------------------------------------------
% \tableofcontents

\if \arxivversion 1
    \renewcommand{\thesection}{\Roman{section}}
    \renewcommand{\thesubsection}{\Roman{section}.\Alph{subsection}}
\fi

\section{Introduction} \label{sec:introduction}
Real time control of dynamic systems with parametric uncertainties so as to ensure high performance and safety is a challenging task.  The field of adaptive control has focused on providing real-time inputs for dynamic systems through parameter learning and control design using a stability framework \cite{Narendra05,Ioannou1996,Sastry_1989,Slotine1991,Krstic1995,Ast13}. While these works have accommodated parametric uncertainties, with extensions to magnitude and rate constraints on the input, consideration of constraints on the state, often necessitated by safety considerations, has been absent for the most part. In settings where the plant is fully known, Control Barrier Functions (CBFs) \cite{ames2019CBF,nguyen2016walkingCBFs,gunter2022automatedvehicles,alan2023automatedvehicles,pmlr-v229-zhang23h} have gained prominence as a minimally-invasive tool for enforcing state constraints that ensure safety. This paper proposes two new approaches for simultaneously realizing both a control objective and a safety objective in the presence of parametric uncertainties. The class of dynamic systems we focus on is feedback-linearizable and time-invariant, with states accessible, and our approaches combine an adaptive control architecture with a new CBF-based controller.

The main challenges in realizing the two objectives in a combined manner are to ensure that state constraints are satisfied during transients in the closed-loop adaptive system, and that the control objectives are fully met in steady-state. The inherent nonlinearity  in the transients is the main difficulty in the first challenge; stringent conditions needed to reduce the effect of the parametric uncertainty cause the second challenge. Unlike the current literature, which address these two challenges only with compromises and conservatism, our approaches provably ensure that both of these challenges are fully overcome and constitute a clear advance of the state of the art in safe and stable adaptive control.

The state of the art in addressing problems where both parametric uncertainties and state constraints are present can be found in \cite{gurriet2018,xu2015CBFRobustness,kolathaya2019ISSf,alan2023PBF,ohnishi2019BarrierCertifiedAdaptiveRL,cheng2019SafeRL,fisac2019GeneralSafetyFramework,fan2020BayesianLearningAdaptive,pmlr-v120-taylor20a,alan2023DisturbanceObserverACBF,das2025RobustCBFUncertaintyEstimation,isaly2024AdaptiveSafetyRISE,sun2024DOBSCC,taylor2020aCBFs,lopez2021RaCBFs,lopez2023UnmatchedCBFs,wang2024AdaptiveCBFs,autenrieb2023EBR,solanocastellanos2025SafeFormationControl}. Robust control designs have been proposed in \cite{gurriet2018,xu2015CBFRobustness,kolathaya2019ISSf,alan2023PBF} which offset the state constraints with buffer windows according to worst-case bounds on modeling uncertainties. The papers  \cite{ohnishi2019BarrierCertifiedAdaptiveRL,cheng2019SafeRL,fisac2019GeneralSafetyFramework,fan2020BayesianLearningAdaptive,pmlr-v120-taylor20a} have sought to learn from data to reduce conservatism. Disturbance observer-based approaches were proposed in \cite{alan2023DisturbanceObserverACBF,das2025RobustCBFUncertaintyEstimation,isaly2024AdaptiveSafetyRISE,sun2024DOBSCC} which treat the total effect of the parametric uncertainties on the dynamics as a state-dependent disturbance. However, robust methods can be very conservative depending on the size of the uncertainty, and data- and disturbance observer-based methods require either significant offline training, persistent excitation online, or high gains to avoid excessive conservatism.

Additionally, \cite{taylor2020aCBFs,lopez2021RaCBFs,lopez2023UnmatchedCBFs,wang2024AdaptiveCBFs} have explored solutions which take an online, adaptive approach to safety. In \cite{taylor2020aCBFs}, adaptive Control Barrier Functions (aCBFs) are proposed based on Control Lyapunov functions. Robust Adaptive Control Barrier Functions (RaCBFs) are proposed in \cite{lopez2021RaCBFs,lopez2023UnmatchedCBFs}, where the aCBF approach is altered to remove jitter and enable online reduction in conservatism through data-driven methods. The results of \cite{wang2024AdaptiveCBFs} extend aCBFs to include input certainties. The advantage of these methods over others mentioned in \cite{gurriet2018,xu2015CBFRobustness,kolathaya2019ISSf,alan2023PBF,ohnishi2019BarrierCertifiedAdaptiveRL,cheng2019SafeRL,fisac2019GeneralSafetyFramework,fan2020BayesianLearningAdaptive,pmlr-v120-taylor20a,alan2023DisturbanceObserverACBF,das2025RobustCBFUncertaintyEstimation,isaly2024AdaptiveSafetyRISE,sun2024DOBSCC} is that they can be less conservative, require no offline training, and avoid the need for high gains as in disturbance observers. Conservatism still remains which scales with the size of the uncertainty, however, which can be removed only with persistent excitation.
% Such a conservatism is absent in our two approaches, with no requirement on persistent excitation of the reference input.

The adaptive controllers we propose in this paper avoid conservatism, do not require availability of data or offline training time, avoid any gain requirements, and allow both the control and safety objectives to be met. These controllers build on our past work in \cite{autenrieb2023EBR,solanocastellanos2025SafeFormationControl}, both of which constructed a CBF-based calibration of a reference input with constraints stemming from a reference model that is linear and time-invariant. This reference input introduced an error-based relaxation term that had the potential for ensuring safety during the adaptive transient period. With this calibrated reference input, stability and safety were guaranteed in \cite{autenrieb2023EBR,solanocastellanos2025SafeFormationControl} assuming that the trajectories were within the safe set while the transients were present. In this paper, we introduce two new approaches that utilize two different error-based components for calibration of the reference input as well as the constraints. These approaches remove the assumptions in our earlier work, guarantee stability of all trajectories that start in the safe set, and ensure that the safety objective and control objective are met with near-zero persistent conservatism. Both proposed approaches are non-conservative without requiring either finite or persistent excitation, but are able to take advantage of excitation if it is present so that the conservatism decays faster online.

The contributions of this paper are as follows. Two approaches for ensuring simultaneous realization of safety and control objectives for linear time-invariant plants with parametric uncertainties are proposed whose states are accessible, one based on an Error-Based Safety Buffer (EBSB), and another based on an Error-Based Safety Filter (EBSF), with the latter applicable when input uncertainties are present as well. In all cases, stability and safety are guaranteed with near-zero persistent conservatism in the safety constraints and without any need for high gains or persistent excitation. Furthermore, our designs accommodate CBFs with relative degree $r \geq 1$.

% \pfcomment{Finally, it is worth noting a couple of connections to other recent results in the CBF literature. Reduced-order models have been explored in \cite{molnar2022ModelFreeSafety,molnar2023SafetyBoundedInputsROMs,cohen2024ROMs,cohen2025ROMs} for designing CBF-based controllers without needing the full dynamics. Our EBSB approach, which uses the reference model dynamics for control design with an error-based buffer, is related. However, the presence of parametric uncertainties necessitates consideration of the full plant dynamics in the design of the safety buffer. Additionally, for LTI dynamical systems with box constraints on multiple outputs, \cite{lavretsky2025LinearCBFs} proposed a method of finding an explicit safe control solution and analyzing its stability. Extension of our work to multiple simultaneous state constraints is a matter for future work, but this approach to explicitly solving the KKT conditions would be a good starting point. However, \cite{lavretsky2025LinearCBFs} exploited the fact that their overall closed-loop dynamics were linear to analyze stability, whereas the presence of parametric uncertainties in our setting necessitates a nonlinear controller even for linear dynamics.}

In Section \ref{sec:preliminaries_problem_statement}, we provide preliminaries and two problem statements. Section \ref{sec:no_input_uncertainty} presents EBSB, which addresses Problem \ref{prob:unforced_dynamics}, and Section \ref{sec:input_uncertainty} presents EBSF, which addresses Problem \ref{prob:input_matrix}.
% Section \ref{sec:smid} discusses how to augment both EBSB and EBSF to take advantage of excitation if it is present, and
Section \ref{sec:simulations} demonstrates the efficacy of EBSB and EBSF in simulation.
\if \arxivversion 1
    Proofs of all lemmas, propositions, and theorems can be found in the Appendix, as well as additional supporting material.
\else
    Proofs of all lemmas, propositions, and theorems can be found in the Appendix, and additional supporting material can be found in the arXiv version of our paper \cite{fisher2025EBSBarXiv}.
\fi
\section{Preliminaries and Statement of the Problem} \label{sec:preliminaries_problem_statement}
% \begin{itemize}
%     \item Class $\calK_\infty$ functions
%     \item CBFs
%     \item High-order CBFs
%     \item Notation:
%     \begin{itemize}
%         \item $\otimes =$ Kronecker product
%         \item $\rmvec(A) = \left[\begin{smallmatrix} a_1 \\ a_n \end{smallmatrix}\right]$ for matrix $A$ with columns $a_1, \dots, a_n$
%         \item $[m] = \{1, 2, \dots, m\}$ for integer $m \in \bbZ_+$
%         \item $\diag(\lambda) = \left[\begin{smallmatrix} \lambda_1 &  & 0 \\  & \ddots &  \\ 0 &  & \lambda_m \end{smallmatrix}\right]$ for vector $\lambda \in \bbR^m$
%     \end{itemize}
% \end{itemize}

\subsection{Definitions and Notation}

A few definitions are proposed in this section,
% pertaining to safety in dynamical systems of the form
pertaining to dynamic systems of the form
\begin{equation} \label{eqn:prelim_plant}
    \dot{\vec{x}} = \vec{f}(\vec{x}) + G(\vec{x})\vec{u}
\end{equation}
where $\vec{x} \in \bbR^n$, $\vec{u} \in \bbR^m$, and $\vec{f}(\vec{x})$ and $G(\vec{x})$ are locally Lipschitz.
% , and $\vec{u}(t)$ is subject to the input constraint
% \begin{equation} \label{eqn:input_constraint}
%     \vec{u}(t) \in \calU \subset \bbR^m\ \forall t \geq 0.
% \end{equation}
Consider a closed set $S \subset \bbR^n$ described by a smooth function $h(\vec{x})$ as follows:
% \begin{subequations}
%     \begin{align}
%         S &= \{\vec{x} \in \bbR^n : h(\vec{x}) \geq 0\}, \label{eqn:S} \\
%         \partial S &= \{\vec{x} \in \bbR^n : h(\vec{x}) = 0\}, \label{eqn:S_boundary} \\
%         \mathrm{int}(S) &= \{\vec{x} \in \bbR^n : h(\vec{x}) > 0\} \label{eqn:S_interior}
%     \end{align}
% \end{subequations}
\begin{subequations}
    \begin{align}
        S &= \{\vec{x} \in \bbR^n : h(\vec{x}) \geq 0\}, \label{eqn:safe_set_1} \\
        \partial S &= \{\vec{x} \in \bbR^n : h(\vec{x}) = 0\}, \label{eqn:safe_set_2} \\
        \mathrm{int}(S) &= \{\vec{x} \in \bbR^n : h(\vec{x}) > 0\} \label{eqn:safe_set_3}
    \end{align}
\end{subequations}
where $\partial S$ and $\mathrm{int}(S)$ denote the boundary and interior of $S$ respectively.

The following definitions from existing literature will be used in this paper:
\begin{definition}[$\calK_{\infty,e}$ \cite{ames2019CBF}]
    A continuous function $\alpha : \bbR \to \bbR$ is an \textit{extended class-$\calK_\infty$} function (denoted $\alpha \in \calK_{\infty,e}$) if it is strictly increasing, $\alpha(0) = 0$, $\lim_{z \to \infty} \alpha(z) = \infty$, and $\lim_{z \to -\infty} \alpha(z) = -\infty$.
\end{definition}
\begin{definition}[Control Barrier Function (CBF) \cite{ames2019CBF}]
    Consider a smooth function $h : \bbR^n \to \bbR$ describing a set $S$ as in \eqref{eqn:safe_set_1}-\eqref{eqn:safe_set_3}. Then, $h$ is a \textit{control barrier function} for the dynamics in \eqref{eqn:prelim_plant} on $S$ if there exists an $\alpha \in \calK_{\infty,e}$ such that, for every $\vec{x} \in S$, there exists a $\vec{u} \in \bbR^m$ satisfying
    \begin{equation} \label{eqn:cbf_def}
        \calL_{\vec{f}}h(\vec{x}) + \calL_Gh(\vec{x})\vec{u} \geq -\alpha(h(\vec{x})),
    \end{equation}
    where $\calL_{\vec{f}}h$ and $\calL_Gh$ represent the Lie derivatives of $h$ with respect to $\vec{f}$ and $G$ respectively.
\end{definition}
\begin{definition}[Forward-Invariance \cite{ames2019CBF}]
    A control policy $\vec{u} = \vec{k}(\vec{x}, t)$ renders the dynamics in \eqref{eqn:prelim_plant} \textit{forward-invariant in $S$} if, for any $\vec{x}(0) \in S$, the policy results in $\vec{x}(t) \in S\ \forall t \geq 0$.
\end{definition}
\begin{definition}[Safety \cite{ames2019CBF}] \label{def:safety}
    A control policy $\vec{u} = \vec{k}(\vec{x}, t)$ renders the dynamics in \eqref{eqn:prelim_plant} \textit{safe with respect to $S$} if the policy renders the dynamics forward-invariant in $S$.
\end{definition}
\begin{definition}[Relative Degree \cite{xiao2019HOCBF}] \label{def:relative_degree}
    Consider a function $h : \bbR^n \to \bbR$ describing a set $S$ as in \eqref{eqn:safe_set_1}-\eqref{eqn:safe_set_3}. Then, $h$ has \textit{relative degree $r \geq 1$} with respect to the dynamics in \eqref{eqn:prelim_plant} if $\calL_Gh(\vec{x}) = \calL_G\calL_f^1h(\vec{x}) = \cdots = \calL_G\calL_f^{r-2}h(\vec{x}) = 0\ \forall \vec{x} \in \bbR^n$, and $\calL_G\calL_f^{r-1}h(\vec{x}) \neq 0$ for all $\vec{x} \in \bbR^n$ except on a measure-zero subset of $\bbR^n$.
\end{definition}

In addition, the following definition is useful:
\begin{definition}[Tracking] \label{def:tracking}
    Given two time-varying signals $\vec{y}(t), \vec{z}(t) \in \bbR^d$, $\vec{y}$ is said to \textit{asymptotically track} $\vec{z}$ with error $M$ if there exists a finite $M \geq 0$ such that $\limsup_{t \to \infty} \|\vec{y}(t) - \vec{z}(t)\| = M$.
\end{definition}

In addition to standard notations, the $d \times d$ identity matrix is denoted as $I_d$, and a $p \times q$ zero matrix is denoted as $0_{p \times q}$. For a vector $\vec{a} \in \bbR^d$, $\diag(\vec{a}) \in \bbR^{d \times d}$ is the diagonal matrix with the elements of $\vec{a}$ on the diagonal.
% Finally, we use the shorthand $a(t) \to b(t)$ to mean that $\lim_{t \to \infty} |a(t) - b(t)| = 0$.
Finally, we denote the orthogonal projection of a vector $\vec{a} \in \bbR^n$ to a set $\calC \subset \bbR^n$ as $\proj_{\calC}[\vec{a}] = \argminbelow_{\vec{b} \in \calC} \|\vec{b} - \vec{a}\|$.

\subsection{Tangent Cones}

In this section, we briefly introduce tangent cones, will be useful in our adaptive control design later on. The following is a property of closed, convex sets:
\begin{lemma} \label{lem:C_finitely_many_g}
    Any closed, convex set $\calC \subset \bbR^n$ with nonzero $n$-dimensional volume can be expressed as $\calC = \{\vec{v} \in \bbR^n : \min\{g_1(\vec{v}), g_2(\vec{v}), \dots\} \geq 0\}$, where each $g_k : \bbR^n \to \bbR$ is continuously differentiable with $\nabla g_k(\vec{v}) \neq 0$ wherever $g_k(\vec{v}) = 0$.
\end{lemma}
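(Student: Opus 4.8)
The plan is to realize $\calC$ as a countable intersection of closed half-spaces. This suffices, because a closed half-space $\{\vec{v}\in\bbR^n:\langle\vec{a},\vec{v}\rangle\le c\}$ with $\vec{a}\neq 0$ equals $\{\vec{v}:g(\vec{v})\ge 0\}$ for the affine map $g(\vec{v})=c-\langle\vec{a},\vec{v}\rangle$, which is $C^\infty$ with $\nabla g\equiv-\vec{a}\neq 0$ everywhere, in particular on $\{g=0\}$; and an intersection $\bigcap_{k}\{g_k\ge 0\}$ is exactly $\{\vec{v}:\inf_k g_k(\vec{v})\ge 0\}$, which is the form required (reading $\min$ in the statement as the infimum over the countable family).

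First I would carry out the separation. Fix $\vec{p}\in\bbR^n\setminus\calC$. Since $\calC$ is closed and convex, $\vec{q}:=\proj_{\calC}[\vec{p}]$ is well defined, $\vec{a}_{\vec{p}}:=\vec{p}-\vec{q}\neq 0$, and the variational characterization of the projection gives $\langle\vec{a}_{\vec{p}},\vec{v}-\vec{q}\rangle\le 0$ for all $\vec{v}\in\calC$, whereas $\langle\vec{a}_{\vec{p}},\vec{p}-\vec{q}\rangle=\|\vec{a}_{\vec{p}}\|^2>0$. Hence, with $c_{\vec{p}}:=\langle\vec{a}_{\vec{p}},\vec{q}\rangle$, the open half-space $U_{\vec{p}}:=\{\vec{v}:\langle\vec{a}_{\vec{p}},\vec{v}\rangle>c_{\vec{p}}\}$ contains $\vec{p}$ and is disjoint from $\calC$, so its closed complement $\{\vec{v}:\langle\vec{a}_{\vec{p}},\vec{v}\rangle\le c_{\vec{p}}\}$ contains $\calC$.

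Next I would pass to a countable subfamily. The collection $\{U_{\vec{p}}:\vec{p}\in\bbR^n\setminus\calC\}$ is an open cover of the open set $\bbR^n\setminus\calC$; since $\bbR^n$ is second countable, this subspace is Lindel\"of, so there is a countable subcover $\{U_{\vec{p}_k}\}_{k\in\bbN}$. Put $\vec{a}_k:=\vec{a}_{\vec{p}_k}$, $c_k:=c_{\vec{p}_k}$, and $g_k(\vec{v}):=c_k-\langle\vec{a}_k,\vec{v}\rangle$. Then $\calC\subseteq\{g_k\ge 0\}$ for every $k$ (as $U_{\vec{p}_k}\cap\calC=\emptyset$), while every $\vec{v}\notin\calC$ lies in some $U_{\vec{p}_k}$, i.e.\ $g_k(\vec{v})<0$; therefore $\calC=\bigcap_{k\in\bbN}\{\vec{v}:g_k(\vec{v})\ge 0\}=\{\vec{v}:\inf_k g_k(\vec{v})\ge 0\}$, and each $g_k$ is affine with $\nabla g_k\equiv-\vec{a}_k\neq 0$. (In the degenerate case $\calC=\bbR^n$ one simply takes the single constant $g_1\equiv 1$, whose zero set is empty, so the gradient condition is vacuous.)

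The only genuinely non-routine step is the reduction from the a priori uncountable family of supporting half-spaces to a countable one, and this is precisely what second countability (equivalently, the Lindel\"of property) of $\bbR^n$ delivers; the separating half-spaces themselves come from the standard nearest-point argument, and the smoothness and nonvanishing-gradient conditions are automatic for affine functions. Note that the hypothesis that $\calC$ has nonzero $n$-dimensional volume is not actually used in this argument — the construction applies to every closed convex subset of $\bbR^n$ — but it is presumably convenient for how the lemma is invoked later.
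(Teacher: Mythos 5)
Your proof is correct for the lemma as literally stated, but it takes a genuinely different route from the paper's. The paper argues that the boundary $\partial\calC$ of a convex body must be piecewise smooth and then assigns one $g_k$ to each smooth piece of $\partial\calC$; you instead invoke the classical fact that a closed convex set is the intersection of the closed half-spaces containing it (via the nearest-point projection and the obtuse-angle inequality) and use second countability / the Lindel\"of property to cut this down to a countable subfamily of affine functions. Your argument is the more rigorous of the two: the paper's claim that continuity plus convexity plus nonzero volume forces a piecewise-smooth boundary is not true in general (e.g.\ the epigraph in $\bbR^2$ of $f(x)=\sum_k 2^{-k}|x-q_k|$, with $q_k$ an enumeration of the rationals in $[0,1]$, is closed and convex with nonempty interior but has a dense set of corners, so there is no decomposition into smooth pieces), whereas the half-space representation is unconditional --- which is also why you correctly observe that the nonzero-volume hypothesis is never needed for the statement itself.

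The one thing to flag is that the lemma is not an end in itself: Definition 3 builds the tangent cone $T_{\partial\calC}(\vec{v})$ out of the gradients of exactly those $g_k$ with $g_k(\vec{v})=0$, and Lemma 2 together with the projection-based adaptive laws relies on that cone being a proper tangent cone at every boundary point. With your Lindel\"of family, each affine $g_k$ vanishes on $\partial\calC$ only where its supporting hyperplane actually touches $\calC$ --- for a ball this is a single point per $k$ --- so at all but countably many boundary points no constraint is active and $T_{\partial\calC}(\vec{v})=\bbR^n$, which defeats the purpose of the projection. So while your construction satisfies the letter of Lemma 1, the downstream definitions implicitly need a family in which at least one $g_k$ is active at every point of $\partial\calC$ (this is what the paper's ``one function per smooth boundary piece'' picture is meant to deliver, and it holds for free when $\Theta$ and $L$ are polytopes or smooth bodies, as in all of the paper's instantiations). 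That is a deficiency in how the lemma is stated rather than in your proof, but it is worth being aware of.
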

% \begin{proof}
%     See Appendix \ref{app:C_finitely_many_g}.
% \end{proof}

We now use Lemma \ref{lem:C_finitely_many_g} to define the tangent cone of a closed, convex set $\calC$ as follows:
\begin{definition}[Tangent Cone] \label{def:tangent_cone}
    For a closed, convex set $\calC \subset \bbR^n$ with nonzero $n$-dimensional volume, the \textit{tangent cone} of $\calC$ at a vector $\vec{v} \in \bbR^n$ is given by
    \begin{equation}
        T_\calC(\vec{v}) = \begin{cases} \bbR^n, & \vec{v} \in \mathrm{int}(\calC) \\ T_{\partial\calC}(\vec{v}), & \vec{v} \in \partial\calC \\ \emptyset, & \vec{v} \notin \calC \end{cases}
    \end{equation}
    where
    \begin{equation}
        T_{\partial\calC}(\vec{v}) = \left\{\vec{u} \in \bbR^n : \nabla g_k(\vec{v})^\top\vec{u} \geq 0\ \forall k\ \mathrm{s.t.}\ g_k(\vec{v}) = 0\right\}.
    \end{equation}
    and $g_1, g_2, \dots$ are the functions describing $\calC$ as in Lemma \ref{lem:C_finitely_many_g}.
\end{definition}
\begin{remark}
    Definition \ref{def:tangent_cone} is equivalent to Definition 4 in \cite{autenrieb2023EBR} for a closed, convex set with nonzero $n$-dimensional volume.
\end{remark}

Finally, the following is a useful property of tangent cones:
\begin{lemma} \label{lem:projection_to_tangent_cone}
    Define $T_\calC(\vec{v})$, the tangent cone of a closed, convex set $\calC \subset \bbR^n$ with nonzero $n$-dimensional volume, as in Definition \ref{def:tangent_cone}. Then, for any $\vec{v}, \vec{w} \in \calC$ and any $\vec{z} \in \bbR^n$, we have $(\vec{v} - \vec{w})^\top\proj_{T_\calC(\vec{v})}[\vec{z}] \leq (\vec{v} - \vec{w})^\top\vec{z}$.
\end{lemma}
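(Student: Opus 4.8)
The plan is to handle the two relevant cases in Definition~\ref{def:tangent_cone} separately; the third case ($\vec v\notin\calC$) never occurs here since $\vec v\in\calC$ by hypothesis. If $\vec v\in\mathrm{int}(\calC)$ then $T_\calC(\vec v)=\bbR^n$, so $\proj_{T_\calC(\vec v)}[\vec z]=\vec z$ and the inequality holds with equality. The substance lies in the case $\vec v\in\partial\calC$, where $T_\calC(\vec v)=T_{\partial\calC}(\vec v)$ is the intersection of the homogeneous halfspaces $\{\vec u:\nabla g_k(\vec v)^\top\vec u\geq 0\}$ over the active indices $k$ (those with $g_k(\vec v)=0$), hence a nonempty closed convex cone; in particular $\vec p:=\proj_{T_\calC(\vec v)}[\vec z]$ is well-defined.

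The first key step is to show that $\vec w-\vec v\in T_\calC(\vec v)$ whenever $\vec v,\vec w\in\calC$. I would fix an active index $k$ and set $\phi(t):=g_k\bigl(\vec v+t(\vec w-\vec v)\bigr)$ for $t\in[0,1]$. Convexity of $\calC$ gives $\vec v+t(\vec w-\vec v)\in\calC$ and hence $\phi(t)\geq 0$, while $\phi(0)=g_k(\vec v)=0$; therefore the right derivative $\phi'(0^+)$ is nonnegative, and since $g_k$ is continuously differentiable the chain rule identifies it as $\nabla g_k(\vec v)^\top(\vec w-\vec v)$. As this holds for every active $k$, $\vec w-\vec v$ satisfies all the defining inequalities of $T_\calC(\vec v)$.

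The second step invokes the standard variational characterization of the projection onto a closed convex set: $(\vec z-\vec p)^\top(\vec q-\vec p)\leq 0$ for every $\vec q\in T_\calC(\vec v)$. Since $T_\calC(\vec v)$ is a convex cone and both $\vec p$ and $\vec w-\vec v$ lie in it, $\vec q_\lambda:=\vec p+\lambda(\vec w-\vec v)\in T_\calC(\vec v)$ for all $\lambda\geq 0$; substituting $\vec q=\vec q_\lambda$ gives $\lambda(\vec z-\vec p)^\top(\vec w-\vec v)\leq 0$ for all $\lambda>0$, hence $(\vec z-\vec p)^\top(\vec w-\vec v)\leq 0$. Rearranging, $(\vec v-\vec w)^\top\vec z-(\vec v-\vec w)^\top\vec p=-(\vec w-\vec v)^\top(\vec z-\vec p)\geq 0$, which is exactly $(\vec v-\vec w)^\top\proj_{T_\calC(\vec v)}[\vec z]\leq(\vec v-\vec w)^\top\vec z$.

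I expect the only real obstacle to be the first step — arguing that $\vec w-\vec v$ lies in the tangent cone as it is defined through the $g_k$'s. This is the one place where convexity of $\calC$ is genuinely used (the rest of the argument needs only that $T_\calC(\vec v)$ is a convex cone), and it requires care that the one-sided-derivative argument is legitimate at the endpoint $t=0$ of the segment and that differentiability of each $g_k$ is available there. Once that is in hand, the remainder is routine manipulation of the projection inequality together with the cone structure.
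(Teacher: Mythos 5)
Your proof is correct, and it reaches the conclusion by a somewhat different mechanism than the paper. The paper also splits into the interior and boundary cases, but in the boundary case it writes the projection explicitly via the KKT conditions of the quadratic program defining $\proj_{T_\calC(\vec{v})}[\vec{z}]$, obtaining $\proj_{T_\calC(\vec{v})}[\vec{z}] = \vec{z} + \sum_{k}\lambda_k\nabla g_k(\vec{v})$ with $\lambda_k \geq 0$ over the active indices, and then concludes directly from $(\vec{w}-\vec{v})^\top\nabla g_k(\vec{v}) \geq 0$ (a fact it asserts from convexity without proof). You instead never compute the projection: you show $\vec{w}-\vec{v} \in T_\calC(\vec{v})$ via the one-sided derivative of $t \mapsto g_k(\vec{v}+t(\vec{w}-\vec{v}))$ at $t=0$ --- which is precisely a proof of the inequality the paper takes for granted --- and then combine the variational characterization $(\vec{z}-\vec{p})^\top(\vec{q}-\vec{p}) \leq 0$ with the ray $\vec{p}+\lambda(\vec{w}-\vec{v})$ in the cone. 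Your route buys two things: it fills in the convexity step the paper glosses over, and it is representation-independent --- it works for any closed convex cone containing $\vec{w}-\vec{v}$ without invoking Lagrange multipliers or the half-space description beyond establishing membership. The paper's route buys an explicit formula for the projection, which is occasionally useful elsewhere but is not needed for this lemma. Both arguments are sound; note only that in your first step the right derivative exists unconditionally because $g_k$ is continuously differentiable on all of $\bbR^n$ by Lemma~\ref{lem:C_finitely_many_g}, so the endpoint issue you flag is not actually an obstacle.
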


\subsection{Statement of the Problem}

The class of dynamic systems we consider in this paper is of the form
\begin{equation} \label{eqn:general_plant}
    \dot{\vec{x}} = A\vec{x} + B(\Lambda\vec{u} - F(\vec{x})\theta_*)
\end{equation}
where the state $\vec{x} \in \bbR^n$ is available for measurement, $\vec{u} \in \bbR^m$ is the input, $A \in \bbR^{n \times n}$, $B \in \bbR^{n \times m}$, and $F : \bbR^n \to \bbR^{m \times p}$ are known, $F$ is bounded for bounded $\vec{x}$, and $(A, B)$ is controllable. The system in \eqref{eqn:general_plant} also includes parametric uncertainties $\theta_*$ and $\Lambda$, where $\Lambda = \diag(\lambda_*)$ and $\theta_* \in \bbR^p$ and $\lambda_* \in \bbR^m$ are unknown.
In addition, a closed set $S$ as in \eqref{eqn:safe_set_1}-\eqref{eqn:safe_set_3} and smooth function $h : \bbR^n \to \bbR$ with relative degree $r \geq 1$ with respect to \eqref{eqn:general_plant} are assumed to be given. The problem addressed in this paper is the design of $\vec{u}$ in \eqref{eqn:general_plant} so that a safety objective and control objective are met, which are defined as follows: the \textit{safety objective} is to ensure that $\vec{x}(t) \in S\ \forall t \geq 0$. For the control objective, we define a desired trajectory $\vec{x}_*(t) \in \bbR^n$ as the solution to the dynamical system
\begin{equation} \label{eqn:reference_trajectory}
    \dot{\vec{x}}_* = A_m\vec{x}_* + B\vec{r}_*
\end{equation}
for any bounded reference input $\vec{r}_*(t) \in \bbR^m$, where $A_m$ is Hurwitz and $A_m = A + BK$ for some $K \in \bbR^{m \times n}$.
% Standard adaptive control and CBF governor designs alone will not suffice, as we must enforce the safety objective even during the transient period of adaptation when errors are unpredictable. In the interest of tracking $\vec{x}_*$ as closely as possible once adaptation has occurred, given any $\epsilon > 0$, we define the \textit{control objective} as designing an adaptive tracking controller for $\vec{x}_*$, governed by a CBF governor with MAS (see Definition \ref{def:mas}) $M \leq \epsilon$.
Then, the \textit{control objective} is to ensure that $\|\vec{x}(t) - \vec{x}_*(t)\|$ is bounded and that $\vec{x}$ asymptotically tracks $\vec{x}_*$ with an error $M$.
The overall problem is therefore the design of $\vec{u}$ in \eqref{eqn:general_plant} so as to meet the safety and control objectives simultaneously.

Our particular focus is on the following two problems that will be addressed in Sections \ref{sec:no_input_uncertainty} and \ref{sec:input_uncertainty} respectively:
\begin{problem}[Uncertain Unforced Dynamics] \label{prob:unforced_dynamics}
    Design an adaptive controller that meets the safety and control objectives for the plant in \eqref{eqn:general_plant} with $\Lambda$ known and $\theta_*$ unknown.
\end{problem}
\begin{problem}[Uncertain Input Matrix] \label{prob:input_matrix}
    Design an adaptive controller that meets the safety and control objectives for the plant in \eqref{eqn:general_plant} with both $\Lambda$ and $\theta_*$ unknown.
\end{problem}
\section{Safe Adaptive Control for Uncertain Unforced Dynamics} \label{sec:no_input_uncertainty}
In this section, we address Problem \ref{prob:unforced_dynamics}. Without loss of generality, we assume that $\Lambda = I_m$, allowing \eqref{eqn:general_plant} to be written as
\begin{equation} \label{eqn:plant_no_input_uncertainty}
    \dot{\vec{x}} = A\vec{x} + B(\vec{u} - F(\vec{x})\theta_*).
\end{equation}
In order to meet both the control and safety objectives, we introduce a reference model of the form
\begin{equation} \label{eqn:reference_model}
    \dot{\vec{x}}_m = A_m\vec{x}_m + B\vec{r}_s
\end{equation}
where $A_m$ is defined in \eqref{eqn:reference_trajectory} and $\vec{r}_s$ is a bounded reference signal to be determined. In Section \ref{subsec:EBSB_adaptive_control}, we design $\vec{u}$ as an adaptive controller so that $\vec{x}$ asymptotically tracks $\vec{x}_m$. In Section \ref{subsec:EBSB}, we show that $\vec{r}_s$ can be designed using $h$ such that it renders both the reference model in \eqref{eqn:reference_model} and the overall adaptive system safe with respect to $S$, and such that $\vec{x}_m$ asymptotically tracks $\vec{x}_*$.

We introduce the following assumption on the nature of the parametric uncertainty:
\begin{assumptionsec} \label{asn:theta_in_set}
    $\theta_* \in \Theta$ for a known compact, convex set $\Theta \subset \bbR^p$ with nonzero $p$-dimensional volume.
\end{assumptionsec}

\subsection{Adaptive Control Design} \label{subsec:EBSB_adaptive_control}

Recalling that $A_m = A + BK$, the adaptive controller is chosen as
\begin{equation} \label{eqn:input}
    \vec{u} = K\vec{x} + \vec{r}_s + F(\vec{x})\hat{\theta}.
\end{equation}
Combining \eqref{eqn:plant_no_input_uncertainty}-\eqref{eqn:input} and defining $\vec{e}_x := \vec{x} - \vec{x}_m$ and $\tilde{\theta} := \hat{\theta} - \theta_*$, we obtain a standard error model \cite{Narendra05}:
\begin{equation} \label{eqn:error_model}
    \dot{\vec{e}}_x = A_m\vec{e}_x + BF(\vec{x})\tilde{\theta}.
\end{equation}
Finally, the adaptive law for updating $\hat{\theta}$ is given by
\begin{equation} \label{eqn:theta_adaptive_law}
    % \dot{\hat{\theta}} = \mathrm{proj}_{T_\Theta(\hat{\theta})}[\vartheta], \quad \vartheta = -\gamma F(\vec{x})^\top B^\top P\vec{e}_x
    \dot{\hat{\theta}} = \mathrm{proj}_{T_\Theta(\hat{\theta})}[-\gamma F(\vec{x})^\top B^\top P\vec{e}_x]
\end{equation}
where
$T_\Theta(\hat{\theta})$ is the tangent cone of $\Theta$ at $\hat{\theta}$, $\gamma > 0$ is any adaptive gain, and $P$ is the symmetric positive-definite solution to the Lyapunov equation
\begin{equation} \label{eqn:lyapunov_eqn}
    A_m^\top P + PA_m = -Q
\end{equation}
for any symmetric positive-definite matrix $Q$.

We establish key properties of the adaptive control design in the following theorem, which follows from standard arguments in adaptive control \cite{Narendra05}:
\begin{theorem} \label{thm:lyapunov_no_input_uncertainty}
    The closed-loop adaptive system consisting of \eqref{eqn:plant_no_input_uncertainty}-\eqref{eqn:theta_adaptive_law} under Assumption \ref{asn:theta_in_set} with $\hat{\theta}(0) \in \Theta$ has the following properties:
    \begin{enumerate}
        \item[(a)] $\hat{\theta}(t) \in \Theta\ \forall t \geq 0$, and
        \item[(b)] $\|\vec{e}_x(t)\| \leq E_1\ \forall t \geq 0$,
    \end{enumerate}
    where $E_1$ is any positive constant satisfying $E_1^2 \geq \frac{1}{\lambda_{min}(P)}(\vec{e}_x(0)^\top P\vec{e}_x(0) + \frac{1}{\gamma}\sup_{\theta \in \Theta}\|\hat{\theta}(0) - \theta\|^2)$. Additionally, if $\vec{r}_s(t)$ is bounded, then
    \begin{enumerate}
        \item[(c)] $\vec{x}(t)$ and $\vec{x}_m(t)$ are bounded, and
        \item[(d)] $\lim_{t \to \infty} \|\vec{e}_x(t)\| = 0$.
    \end{enumerate}
\end{theorem}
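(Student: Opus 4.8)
The plan is to run the classical Lyapunov argument for projection-based adaptive control, built around the candidate
\[
V = \vec{e}_x^\top P\vec{e}_x + \tfrac{1}{\gamma}\tilde{\theta}^\top\tilde{\theta}.
\]
Differentiating $V$ along the error model \eqref{eqn:error_model}, noting $\dot{\tilde{\theta}}=\dot{\hat{\theta}}$, and substituting the Lyapunov equation \eqref{eqn:lyapunov_eqn} gives
\[
\dot{V} = -\vec{e}_x^\top Q\vec{e}_x + 2\tilde{\theta}^\top F(\vec{x})^\top B^\top P\vec{e}_x + \tfrac{2}{\gamma}\tilde{\theta}^\top\proj_{T_\Theta(\hat{\theta})}\!\big[-\gamma F(\vec{x})^\top B^\top P\vec{e}_x\big].
\]
The key step is to cancel the indefinite cross term: once part (a) is in hand, so that $\hat{\theta}(t)\in\Theta$, and since $\theta_*\in\Theta$ by Assumption \ref{asn:theta_in_set}, Lemma \ref{lem:projection_to_tangent_cone} applied with $\vec{v}=\hat{\theta}$, $\vec{w}=\theta_*$, $\vec{z}=-\gamma F(\vec{x})^\top B^\top P\vec{e}_x$ yields $\tilde{\theta}^\top\proj_{T_\Theta(\hat{\theta})}[\vec{z}]\le -\gamma\tilde{\theta}^\top F(\vec{x})^\top B^\top P\vec{e}_x$, so the last two terms sum to at most $0$ and $\dot{V}\le -\vec{e}_x^\top Q\vec{e}_x \le -\lambda_{min}(Q)\|\vec{e}_x\|^2\le 0$.

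For part (a), I would establish forward invariance of the compact convex set $\Theta$ under the projected flow $\dot{\hat{\theta}}=\proj_{T_\Theta(\hat{\theta})}[\cdot]$: by construction $\dot{\hat{\theta}}(t)\in T_\Theta(\hat{\theta}(t))$ for all $t$, so the update direction is sub-tangential to $\Theta$ at every boundary point, and a Nagumo-type argument on $t\mapsto\mathrm{dist}(\hat{\theta}(t),\Theta)$ together with $\hat{\theta}(0)\in\Theta$ gives $\hat{\theta}(t)\in\Theta$ for all $t\ge 0$; the possible discontinuity of the right-hand side is handled by the standard Carathéodory/limiting argument used for the ordinary parameter projection \cite{Narendra05}. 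I expect this to be the main technical obstacle, though it is entirely analogous to the classical treatment. For part (b), $\dot{V}\le 0$ gives $\lambda_{min}(P)\|\vec{e}_x(t)\|^2\le V(t)\le V(0)\le \lambda_{max}(P)\|\vec{e}_x(0)\|^2+\tfrac{1}{\gamma}\|\hat{\theta}(0)-\theta_*\|^2\le \lambda_{max}(P)\|\vec{e}_x(0)\|^2+\tfrac{1}{\gamma}\sup_{\theta\in\Theta}\|\hat{\theta}(0)-\theta\|^2$, which (after dividing by $\lambda_{min}(P)$) bounds $\|\vec{e}_x(t)\|$ by the stated $E_1$.

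For part (c), boundedness of $\vec{r}_s$ and Hurwitzness of $A_m$ make the linear reference model \eqref{eqn:reference_model} BIBO stable, so $\vec{x}_m$ is bounded, and then $\vec{x}=\vec{x}_m+\vec{e}_x$ is bounded by part (b). For part (d), integrating $\dot{V}\le -\lambda_{min}(Q)\|\vec{e}_x\|^2$ over $[0,\infty)$ shows $\vec{e}_x\in L^2$; since $\vec{e}_x$ and $\tilde{\theta}$ are bounded by parts (a)-(b), $\vec{x}$ is bounded by part (c), and $F$ is bounded on bounded sets, the error dynamics \eqref{eqn:error_model} give $\dot{\vec{e}}_x\in L^\infty$. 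Hence $\tfrac{d}{dt}\|\vec{e}_x\|^2=2\vec{e}_x^\top\dot{\vec{e}}_x$ is bounded and $\|\vec{e}_x\|^2\in L^1$, so Barbalat's lemma yields $\|\vec{e}_x(t)\|\to 0$ as $t\to\infty$.
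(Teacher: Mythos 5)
Your proposal is correct and follows essentially the same route as the paper: the same Lyapunov candidate $V = \vec{e}_x^\top P\vec{e}_x + \|\tilde{\theta}\|^2/\gamma$, the same use of Lemma \ref{lem:projection_to_tangent_cone} to dominate the cross term, BIBO stability of the reference model for (c), and Barbalat's lemma for (d); your Nagumo-style invariance argument for (a) is a cosmetic variant of the paper's argument via the functions $g_{\Theta,k}$ from Lemma \ref{lem:C_finitely_many_g}. The only nit is that dividing $V(0)$ by $\lambda_{min}(P)$ places a factor $1/\lambda_{min}(P)$ on the parameter-error term as well, which the stated $E_1$ omits, but that appears to be a typo in the theorem statement rather than a gap in your argument.
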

% \begin{proof}
%     See Appendix \ref{app:lyapunov_no_input_uncertainty}.
% \end{proof}

\subsection{Error-Based Safety Buffer} \label{subsec:EBSB}

We now design $\vec{r}_s$ so that (i) the reference model is safe with a suitable choice of $\vec{r}_s$ (Proposition \ref{prop:reference_safe_no_input_uncertainty}), and (ii) an appropriately designed safety buffer $\Delta_{ebsb}$ ensures that the safety objective and the control objectives are met (Theorem \ref{thm:safety_objective_no_input_uncertainty}).
% The following assumption is useful:
% \begin{assumptionsec} \label{asn:h_Lipschitz}
%     There exist constants $\kappa, c > 0$ such that $\|\frac{\partial h}{\partial\vec{x}}\| \leq \kappa\ \forall \vec{x} \in \calC \subset \bbR^n$, where $\calC$ is the convex hull of $\{\vec{x} \in \bbR^n : h(\vec{x}) \geq -c\}$.
% \end{assumptionsec}

As $h$ has relative degree $r \geq 1$ with respect to \eqref{eqn:plant_no_input_uncertainty}, we will first construct a sequence of high-order CBFs \cite{xiao2019HOCBF}
% , as well as a sequence of matrices that encode the relative degree property of each high-order CBF. These matrices will be used to define additional assumptions, and will be used in our solution to Problem \ref{prob:unforced_dynamics}. The high-order CBF sequence is defined
as follows:
\begin{subequations}
\begin{align}
    h_1(\vec{x}) &= h(\vec{x}) \label{eqn:hocbf_base_case} \\
    h_i(\vec{x}) &= \frac{\partial h_{i-1}}{\partial\vec{x}}\Big|_{\vec{x}}A\vec{x} + \alpha_{i-1}h_{i-1}(\vec{x})\ \forall i \in [2, r] \label{eqn:hocbf_recursion}
\end{align}
\end{subequations}
for any constants $\alpha_1, \dots, \alpha_{r-1} > 0$. Define $S_i \subset \bbR^n$ as the set described by $h_i$ as in \eqref{eqn:safe_set_1}-\eqref{eqn:safe_set_3}, and note that $S_1 = S$.
% Additionally, define a subspace $\bbI_h \subseteq \bbR^n$ as
% \begin{equation} \label{eqn:h_image}
%     \bbI_h = \left\{\vec{v} \in \bbR^n : \exists \vec{x} \in \bbR^n\ \mathrm{where}\ \frac{\partial h}{\partial\vec{x}} \parallel \vec{v}^\top\right\}
% \end{equation}
% and suppose without loss of generality that $\bbI_h$ is spanned by the orthonormal basis $\{\vec{w}_1, \dots, \vec{w}_q\}$ for some $q \in [1, n]$.
% Then, we will make use of the following matrix recursion:
% \begin{subequations}
%     \begin{align}
%         C_h &= \sum_{j=1}^q \vec{w}_j\vec{w}_j^\top \label{eqn:C_h} \\
%         W_0 &= \kappa^2C_h^\top C_h, \label{eqn:W_0} \\
%         W_i &= (\alpha_1 + \alpha_i)W_{i-1} + A_m^\top W_{i-1} + W_{i-1}A_m\ \forall i \in [1, r]. \label{eqn:W_i}
%     \end{align}
% \end{subequations}
% Finally, we establish key properties of the matrices in \eqref{eqn:C_h}-\eqref{eqn:W_i} in the following lemmas:
% \begin{lemma} \label{lem:Lipschitz_W}
%     Assumption \ref{asn:h_Lipschitz} implies that $|h(\vec{x}_1) - h(\vec{x}_2)| \leq \sqrt{(\vec{x}_1 - \vec{x}_2)^\top W_0(\vec{x}_1 - \vec{x}_2)}\ \forall \vec{x}_1, \vec{x}_2 \in \calC$.
% \end{lemma}
% \begin{proof}
%     See Appendix \ref{app:Lipschitz_W}.
% \end{proof}
% \begin{lemma} \label{lem:relative_degree_W}
%     If $r > 1$, then the matrices $W_i$ in \eqref{eqn:W_0}-\eqref{eqn:W_i} satisfy $W_iB = 0\ \forall i < r - 1$.
% \end{lemma}
% \begin{proof}
%     See Appendix \ref{app:relative_degree_W}.
% \end{proof}
Note also that $h_r$ has relative degree 1 with respect to both the plant in \eqref{eqn:plant_no_input_uncertainty} and the reference model in \eqref{eqn:reference_model}. Furthermore, we construct an augmented CBF candidate $\bar{h} : \bbR^n \to \bbR$ and corresponding set $\bar{S} \subset \bbR^n$ described by $\bar{h}$ as in \eqref{eqn:safe_set_1}-\eqref{eqn:safe_set_3} with the following properties:
\begin{enumerate}
    \item[(a)] $\bar{h}$ has relative degree 1 with respect to \eqref{eqn:plant_no_input_uncertainty} and \eqref{eqn:reference_model};
    \item[(b)] $\bar{S} \subseteq S_r$;
    \item[(c)] $\exists$ a bounded set $\bar{\calC} \subset \bbR^n$ such that, if $\vec{x}(0), \vec{x}_m(0) \in \bar{\calC}$ and $\vec{x}(t), \vec{x}_m(t) \in \bar{S}\ \forall t \geq 0$, then $\vec{x}(t), \vec{x}_m(t) \in \bar{\calC}\ \forall t \geq 0$; and
    \item[(d)] $\exists$ constants $\kappa, c > 0$ such that $\|\frac{\partial h}{\partial\vec{x}}\| \leq \kappa\ \forall \vec{x} \in \calC \subset \bbR^n$, where $\calC$ is the convex hull of $\{\vec{x} \in \bbR^n : h(\vec{x}) \geq -c\}$.
\end{enumerate}
\begin{remark}
    Such a function $\bar{h}$ can always be constructed (see Section \ref{subsec:constructing_hbar} for an example).
\end{remark}
Property (d) above leads to the following lemma:
\begin{lemma} \label{lem:Lipschitz_W}
    $|\bar{h}(\vec{x}_1) - \bar{h}(\vec{x}_2)| \leq \kappa\|\vec{x}_1 - \vec{x}_2\|\ \forall \vec{x}_1, \vec{x}_2 \in \calC$.
\end{lemma}

We now propose the following choice of $\vec{r}_s$:
\if \numsides 2
\begin{equation} \label{eqn:governor_no_input_uncertainty}
    \begin{gathered}
        \vec{r}_s = \argmin_{\vec{r} \in \bbR^m} \|\vec{r} - \vec{r}_*\|^2\ \mathrm{s.t.} \\
        \begin{aligned}
            \frac{\partial\bar{h}}{\partial\vec{x}}\Big|_{\vec{x}_m}(A_m\vec{x}_m + B\vec{r}) \geq &-\alpha_r\bar{h}(\vec{x}_m) + \delta \\
            &+ \Delta_{ebsb}(\vec{x}_m, \vec{e}_x, \hat{\theta})
        \end{aligned}
    \end{gathered}
\end{equation}
\else
\begin{equation} \label{eqn:governor_no_input_uncertainty}
    \begin{gathered}
        \vec{r}_s = \argmin_{\vec{r} \in \bbR^m} \|\vec{r} - \vec{r}_*\|^2\ \mathrm{s.t.} \\
        \frac{\partial\bar{h}}{\partial\vec{x}}\Big|_{\vec{x}_m}(A_m\vec{x}_m + B\vec{r}) \geq -\alpha_r\bar{h}(\vec{x}_m) + \delta + \Delta_{ebsb}(\vec{x}_m, \vec{e}_x, \hat{\theta})
    \end{gathered}
\end{equation}
\fi
for any constants $\alpha_r > 0$, $\delta \in (0, \sup_{\vec{x} \in \bar{S}} \alpha_r\bar{h}(\vec{x}))$, and an Error-Based Safety Buffer (EBSB) denoted by $\Delta_{ebsb} : \bbR^n \times \bbR^n \times \bbR^p \to [0, \infty)$
and given by
\if \numsides 2
\begin{equation} \label{eqn:error_based_safety_buffer}
    % \begin{aligned}
    %     &\Delta_{ebsb}(\vec{x}_m, \vec{e}_x, \hat{\theta}) = \\
    %     &\max\left\{\frac{\kappa^2(\vec{e}_x^\top W\vec{e}_x + 2\Psi(\hat{\theta})\|F(\vec{x})^\top B^\top\vec{e}_x\|)}{2\bar{h}(\vec{x}_m)}, 0\right\},
    % \end{aligned}
    \begin{aligned}
        &\Delta_{ebsb}(\vec{x}_m, \vec{e}_x, \hat{\theta}) = \\
        &\max\left\{\frac{\kappa^2(\vec{e}_x^\top W\vec{e}_x + 2\vec{e}_{xF}^\top\hat{\theta} - 2\inf_{\theta \in \Theta}\vec{e}_{xF}^\top\theta)}{2\bar{h}(\vec{x}_m)}, 0\right\},
    \end{aligned}
\end{equation}
\else
\begin{equation} \label{eqn:error_based_safety_buffer}
    % \Delta_{ebsb}(\vec{x}_m, \vec{e}_x, \hat{\theta}) = \max\left\{\frac{\kappa^2(\vec{e}_x^\top W\vec{e}_x + 2\Psi(\hat{\theta})\|F(\vec{x})^\top B^\top\vec{e}_x\|)}{2\bar{h}(\vec{x}_m)}, 0\right\},
    \Delta_{ebsb}(\vec{x}_m, \vec{e}_x, \hat{\theta}) = \max\left\{\frac{\kappa^2(\vec{e}_x^\top W\vec{e}_x + 2\vec{e}_{xF}^\top\hat{\theta} - 2\sup_{\theta \in \Theta}\vec{e}_{xF}^\top\theta)}{2\bar{h}(\vec{x}_m)}, 0\right\},
\end{equation}
\fi
where
\begin{align}
    W &= 2\alpha_rI_n + A_m + A_m^\top, \label{eqn:W} \\
    % \Psi(\hat{\theta}) &= \sup_{\theta \in \Theta} \|\hat{\theta} - \theta\|. \label{eqn:psi}
    \vec{e}_{xF} &= F(\vec{x})^\top B^\top\vec{e}_x. \label{eqn:e_xF}
\end{align}

Proposition \ref{prop:reference_safe_no_input_uncertainty} shows that \eqref{eqn:governor_no_input_uncertainty} renders the reference model in \eqref{eqn:reference_model} safe with respect to $S$ (see Definition \ref{def:safety}), and that $\bar{h}(\vec{x}_m(t)) > 0\ \forall t \geq 0$:
\begin{proposition} \label{prop:reference_safe_no_input_uncertainty}
    Let $h_i(\vec{x}_m(0)) \geq 0\ \forall i \in [1, r]$ and $\bar{h}(\vec{x}_m(0)) \geq \frac{\delta}{\alpha_r}$. Then, the choice of $\vec{r}_s$ in \eqref{eqn:governor_input_uncertainty} ensures that $\bar{h}(\vec{x}_m(t)) \geq \frac{\delta}{\alpha_r}\ \forall t \geq 0$, and that $h(\vec{x}_m(t)) \geq 0\ \forall t \geq 0$.
\end{proposition}
% \begin{proof}
%     See Appendix \ref{app:reference_safe_no_input_uncertainty}.
% \end{proof}

We now present our main result for Problem \ref{prob:unforced_dynamics}. First, in order to show that $\vec{r}_s$ remains bounded, including at locations where the gradient of $h_r$ vanishes, we require the following technical assumption:
% \begin{assumptionsec} \label{asn:S_compact}
%     The intersection $S_1 \cap \cdots \cap S_r \subset \bbR^n$ is bounded.
% \end{assumptionsec}
\begin{assumptionsec} \label{asn:governor_automatic_when_gradient_zero}
    There exists a constant $d > 0$ such that, at every $\vec{x} \in S_1 \cap \cdots \cap S_{r-1} \cap \bar{S}$ where  
    % \begin{equation}
    %     \alpha_r\bar{h}(\vec{x}) + \frac{\partial\bar{h}}{\partial\vec{x}}A_m\vec{x} \geq \delta + \bar{\Delta}_{ebsb}(\vec{x})\;\; \forall \;\;\|\frac{\partial\bar{h}}{\partial\vec{x}}B\| < d\label{ineq:assn32}
    % \end{equation}and
    $\|\frac{\partial\bar{h}}{\partial\vec{x}}B\| < d$, we have
    \begin{equation} \label{eqn:asn_inequality}
        \alpha_r\bar{h}(\vec{x}) + \frac{\partial\bar{h}}{\partial\vec{x}}A_m\vec{x} \geq \delta + \bar{\Delta}_{ebsb}(\vec{x}),
    \end{equation}
    where
    $\bar{\Delta}_{ebsb}(\vec{x}) = \sup_{\theta \in \Theta, \vec{e} \in \bbR^n : \|\vec{e}\| \leq E_1}\Delta_{ebsb}(\vec{x}, \vec{e}, \theta)$. %and $E_1$ is defined in Theorem \ref{thm:lyapunov_no_input_uncertainty}.
\end{assumptionsec}

We now define the following set:
\begin{equation} \label{eqn:initial_condition_set}
    \calC_0 = S_1 \cap \cdots \cap S_{r-1} \cap \bar{S} \cap \bar{\calC},
\end{equation}
where $\bar{\calC}$ is defined in the construction of $\bar{h}$. With the above, our main result for Problem 1 is summarized in Theorem 2:
\begin{theorem} \label{thm:safety_objective_no_input_uncertainty}
    Let $\vec{x}_m(0) \in \calC_0$, $\vec{x}(0) \in \calC_0$, $\bar{h}(\vec{x}_m(0)) \geq \max\{\kappa\|\vec{e}_x(0)\|, \frac{\delta}{\alpha_r}\}$, and $\hat{\theta}(0) \in \Theta$. Then, for the closed-loop adaptive system consisting of \eqref{eqn:plant_no_input_uncertainty}-\eqref{eqn:e_xF} under Assumptions \ref{asn:theta_in_set} and \ref{asn:governor_automatic_when_gradient_zero}, the following results hold:
    \begin{enumerate}
        \item[(a)] $h(\vec{x}(t)) \geq 0\ \forall t \geq 0$,
        \item[(b)] all results of Theorem \ref{thm:lyapunov_no_input_uncertainty} hold,
        \item[(c)] $\lim_{t \to \infty} \Delta_{ebsb}(\vec{x}_m(t), \vec{e}_x(t), \hat{\theta}(t)) = 0$, and
        \item[(d)] the safety and control objectives are met with $M$ independent of the parametric uncertainty.
    \end{enumerate}
\end{theorem}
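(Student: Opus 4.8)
My plan is to first pin the reference model $\vec{x}_m$ inside the safe set using the high-order CBF hierarchy, then run a coupled comparison argument in which $\Delta_{ebsb}$ is shown to dominate exactly the uncertainty-induced terms, and finally transfer safety from $\vec{x}_m$ to $\vec{x}$ via Lemma \ref{lem:Lipschitz_W}. To reach the hypotheses of Theorem \ref{thm:lyapunov_no_input_uncertainty}(c)--(d) I would first establish boundedness of $\vec{r}_s$. Proposition \ref{prop:reference_safe_no_input_uncertainty} gives $h_i(\vec{x}_m(t)) \ge \delta/\prod_{k=i}^r\alpha_k > 0$ for all $i,t$, so $\vec{x}_m(t) \in \mathrm{int}(S_r)$, which is bounded by Assumption \ref{asn:S_compact}; combined with $\|\vec{e}_x(t)\| \le E_1$ and $\hat{\theta}(t) \in \Theta$ from Theorem \ref{thm:lyapunov_no_input_uncertainty}(a)--(b), this makes $\vec{x}$, $F(\vec{x})$, $\Psi(\hat{\theta})$ bounded and $h_1(\vec{x}_m)$ bounded away from $0$, hence $\Delta_{ebsb}$ bounded along the trajectory. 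The governor \eqref{eqn:governor_no_input_uncertainty} projects $\vec{r}_*$ onto the half-space $\{\vec{r} : \vec{a}^\top\vec{r} \ge b\}$ with $\vec{a} = (\frac{\partial h_r}{\partial\vec{x}}|_{\vec{x}_m}B)^\top$; where $\|\vec{a}\| \ge d$ the correction is bounded by boundedness of $b$ and $\vec{a}$ on the compact set $S_r$, and where $\|\vec{a}\| < d$, Assumption \ref{asn:governor_automatic_when_gradient_zero} together with $\Delta_{ebsb} \le \bar{\Delta}_{ebsb}$ forces $b - \vec{a}^\top\vec{r}_* \le -\vec{a}^\top\vec{r}_* \le \|\vec{a}\|\,\|\vec{r}_*\|$, bounding the correction by $\|\vec{r}_*\|$. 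Thus $\vec{r}_s$ is bounded and Theorem \ref{thm:lyapunov_no_input_uncertainty}(c)--(d) apply, giving (b) and $\|\vec{e}_x(t)\| \to 0$.

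For (a) I would set $z_i = h_i(\vec{x}_m)$ and $p_i = \vec{e}_x^\top W_{i-1}\vec{e}_x$ for $i = 1,\dots,r$, with $p_{r+1} = \vec{e}_x^\top W_r\vec{e}_x$. By \eqref{eqn:hocbf_recursion} and the governor constraint, $\dot{z}_i = z_{i+1} - \alpha_i z_i$ for $i < r$ and $\dot{z}_r \ge -\alpha_r z_r + \delta + \Delta_{ebsb}$; by \eqref{eqn:W_i} and \eqref{eqn:error_model}, $\dot{p}_i = p_{i+1} - (\alpha_1 + \alpha_i)p_i + 2\vec{e}_x^\top W_{i-1}BF(\vec{x})\tilde{\theta}$, where by Lemma \ref{lem:relative_degree_W} the last term is absent for $i \le r-1$ and for $i = r$ is bounded by $2\Psi(\hat{\theta})\|F(\vec{x})^\top B^\top W_{r-1}\vec{e}_x\|$ since $\|\tilde{\theta}\| \le \Psi(\hat{\theta})$. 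Dropping the outer $\max$ in the definition of $\Delta_{ebsb}$ and multiplying by $2h_1(\vec{x}_m) > 0$ yields exactly $2z_1\Delta_{ebsb} + 2z_2 z_r \ge p_{r+1} + 2\Psi(\hat{\theta})\|F(\vec{x})^\top B^\top W_{r-1}\vec{e}_x\|$ (the $z_2 z_r$ term dropping for $r = 1$), so $\dot{p}_r \le 2z_1\Delta_{ebsb} + 2z_2 z_r - (\alpha_1 + \alpha_r)p_r$. Now define $m_1 = z_1^2 - p_1$ and $m_i = 2z_1 z_i - p_i$ for $i \ge 2$; a direct calculation telescopes the recursions to $\dot{m}_1 + 2\alpha_1 m_1 = m_2$, $\dot{m}_i + (\alpha_1 + \alpha_i)m_i = m_{i+1} + 2z_2 z_i \ge m_{i+1}$ for $2 \le i \le r-1$, and $\dot{m}_r + (\alpha_1 + \alpha_r)m_r \ge 2z_1\delta > 0$ (for $r = 1$ this last inequality is the whole chain), the positivity coming from Proposition \ref{prop:reference_safe_no_input_uncertainty}. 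Since Assumption \ref{asn:initial_conditions}(a)--(b) is precisely $m_i(0) \ge 0$, a top-down application of the comparison lemma gives $m_r, \dots, m_1 \ge 0$ for all $t$, i.e.\ $h_1(\vec{x}_m(t)) \ge \sqrt{\vec{e}_x(t)^\top W_0\vec{e}_x(t)}$ (using $z_1 > 0$).

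To finish (a), set $\mathcal{T} = \sup\{T \ge 0 : h(\vec{x}(t)) \ge -c/2\ \forall t \in [0,T]\} > 0$ by Assumption \ref{asn:initial_conditions}(c); on $[0,\mathcal{T})$ both $\vec{x}(t)$ and $\vec{x}_m(t)$ lie in $\calC$, so Lemma \ref{lem:Lipschitz_W} and the bound above give $h(\vec{x}(t)) \ge h(\vec{x}_m(t)) - \sqrt{\vec{e}_x(t)^\top W_0\vec{e}_x(t)} \ge 0$, and continuity then forces $\mathcal{T} = \infty$. For (c), $\|\vec{e}_x(t)\| \to 0$ sends the numerators in $\Delta_{ebsb}$ to $0$ while $h_1(\vec{x}_m)$, $F(\vec{x})$, $\Psi(\hat{\theta})$ remain bounded (with $h_1(\vec{x}_m)$ away from $0$): for $r = 1$ this gives $\Delta_{ebsb}(t) \to 0$, and for $r > 1$ the subtracted term $h_2(\vec{x}_m)h_r(\vec{x}_m)/h_1(\vec{x}_m)$ stays bounded below by a positive constant, forcing $\Delta_{ebsb}(t) = 0$ for large $t$; either way the limit is $0$. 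For (d), safety is (a); and since $\|\vec{e}_x(t)\| \to 0$, $\limsup_t\|\vec{x} - \vec{x}_*\| = \limsup_t\|\vec{x}_m - \vec{x}_*\| =: M < \infty$, while $\Delta_{ebsb}(t) \to 0$ and the continuous dependence of the governor solution on $\Delta_{ebsb}$ imply that the asymptotic motion of $\vec{x}_m$ is that of the $\Delta_{ebsb} \equiv 0$ governor, which involves neither $\theta_*$ nor $\hat{\theta}$; hence $M$ is independent of the parametric uncertainty.

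The main obstacle is the second step: identifying the right coupled quantities $m_i$ and checking that the matrix recursion \eqref{eqn:W_i}, the CBF recursion \eqref{eqn:hocbf_recursion}, the relative-degree property of Lemma \ref{lem:relative_degree_W}, and the exact algebraic form of $\Delta_{ebsb}$ interlock so that every cross term cancels, each $m_i$-inequality collapses onto the previous one, and a strictly positive forcing term $2z_1\delta$ emerges at the top of the chain. Keeping straight the $(\alpha_1 + \alpha_i)$-versus-$\alpha_i$ coefficient bookkeeping, and treating the relative-degree-one case (where the input already enters $h_1$ and Lemma \ref{lem:relative_degree_W} is vacuous) separately, are the delicate points; by comparison the boundedness arguments of the first step and the uncertainty-independence of $M$ are routine once this structure is in place.
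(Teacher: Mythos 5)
Your proposal is correct and follows essentially the same route as the paper's proof: your $m_1 = z_1^2 - p_1$ and $m_i = 2z_1z_i - p_i$ are exactly the paper's auxiliary signals $H_{e1}, H_{ei}$ in \eqref{eqn:H_e1}--\eqref{eqn:H_ei}, your recursion chain with the $(\alpha_1+\alpha_i)$ coefficients and the cancellation forced by the form of $\Delta_{ebsb}$ reproduces \eqref{eqn:Hdot_e1_r=1}--\eqref{eqn:Hdot_er}, and the transfer of safety from $\vec{x}_m$ to $\vec{x}$ via Lemma \ref{lem:Lipschitz_W} plus the continuity argument on the $c$-sublevel set, the two-case feasibility/boundedness argument for $\vec{r}_s$, and the reasoning for (c) and (d) all match the paper. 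The only differences are cosmetic (ordering of claims (a) and (b), retaining the $2z_1\delta$ forcing term that the paper discards, and the slightly sharper observation in (c) that $\Delta_{ebsb}$ becomes exactly zero in finite time for $r>1$).
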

% \begin{proof}
%     See Appendix \ref{app:safety_objective_no_input_uncertainty}.
% \end{proof}
\begin{remark}
    There are two buffers in \eqref{eqn:governor_no_input_uncertainty} which help in keeping $\vec{x}_m$ within $\bar{S}$, and thus within $S$, even when $\vec{x}_*$ approaches or crosses $\partial\bar{S}$: (i) $\Delta_{ebsb}$, which varies with time, and (ii) $\delta$, which is constant. As is shown in Theorem \ref{thm:safety_objective_no_input_uncertainty}, the buffer $\Delta_{ebsb}$ vanishes as $t \to \infty$. $\delta$ does not vanish, but can be chosen to be as small as desired.
\end{remark}
\begin{remark}
    The initial condition requirement in Theorem \ref{thm:safety_objective_no_input_uncertainty} makes results (c) and (d) of Theorem \ref{thm:lyapunov_no_input_uncertainty} into domain of attraction type of results, as they depend on the size of the initial conditions. This is necessary because of our problem statement that requires both the control and safety objectives to be met simultaneously.
    % We guarantee that the reference input $\vec{r}_s$ is bounded if the plant and reference model are safe, and safety in turn depends on the initial conditions.
\end{remark}
% \begin{remark} \label{rmk:S_compact_assumption}
%     Assumption \ref{asn:S_compact} can be relaxed by starting with an unbounded set $S_r$ and extracting a bounded subset $\bar{S}_r$ using a modified function such as $\bar{h}_r(\vec{x}) = \mathrm{softmin}(h_r(\vec{x}), D - \|\vec{x}\|^2) = -\ln(e^{-h_r(\vec{x})} + e^{\|\vec{x}\|^2 - D})$ for some $D > 0$. Then, $\bar{S}_r$ as described by $\bar{h}_r$ as in \eqref{eqn:safe_set_1}-\eqref{eqn:safe_set_3} is bounded, implying that $S_1 \cap \cdots \cap S_{r-1} \cap \bar{S}_r$ is bounded. Furthermore, $\bar{S}_r \subset S_r$, and one may use $\bar{h}_r$ in lieu of $h_r$ in \eqref{eqn:governor_no_input_uncertainty}-\eqref{eqn:error_based_safety_buffer_r>1} and Assumptions \ref{asn:initial_conditions}-\ref{asn:governor_automatic_when_gradient_zero}.
% \end{remark}
\begin{remark}
    % \textcolor{red}{For ease of exposition, we consider here a CBF-based safety filter in \eqref{eqn:governor_no_input_uncertainty}, necessitating Assumption \ref{asn:governor_automatic_when_gradient_zero},
    % However, it is somewhat unclear when Assumption \ref{asn:governor_automatic_when_gradient_zero} will be satisfied. Thus, in Section \ref{subsec:EBSB_R-CBF}, we apply a similar concept to relaxed CBFs in \cite{rabiee2024SoftminCBFs} to relax Assumption \ref{asn:governor_automatic_when_gradient_zero}.
    % which is concerned with points in the safe set where $\vec{r}_s$ disappears from $\frac{d}{dt}h(\vec{x}_m)$. We assume that all such points are sufficiently far from $\partial\bar{S}$ - i.e. that $\bar{h}(\vec{x}_m)$ is sufficiently large at all such points - that the constraint in \eqref{eqn:governor_no_input_uncertainty} is satisfied by $\vec{r} = 0$. It may be possible to satisfy this assumption through proper choice of $\bar{h}$, but it is unclear when this is possible. Thus, in Section \ref{subsec:EBSB_R-CBF}, we relax this assumption.}
    % A constant $d$ in Assumption \ref{asn:governor_automatic_when_gradient_zero} exists if $\bar{h}$ is large enough at all points where $\|\frac{\partial\bar{h}}{\partial\vec{x}}B\| = 0$, or equivalently if all states where $\|\frac{\partial\bar{h}}{\partial\vec{x}}B\| = 0$ are sufficiently far from $\partial\bar{S}$, that the inequality in Assumption \ref{asn:governor_automatic_when_gradient_zero} is satisfied.
    A constant $d$ in Assumption \ref{asn:governor_automatic_when_gradient_zero} exists if all states where the Lie derivative of $\bar{h}$ along $\vec{u}$ vanishes are sufficiently far from $\partial\bar{S}$ so that \eqref{eqn:asn_inequality} is satisfied. It may be argued that it is unclear how to design $\bar{h}$ such that this inequality holds.
    Thus, in Section \ref{subsec:EBSB_R-CBF}, we relax this assumption as Assumption \ref{asn:max_error_relaxed_cbf}, which is more easily verifiable a priori.
\end{remark}
\begin{remark}
    Our approach can be extended from a single state constraint to multiple in a straightforward manner using the softmin CBF approach in \cite{rabiee2024SoftminCBFs}.
\end{remark}
% \begin{remark}
%     Reduced-order models have been explored in \cite{molnar2022ModelFreeSafety,molnar2023SafetyBoundedInputsROMs,cohen2024ROMs,cohen2025ROMs} for simplifying the design of safety filters. If a reduced-order model is available for the reference model in \eqref{eqn:reference_model}, it may be used to simplify \eqref{eqn:governor_no_input_uncertainty}. However, the full plant dynamics are still needed for the adaptive control design in Section \ref{subsec:EBSB_adaptive_control} and for the EBSB design in \eqref{eqn:error_based_safety_buffer}.
% \end{remark}

\subsection{Intuition Behind $\Delta_{ebsb}$}

The buffer $\Delta_{ebsb}$ maintains safety of both the plant and the reference model by roughly keeping $\vec{x}_m$ farther away from $\partial\bar{S}$ than the error between $\vec{x}$ and $\vec{x}_m$. More concretely, $\Delta_{ebsb}$ ensures that $\bar{h}(\vec{x}_m(t)) \geq |\bar{h}(\vec{x}(t)) - \bar{h}(\vec{x}_m(t))|\ \forall t \geq 0$. We therefore have $\vec{x}(t) \in \bar{S}\ \forall t \geq 0$, and thus $\vec{x}(t) \in S\ \forall t \geq 0$ due to the HOCBF design in \eqref{eqn:hocbf_base_case}-\eqref{eqn:hocbf_recursion}. The specific form of $\Delta_{ebsb}$ in \eqref{eqn:error_based_safety_buffer} is derived from \eqref{eqn:Hdot_e} in the proof of Theorem \ref{thm:safety_objective_no_input_uncertainty}, and is chosen to ensure that the final inequality holds.

The numerator of $\Delta_{ebsb}$ upper-bounds a combination of $\vec{e}_x$ and $\dot{\vec{e}}_x$. The presence of $\bar{h}(\vec{x}_m)$ in the denominator can be thought of as varying the sensitivity to changes in $\vec{e}_x$ as the reference model moves towards or away from $\partial\bar{S}$. Recall that the adaptive control design in Section \ref{subsec:EBSB_adaptive_control} causes the plant to asymptotically track the reference model. Thus, as the reference model moves far away from $\partial\bar{S}$ ($\bar{h}(\vec{x}_m)$ is large), we become less concerned that the plant might imminently leave the safe set, and $\Delta_{ebsb}$ can become less sensitive to changes in $\vec{e}_x$. Conversely, whenever the reference model is close to $\partial\bar{S}$ ($\bar{h}(\vec{x}_m)$ is small), the safety objective requires us to closely watch for any deviations between the plant and reference model and quickly increase $\Delta_{ebsb}$ in response to changes in $\vec{e}_x$.

\subsection{Constructing the Augmented CBF Candidate} \label{subsec:constructing_hbar}

Here, we provide an example of how to construct a function $\bar{h}$ meeting the requirements in Section \ref{subsec:EBSB}. First, note that there are uncountably infinitely many functions which describe $S_r$ as in \eqref{eqn:safe_set_1}-\eqref{eqn:safe_set_3}, of which $h_r$ is one. It is always possible to normalize $h_r$ so that the normalized function has the same zero level-superset, but its gradient is bounded. For example, if $h_r$ is a polynomial of order $N$ in $\vec{x}$, one can define a normalized function
\begin{equation} \label{eqn:h_r_norm}
    h_{r,norm}(\vec{x}) = \frac{h_r(\vec{x})}{c_0 + \mathcal{P}_{N-1}(\vec{x})}
\end{equation}
for any $c_0 > 0$ and a positive semi-definite polynomial $\mathcal{P}_{N-1}$ with order $N - 1$. Then, $S_r$ is described by $h_{r,norm}$ as in \eqref{eqn:safe_set_1}-\eqref{eqn:safe_set_3}, and, with proper choice of $\mathcal{P}_{N-1}$, $\|\frac{\partial h_{r,norm}}{\partial\vec{x}}\|$ is bounded by some $\kappa_{r,norm} > 0$ which can be freely tuned via the choice of $c_0$ and $\mathcal{P}_{N-1}$.

Then, choose any desired element-wise upper and lower bounds on each state: without loss of generality, choose constants $\bar{x}_i, \ubar{x}_i \in \bbR, i \in [1, n]$ with $\bar{x}_i > \ubar{x}_i\ \forall i$. For each state, define corresponding CBF candidates and HOCBFs
\begin{subequations}
    \begin{align}
        \bar{h}_{i,1}(\vec{x}) &= c_i(\bar{x}_i - x_i), \\
        \ubar{h}_{i,1}(\vec{x}) &= c_i(x_i - \ubar{x}_i), \\
        \bar{h}_{i,j}(\vec{x}) &= \frac{\partial\bar{h}_{i,(j-1)}}{\partial\vec{x}}A\vec{x} + \alpha_{i,(j-1)}\bar{h}_{i,(j-1)}(\vec{x})\ \forall i \in [2, r_i], \\
        \ubar{h}_{i,j}(\vec{x}) &= \frac{\partial\ubar{h}_{i,(j-1)}}{\partial\vec{x}}A\vec{x} + \alpha_{i,(j-1)}\ubar{h}_{i,(j-1)}(\vec{x})\ \forall i \in [2, r_i]
    \end{align}
\end{subequations}
for any constants $\alpha_{i,j} > 0$, where $r_i$ is the relative degree of $\bar{h}_{i,1}$ and $\ubar{h}_{i,1}$ with respect to the plant in \eqref{eqn:plant_no_input_uncertainty}. Each $\bar{h}_{i,r_i}$ and $\ubar{h}_{i,r_i}$ will have the magnitude of its gradient bounded by some $\kappa_i > 0$, which can be freely tuned via the choice of $c_i$ and $\alpha_{i,j}$.

Finally, choose the augmented CBF candidate as
\if \numsides 2
\begin{equation}
    \begin{aligned}
        \bar{h}(\vec{x}) = \mathrm{softmin}_\mu(&h_{r,nom}(\vec{x}), \bar{h}_{1,r_1}(\vec{x}), \dots, \bar{h}_{n,r_n}(\vec{x}), \\
        &\ubar{h}_{1,r_1}(\vec{x}), \dots, \ubar{h}_{n,r_n}(\vec{x})),
    \end{aligned}
\end{equation}
\else
\begin{equation}
    \bar{h}(\vec{x}) = \mathrm{softmin}_\mu(h_{r,nom}(\vec{x}), \bar{h}_{1,r_1}(\vec{x}), \dots, \bar{h}_{n,r_n}(\vec{x}), \ubar{h}_{1,r_1}(\vec{x}), \dots, \ubar{h}_{n,r_n}(\vec{x})),
\end{equation}
\fi
where $\mathrm{softmin}$ is defined as in \cite{rabiee2024SoftminCBFs} for any choice of $\mu > 0$ such that $\bar{S}$ is non-empty. It is straightforward to show that $\frac{\partial\bar{h}}{\partial\vec{x}}$ is a linear combination of $\frac{\partial h_{r,norm}}{\partial\vec{x}}$, $\frac{\partial\bar{h}_{i,r_i}}{\partial\vec{x}}$, and $\frac{\partial\ubar{h}_{i,r_i}}{\partial\vec{x}}$, and thus that $\|\frac{\partial\bar{h}}{\partial\vec{x}}\| \leq \max\{\kappa_{r,norm}, \kappa_1, \dots, \kappa_n\}$. Furthermore, define
\begin{equation}
    \bar{\calC} = \{\vec{x} \in \bbR^n : \bar{h}_{i,j}(\vec{x}), \ubar{h}_{i,j}(\vec{x}) \geq 0\ \forall i, j\},
\end{equation}
which is bounded by construction. Using Proposition 2 in \cite{rabiee2024SoftminCBFs} and a standard HOCBF backstepping analysis \cite{xiao2019HOCBF}, it is straightforward to show that, if $\vec{x}(0) \in \bar{\calC}$ and $\vec{x}(t) \in \bar{S}\ \forall t \geq 0$, then $\vec{x}(t) \in \bar{\calC}\ \forall t \geq 0$, and likewise for $\vec{x}_m(t)$.

\subsection{Relaxing Assumption \ref{asn:governor_automatic_when_gradient_zero}} \label{subsec:EBSB_R-CBF}

% If a function $\bar{h}(\vec{x})$ can be found satisfying Assumption \ref{asn:governor_automatic_when_gradient_zero}, then EBSB as proposed in \eqref{eqn:governor_no_input_uncertainty}-\eqref{eqn:psi} will ensure all of the desirable properties proven in Theorem \ref{thm:safety_objective_no_input_uncertainty}. However, this assumption is a rather opaque mathematical condition which may be difficult to verify a priori. Thus, in this section, we relax it using a similar concept to Relaxed CBFs (R-CBFs) in \cite{rabiee2024SoftminCBFs}.
% In particular, we replace Assumption \ref{asn:governor_automatic_when_gradient_zero} with the following:
In this section, we replace Assumption \ref{asn:governor_automatic_when_gradient_zero} with Assumption \ref{asn:max_error_relaxed_cbf}. Rather than requiring a condition to hold everywhere in $\bar{S}$ where the Lie derivative of $\bar{h}$ along $\vec{u}$ goes to zero, we merely assume that the Lie derivative of $\vec{u}$ is nonzero
at all reachable points in a sufficiently wide window around $\partial\bar{S}$. 
% whenever $\bar{h}(\vec{x}_m)$ is sufficiently small.
This assumption implies that $\bar{h}$ is a relaxed CBF as in \cite{rabiee2024SoftminCBFs}, but is slightly stronger, as we require the window to be larger than the worst-case magnitude of $\vec{e}_x$.
\begin{assumptionp} {\ref{asn:governor_automatic_when_gradient_zero}$'$} \label{asn:max_error_relaxed_cbf}
    % There exist constants $d_1, d_2 > 0$ and a function $\xi : \bbR^n \to [d_1, \infty)$ such that, at every $\vec{x} \in S_1 \cap \cdots \cap S_{r-1} \cap \bar{S}$ where $\bar{h}(\vec{x}) \leq \kappa E_1 + \xi(\vec{x})$, we have $\|\frac{\partial\bar{h}}{\partial\vec{x}}B\| \geq d_2$, where $E_1$ is defined in Theorem \ref{thm:lyapunov_no_input_uncertainty}.
    There exist constants $d_1 > \kappa E_1$ and $d_2 > 0$ and a function $\xi : \bbR^n \to [d_1, \infty)$ such that, at every $\vec{x} \in S_1 \cap \cdots \cap S_{r-1} \cap \bar{S}$ where $\bar{h}(\vec{x}) \leq \xi(\vec{x})$, we have $\|\frac{\partial\bar{h}}{\partial\vec{x}}B\| \geq d_2$, where $E_1$ is defined in Theorem \ref{thm:lyapunov_no_input_uncertainty}.
\end{assumptionp}

Then, in lieu of \eqref{eqn:governor_no_input_uncertainty}, we propose the following choice of $\vec{r}_s$:
\if \numsides 2
\begin{equation} \label{eqn:governor_no_input_uncertainty_r-cbf}
    \begin{gathered}
        \vec{r}_s = \argmin_{\vec{r} \in \bbR^m} \|\vec{r} - \vec{r}_*\|^2\ \mathrm{s.t.} \\
        \begin{gathered}
            \rho_{ebsb}(\vec{x}_m, \vec{e}_x)\left[\frac{\partial\bar{h}}{\partial\vec{x}}\Big|_{\vec{x}_m}(A_m\vec{x}_m + B\vec{r}) - \Delta_{ebsb}(\vec{x}_m, \vec{e}_x, \hat{\theta})\right] \\
            \geq -\alpha_r\bar{h}(\vec{x}_m) + \delta
        \end{gathered}
    \end{gathered}
\end{equation}
\else
\begin{equation} \label{eqn:governor_no_input_uncertainty_r-cbf}
    \begin{gathered}
        \vec{r}_s = \argmin_{\vec{r} \in \bbR^m} \|\vec{r} - \vec{r}_*\|^2\ \mathrm{s.t.} \\
        \rho_{ebsb}(\vec{x}_m, \vec{e}_x)\left[\frac{\partial\bar{h}}{\partial\vec{x}}\Big|_{\vec{x}_m}(A_m\vec{x}_m + B\vec{r}) - \Delta_{ebsb}(\vec{x}_m, \vec{e}_x, \hat{\theta})\right] \geq -\alpha_r\bar{h}(\vec{x}_m) + \delta
    \end{gathered}
\end{equation}
\fi
for any constants $\alpha_r > 0$, $d_0 \in (0, d_1 - \kappa E_1)$, $\delta \in (0, \alpha_rd_0)$ and $\Delta_{ebsb}$ given in \eqref{eqn:error_based_safety_buffer}. $\rho_{ebsb} : \bbR^n \times \bbR^n \to [0, 1]$ is a state-dependent interpolation parameter given by
\begin{equation} \label{eqn:ebsb_interpolation}
    % \rho_{ebsb}(\vec{x}_m, \vec{e}_x) = \mathrm{sat}\left(\frac{\kappa\|\vec{e}_x\| + \xi(\vec{x}_m) - \bar{h}(\vec{x}_m)}{\xi(\vec{x}_m) - d_0}\right)
    \rho_{ebsb}(\vec{x}_m, \vec{e}_x) = \mathrm{sat}\left(\frac{\xi(\vec{x}_m) - \bar{h}(\vec{x}_m)}{\xi(\vec{x}_m) - \kappa\|\vec{e}_x\| - d_0}\right)
\end{equation}
where $\mathrm{sat}$ is the saturation function given by
\begin{equation} \label{eqn:sat}
    \mathrm{sat}(z) = \begin{cases} 0, & z \leq 0 \\ z, & 0 < z < 1 \\ 1, & z \geq 1 \end{cases}.
\end{equation}

Theorem \ref{thm:EBSB_R-CBF} provides the guarantees of our relaxation of EBSB:
\begin{theorem} \label{thm:EBSB_R-CBF}
    Let $\vec{x}_m(0) \in \calC_0$, $\vec{x}(0) \in \calC_0$, $\bar{h}(\vec{x}_m(0)) \geq \max\{\kappa\|\vec{e}_x(0)\|, \frac{\delta}{\alpha_r}\}$, and $\hat{\theta}(0) \in \Theta$. Then, for the closed-loop adaptive system consisting of \eqref{eqn:plant_no_input_uncertainty}-\eqref{eqn:hocbf_recursion} and \eqref{eqn:error_based_safety_buffer}-\eqref{eqn:sat} under Assumptions \ref{asn:theta_in_set} and \ref{asn:max_error_relaxed_cbf}, all results of Proposition \ref{prop:reference_safe_no_input_uncertainty} and Theorem \ref{thm:safety_objective_no_input_uncertainty} hold.
\end{theorem}
% \begin{proof}
%     See Appendix \ref{app:EBSB_R-CBF}.
% \end{proof}
\begin{remark} \label{rmk:EBSB_relaxed_cbf_asn_conditions}
    Constants $d_1$ and $d_2$ in Assumption \ref{asn:max_error_relaxed_cbf} exist if (i) $\bar{h}(\vec{x}) > 0\ \forall \vec{x} \in S_1 \cap \cdots \cap S_{r-1} \cap \bar{S}$ such that $\|\frac{\partial\bar{h}}{\partial\vec{x}}B\| = 0$, and (ii) $E_1$ is small. Condition (i) is a property of the choice of $\bar{h}$, and is satisfied if $h_* > 0$, where $h_* = \min_{\vec{x} \in \bbR^n} \bar{h}(\vec{x})$ s.t. $\|\frac{\partial\bar{h}}{\partial\vec{x}}B\|^2 = 0, h_1(\vec{x}) \geq 0, \dots, h_{r-1}(\vec{x}) \geq 0, \bar{h}(\vec{x}) \geq 0$. Condition (ii) is satisfied if $\bar{h}_* > \kappa E_1$. It follows from (ii) that $\|\frac{\partial\bar{h}}{\partial\vec{x}}B\|^2 > 0\ \forall \vec{x} \in S_1 \cap \cdots \cap S_{r-1} \cap \bar{S}$ such that $\bar{h}(\vec{x}) < \bar{h}_*$, and Assumption \ref{asn:max_error_relaxed_cbf} is satisfied with any $d_1 \in (\kappa E_1, \bar{h}_*)$. Roughly speaking, the condition $\bar{h}_* > \kappa E_1$ is a requirement that the safe set is large relative to the size of the parametric uncertainty, and is necessary because EBSB attempts to keep both the plant and reference model inside $\bar{S}$. Overall, both conditions (i) and (ii) can be checked a priori, making Assumption \ref{asn:max_error_relaxed_cbf} more transparent than Assumption \ref{asn:governor_automatic_when_gradient_zero}.
\end{remark}

While Assumption \ref{asn:max_error_relaxed_cbf} is more transparent and likely less restrictive than Assumption \ref{asn:governor_automatic_when_gradient_zero}, it remains somewhat conservative. This is because $E_1$ tends to be a very loose bound on $\sup_{t \geq 0} \|\vec{e}_x(t)\|$ in practice. It is straightforward to verify that Theorem \ref{thm:EBSB_R-CBF} holds if Assumption \ref{asn:max_error_relaxed_cbf} is replaced by the following assumption:
\begin{assumptionp} {\ref{asn:max_error_relaxed_cbf}$'$} \label{asn:max_error_relaxed_cbf_tighter}
    Let $E$ be any constant satisfying $E \geq \|\vec{e}_x(t)\|\ \forall t \geq 0$. Then, there exist constants $d_1 > \kappa E$ and $d_2 > 0$ and a function $\xi : \bbR^n \to [d_1, \infty)$ such that, at every $\vec{x} \in S_1 \cap \cdots \cap S_{r-1} \cap \bar{S}$ where $\bar{h}(\vec{x}) \leq \xi(\vec{x})$, we have $\|\frac{\partial\bar{h}}{\partial\vec{x}}B\| \geq d_2$.
\end{assumptionp}

If Assumption \ref{asn:max_error_relaxed_cbf_tighter} holds instead of Assumption \ref{asn:max_error_relaxed_cbf}, then we require $d_0 \in (0, d_1 - \kappa E)$.

% \subsection{Examining Relaxed Technical Assumption} \label{asn:examining_r-cbf_asn}

% \pfcomment{Should I add a section saying something about the realism and implications of Assumption \ref{asn:max_error_relaxed_cbf}, or should I just be ready to defend it to the reviewers? If I do add a section, I think it will need to go in the arXiv version for space reasons.}
\section{Safe Adaptive Control with an Uncertain Input Matrix} \label{sec:input_uncertainty}
In this section, we address Problem \ref{prob:input_matrix}. Accounting for the input uncertainty $\Lambda$ in \eqref{eqn:general_plant} adds further difficulty in meeting the safety objective. As will be shown in Section \ref{subsec:EBCG}, this added difficulty will be overcome using an Error-Based Safety Filter (EBSF) rather than an EBSB as in Section \ref{sec:no_input_uncertainty}.

The parametric uncertainty is assumed to satisfy the following assumption:
\begin{assumptionsec}
    \label{asn:lambda_in_set}
    The uncertain parameters satisfy the following:
    \begin{enumerate}
        \item[(a)] $\theta_*$ satisfies Assumption \ref{asn:theta_in_set}, and
        \item[(b)] $\Lambda = \diag(\lambda_*)$, where $\lambda_* \in L := \{\lambda \in \bbR^m : \ubar{\lambda}_i \leq \lambda_i \leq \bar{\lambda}_i\ \forall i\}$ for known constants $\bar{\lambda}_i > \ubar{\lambda}_i > 0$.
    \end{enumerate}
\end{assumptionsec}

\subsection{Adaptive Control Design}

The adaptive control input is designed in a straightforward manner as
\begin{equation} \label{eqn:input_input_uncertainty}
    \vec{u} = \diag(\hat{\lambda})^{-1}(K\vec{x} + \vec{r}_s + F(\vec{x})\hat{\theta}).
\end{equation}
Combining \eqref{eqn:general_plant}, \eqref{eqn:reference_model}, and \eqref{eqn:input_input_uncertainty}, and defining $\vec{e}_x := \vec{x} - \vec{x}_m$, $\tilde{\theta} := \hat{\theta} - \theta_*$ and $\tilde{\lambda} := \hat{\lambda} - \lambda_*$, we obtain a standard error model:
\begin{equation} \label{eqn:error_model_input_uncertainty}
    \dot{\vec{e}}_x = A_m\vec{e}_x + B(F(\vec{x})\tilde{\theta} - \diag(\tilde{\lambda})\vec{u}).
\end{equation}
Finally, the adaptive laws for updating $\hat{\theta}$ and $\hat{\lambda}$ are given by
\begin{gather}
    \dot{\hat{\theta}} = \mathrm{proj}_{T_\Theta(\hat{\theta})}[-\gamma_\theta F(\vec{x})^\top B^\top P\vec{e}_x], \label{eqn:theta_adaptive_law_2} \\
    \dot{\hat{\lambda}} = \mathrm{proj}_{T_L(\hat{\lambda})}[\gamma_\lambda \diag(\vec{u})B^\top P\vec{e}_x] \label{eqn:lambda_adaptive_law}
\end{gather}
where $T_\Theta(\hat{\theta})$ and $T_L(\hat{\lambda})$ are the tangent cones of $\Theta$ and $L$ at $\hat{\theta}$ and $\hat{\lambda}$ respectively, $\gamma_\theta, \gamma_\lambda > 0$ are any adaptive gains, and $P$ is the solution to \eqref{eqn:lyapunov_eqn} for any symmetric positive-definite matrix $Q$.

We establish key properties of the adaptive control design in the following theorem \cite{Narendra05}:
\begin{theorem} \label{thm:lyapunov_input_uncertainty}
    The closed-loop adaptive system consisting of \eqref{eqn:general_plant}, \eqref{eqn:reference_model}, and \eqref{eqn:input_input_uncertainty}-\eqref{eqn:lambda_adaptive_law} under Assumption \ref{asn:lambda_in_set} with $\hat{\theta}(0) \in \Theta$ and $\hat{\lambda}(0) \in L$ has the following properties:
    \begin{enumerate}
        \item[(a)] $\hat{\theta}(t) \in \Theta\ \forall t \geq 0$,
        \item[(b)] $\hat{\lambda}(t) \in L\ \forall t \geq 0$, and
        \item[(c)] $\|\vec{e}_x(t)\| \leq E_2\ \forall t \geq 0$,
    \end{enumerate}
    where $E_2$ is any positive constant satisfying $E_2^2 \geq \frac{1}{\lambda_{min}(P)}(\vec{e}_x(0)^\top P\vec{e}_x(0) + \frac{1}{\gamma_\theta}\sup_{\theta \in \Theta}\|\hat{\theta}(0) - \theta\|^2 + \frac{1}{\gamma_\lambda}\sup_{\lambda \in L}\|\hat{\lambda}(0) - \lambda\|^2)$. Additionally, if $\vec{r}_s(t)$ is bounded, then
    \begin{enumerate}
        \item[(d)] $\vec{x}(t)$ and $\vec{x}_m(t)$ are bounded, and
        \item[(e)] $\lim_{t \to \infty} \|\vec{e}_x(t)\| = 0$.
    \end{enumerate}
\end{theorem}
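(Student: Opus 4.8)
The plan is to run the standard Lyapunov argument of model reference adaptive control, adapted to the projection-based update laws. I would work with the composite function $V = \vec{e}_x^\top P\vec{e}_x + \tfrac{1}{\gamma_\theta}\tilde{\theta}^\top\tilde{\theta} + \tfrac{1}{\gamma_\lambda}\tilde{\lambda}^\top\tilde{\lambda}$, which is well defined since $\hat{\theta},\hat{\lambda},\theta_*,\lambda_*$ are finite. Parts (a) and (b) I would establish first and separately from the Lyapunov estimate: because $\Theta$ and $L$ are closed and convex and the right-hand sides of \eqref{eqn:theta_adaptive_law_2}--\eqref{eqn:lambda_adaptive_law} are orthogonal projections onto the tangent cones $T_\Theta(\hat{\theta})$ and $T_L(\hat{\lambda})$, the update direction lies in the tangent cone whenever $\hat{\theta}\in\partial\Theta$ (resp. $\hat{\lambda}\in\partial L$), so by Nagumo's viability condition the sets $\Theta$ and $L$ are forward invariant; with $\hat{\theta}(0)\in\Theta$, $\hat{\lambda}(0)\in L$ this gives $\hat{\theta}(t)\in\Theta$ and $\hat{\lambda}(t)\in L$ for all $t\ge 0$. (This also guarantees $\diag(\hat{\lambda})^{-1}$ is well defined and bounded, since $\ubar{\lambda}_i>0$.)

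Next I would differentiate $V$ along the error model \eqref{eqn:error_model_input_uncertainty}. Using the Lyapunov equation \eqref{eqn:lyapunov_eqn} to handle the $A_m$ terms and rewriting $\diag(\tilde{\lambda})\vec{u} = \diag(\vec{u})\tilde{\lambda}$, the cross terms coupling $\vec{e}_x$ to the parameter errors are $2\tilde{\theta}^\top F(\vec{x})^\top B^\top P\vec{e}_x$ and $-2\tilde{\lambda}^\top\diag(\vec{u})B^\top P\vec{e}_x$. Applying Lemma \ref{lem:projection_to_tangent_cone} with $\vec{v}=\hat{\theta}$, $\vec{w}=\theta_*\in\Theta$ (and analogously $\vec{v}=\hat{\lambda}$, $\vec{w}=\lambda_*\in L$), the projected update contributions $\tfrac{2}{\gamma_\theta}\tilde{\theta}^\top\dot{\hat{\theta}}$ and $\tfrac{2}{\gamma_\lambda}\tilde{\lambda}^\top\dot{\hat{\lambda}}$ are bounded above by the corresponding un-projected expressions, which cancel the cross terms exactly. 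This yields $\dot{V}\le -\vec{e}_x^\top Q\vec{e}_x\le 0$. Monotonicity of $V$ gives boundedness of $\vec{e}_x,\tilde{\theta},\tilde{\lambda}$, and the explicit bound in (c) follows from $\lambda_{min}(P)\|\vec{e}_x(t)\|^2\le V(t)\le V(0)$ after upper-bounding the uncertainty terms in $V(0)$ via $\sup_{\theta\in\Theta}\|\hat{\theta}(0)-\theta\|^2$ and $\sup_{\lambda\in L}\|\hat{\lambda}(0)-\lambda\|^2$.

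For (d), if $\vec{r}_s$ is bounded then $\vec{x}_m$ is the output of the Hurwitz system \eqref{eqn:reference_model} driven by a bounded input, hence bounded; then $\vec{x}=\vec{x}_m+\vec{e}_x$ is bounded, $F(\vec{x})$ is bounded, and since $\diag(\hat{\lambda})^{-1}$ is bounded, $\vec{u}$ in \eqref{eqn:input_input_uncertainty} is bounded, so $\dot{\vec{e}}_x$ is bounded from \eqref{eqn:error_model_input_uncertainty}. For (e), integrating $\dot{V}\le -\vec{e}_x^\top Q\vec{e}_x$ shows $\vec{e}_x\in L_2$; together with $\vec{e}_x$ and $\dot{\vec{e}}_x$ bounded (so $\|\vec{e}_x\|^2$ is uniformly continuous), Barbalat's lemma gives $\|\vec{e}_x(t)\|\to 0$.

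The only non-routine point — and the one I would be most careful about — is the handling of the tangent-cone projections: verifying forward invariance of $\Theta$ and $L$ from the viability argument, and confirming via Lemma \ref{lem:projection_to_tangent_cone} that replacing $-\gamma_\theta F(\vec{x})^\top B^\top P\vec{e}_x$ and $\gamma_\lambda\diag(\vec{u})B^\top P\vec{e}_x$ by their projections onto the tangent cones can only make $\dot{V}$ smaller. One should also note well-posedness (no finite escape time) before invoking the boundedness chain in (d), but this is immediate from the a priori boundedness of $V$, hence of $\vec{e}_x$, and local Lipschitzness of all the data.
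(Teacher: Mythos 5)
Your proposal is correct and follows essentially the same route as the paper's proof: forward invariance of $\Theta$ and $L$ from the tangent-cone structure of the projected adaptive laws, the composite Lyapunov function $V = \vec{e}_x^\top P\vec{e}_x + \|\tilde{\theta}\|^2/\gamma_\theta + \|\tilde{\lambda}\|^2/\gamma_\lambda$ with Lemma \ref{lem:projection_to_tangent_cone} and the identity $\diag(\tilde{\lambda})\vec{u} = \diag(\vec{u})\tilde{\lambda}$ yielding $\dot{V} \leq -\vec{e}_x^\top Q\vec{e}_x$, and the standard boundedness chain plus Barbalat's lemma for (d) and (e). The only cosmetic difference is that you cite Nagumo's viability condition where the paper argues invariance directly through the constraint functions $g_k$ of Lemma \ref{lem:C_finitely_many_g}; these are the same argument.
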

% \begin{proof}
%     See Appendix \ref{app:lyapunov_input_uncertainty}.
% \end{proof}

\subsection{Error-Based Safety Filter} \label{subsec:EBCG}

As in Section \ref{subsec:EBSB}, we now design $\vec{r}_s$ to simultaneously accomplish the control and safety objectives. As $h$ has relative degree $r \geq 1$ with respect to \eqref{eqn:general_plant}, we define a sequence of high-order CBFs \cite{xiao2019HOCBF} $h_1, \dots, h_r : \bbR^n \to \bbR$ as in \eqref{eqn:hocbf_base_case}-\eqref{eqn:hocbf_recursion} for any constants $\alpha_1, \dots, \alpha_{r-1} > 0$. Define $S_i \subset \bbR^n$ as the set described by $h_i$ as in \eqref{eqn:safe_set_1}-\eqref{eqn:safe_set_3}, and note that $S_1 = S$. Note also that $h_r$ has relative degree 1 with respect to the plant in \eqref{eqn:plant_no_input_uncertainty}.
Additionally, as in Section \ref{subsec:EBSB}, we construct an augmented CBF candidate $\bar{h} : \bbR^n \to \bbR$ and corresponding set $\bar{S} \subset \bbR^n$ described by $\bar{h}$ as in \eqref{eqn:safe_set_1}-\eqref{eqn:safe_set_3} with the following properties:
\begin{enumerate}
    \item[(a)] $\bar{h}$ has relative degree 1 with respect to \eqref{eqn:plant_no_input_uncertainty};
    \item[(b)] $\bar{S} \subseteq S_r$; and
    \item[(c)] $\exists$ a bounded set $\bar{\calC} \subset \bbR^n$ such that, if $\vec{x}(0) \in \bar{\calC}$ and $\vec{x}(t) \in \bar{S}\ \forall t \geq 0$, then $\vec{x}(t) \in \bar{\calC}\ \forall t \geq 0$.
\end{enumerate}
\begin{remark}
    The requirements on $\bar{h}$ here are the same as those in Section \ref{subsec:EBSB}, with the exception that we no longer require a uniform bound on $\|\frac{\partial\bar{h}}{\partial\vec{x}}\|$. Thus, any $\bar{h}$ constructed as in Section \ref{subsec:constructing_hbar} will suffice, but the normalization as in \eqref{eqn:h_r_norm} is now optional.
\end{remark}
Then,
we propose the following choice of $\vec{r}_s$, which we refer to as an Error-Based Safety Filter (EBSF):
\if \numsides 2
\begin{gather}
    \begin{gathered} \label{eqn:governor_input_uncertainty}
        \vec{r}_s = \argmin_{\vec{r} \in \bbR^m} \|\vec{r} - \vec{r}_*\|^2\ \mathrm{s.t.} \\
        \begin{aligned}
            \min_{\theta \in \Theta, \lambda \in L}\frac{\partial\bar{h}}{\partial\vec{x}}\Big|_{\vec{x}}\Big[&(1 - \beta_{ebsf}(\vec{x}, \vec{e}_x))\vec{z}_m(\vec{x}, \vec{r}) \\
            &+ \beta_{ebsf}(\vec{x}, \vec{e}_x)\vec{z}_p(\vec{x}, \vec{r}, \hat{\theta}, \hat{\lambda}, \theta, \lambda)\Big]
        \end{aligned} \\
        \geq -\alpha_r\bar{h}(\vec{x}) + \delta
    \end{gathered}
\end{gather}
\else
\begin{gather}
    \begin{gathered} \label{eqn:governor_input_uncertainty}
        \vec{r}_s = \argmin_{\vec{r} \in \bbR^m} \|\vec{r} - \vec{r}_*\|^2\ \mathrm{s.t.} \\
        \min_{\theta \in \Theta, \lambda \in L}\frac{\partial\bar{h}}{\partial\vec{x}}\Big|_{\vec{x}}\left[(1 - \beta_{ebsf}(\vec{x}, \vec{e}_x))\vec{z}_m(\vec{x}, \vec{r}) + \beta_{ebsf}(\vec{x}, \vec{e}_x)\vec{z}_p(\vec{x}, \vec{r}, \hat{\theta}, \hat{\lambda}, \theta, \lambda)\right] \geq -\alpha_r\bar{h}(\vec{x}) + \delta
    \end{gathered}
\end{gather}
\fi
for any constants $\alpha_r > 0$ and $\delta \in (0, \sup_{\vec{x} \in \bar{S}} \alpha_r\bar{h}(\vec{x}))$. $\vec{z}_m$ and $\vec{z}_p$ are defined, respectively, as
\begin{equation} \label{eqn:z_m}
    \vec{z}_m(\vec{x}, \vec{r}) = A_m\vec{x} + B\vec{r}
\end{equation}
and
\if \numsides 2
\begin{equation} \label{eqn:z_p}
    \begin{aligned}
        &\vec{z}_p(\vec{x}, \vec{r}, \hat{\theta}, \hat{\lambda}, \theta, \lambda) = \\
        &A\vec{x} + B\left(\diag(\lambda)\diag(\hat{\lambda})^{-1}\left(K\vec{x} + \vec{r} + F(\vec{x})\hat{\theta}\right) - F(\vec{x})\theta\right),
    \end{aligned}
\end{equation}
\else
\begin{equation} \label{eqn:z_p}
    \vec{z}_p(\vec{x}, \vec{r}, \hat{\theta}, \hat{\lambda}, \theta, \lambda) = A\vec{x} + B\left(\diag(\lambda)\diag(\hat{\lambda})^{-1}\left(K\vec{x} + \vec{r} + F(\vec{x})\hat{\theta}\right) - F(\vec{x})\theta\right),
\end{equation}
\fi
and $\beta_{ebsf} : \bbR^n \times \bbR^n \to [0, 1]$ is a state-dependent interpolation parameter given by
\begin{equation} \label{eqn:ebcg_interpolation}
    \beta_{ebsf}(\vec{x}, \vec{e}_x) = \mathrm{sat}\left(\frac{\max\{\alpha_{ebsf}(\|\vec{e}_x\|), \frac{2\delta}{3\alpha_r}\} - h_r(\vec{x})}{\max\{\alpha_{ebsf}(\|\vec{e}_x\|), \frac{2\delta}{3\alpha_r}\} - \frac{\delta}{3\alpha_r}}\right),
\end{equation}
where $\mathrm{sat}$ is given in \eqref{eqn:sat}, for any $\alpha_{ebsf} \in \calK_{\infty,e}$ such that $\alpha_{ebsf}(E_2) < \sup_{\vec{x} \in \bar{S}} \bar{h}(\vec{x})$, and $E_2$ is defined as in Theorem \ref{thm:lyapunov_input_uncertainty}. The goal of $\beta_{ebsf}$ is to smoothly interpolate between rendering the pristine reference dynamics safe when $\beta_{ebsf} = 0$ and rendering the plant with the worst-case parameter errors safe when $\beta_{ebsf} = 1$, and it is chosen to have the following crucial properties:
\begin{enumerate}
    \item[(a)] $\beta_{ebsf}(\vec{x}(t), \vec{e}_x(t))$ is continuous with time,
    \item[(b)] $\beta_{ebsf} = 1$ whenever $\bar{h}(\vec{x}) \approx 0$, ensuring safety for the uncertain plant whenever $\vec{x}$ is close to $\partial\bar{S}$, and
    \item[(c)] $\beta_{ebsf} = 0$ whenever $\bar{h}(\vec{x}) \geq \frac{2\delta}{3\alpha_r}$ and $\vec{e}_x = 0$, ensuring that $\vec{r}_s$ becomes independent of the parametric uncertainty in the limit as $\vec{e}_x$ goes to zero.
\end{enumerate}

We now present our main result for Problem \ref{prob:input_matrix}. First, as in Section \ref{subsec:EBSB}, in order to ensure that the safety objective is satisfied at the initial time and that $\vec{r}_s$ remains bounded, including at locations where the gradient of $\bar{h}$ along $B$ vanishes, we require the following technical assumptions:
%\begin{assumptionsec} \label{asn:initial_conditions_input_uncertainty}
%    The initial conditions of the closed-loop adaptive system satisfy:
%    \begin{enumerate}
%        \item[(a)] $h_i(\vec{x}(0)) \geq 0\ \forall i \in [1, r]$,
%        \item[(b)] $\hat{\theta}(0) \in \Theta$, and
%        \item[(c)] $\hat{\lambda}(0) \in L$.
%    \end{enumerate}
%\end{assumptionsec}
%\begin{assumptionsec} \label{asn:safe_set_bounded_input_uncertainty}
%    The intersection $S_1 \cap \cdots \cap S_r \subset \bbR^n$ is bounded.
%\end{assumptionsec}
\begin{assumptionsec} \label{asn:gradient_nonzero_near_boundary}
    There exists a constant $d_1 > 0$ such that, at every $\vec{x} \in S_1 \cap \cdots \cap S_{r-1} \cap \bar{S}$ where $\bar{h}(\vec{x}) < \max\{\alpha_{ebsf}(E_2), \frac{2\delta}{3\alpha_r}\}$, we have $\|\frac{\partial\bar{h}}{\partial\vec{x}}B\| \geq d_1$.
\end{assumptionsec}
\begin{assumptionsec} \label{asn:reference_dynamics_governor_automatic_when_gradient_zero}
    There exists a constant $d_2 > 0$ such that, at every $\vec{x} \in S_1 \cap \cdots \cap S_{r-1} \cap \bar{S}$ where $\|\frac{\partial\bar{h}}{\partial\vec{x}}B\| < d_2$, we have $\alpha_r\bar{h}(\vec{x}) + \frac{\partial\bar{h}}{\partial\vec{x}}A_m\vec{x} \geq \delta$. 
\end{assumptionsec}
Finally, our main result for Problem \ref{prob:input_matrix} follows:
\begin{theorem} \label{thm:safety_objective_input_uncertainty}
    For $\calC_0$ defined in \eqref{eqn:initial_condition_set}, let $\vec{x}(0) \in \calC_0$, $\hat{\theta}(0) \in \Theta$, and $\hat{\lambda}(0) \in L$. Then, for the closed-loop adaptive system consisting of \eqref{eqn:general_plant}, \eqref{eqn:reference_model}, \eqref{eqn:hocbf_base_case}-\eqref{eqn:hocbf_recursion}, and \eqref{eqn:input_input_uncertainty}-\eqref{eqn:ebcg_interpolation} under Assumptions \ref{asn:lambda_in_set}-\ref{asn:reference_dynamics_governor_automatic_when_gradient_zero}, the following results hold:
    \begin{enumerate}
        \item[(a)] $h(\vec{x}(t)) \geq 0\ \forall t \geq 0$,
        \item[(b)] all results of Theorem \ref{thm:lyapunov_input_uncertainty} hold,
        \item[(c)] there exists a finite time $T \geq 0$ such that $\beta_{ebsf}(\vec{x}(t), \vec{e}_x(t)) = 0\ \forall t \geq T$, and
        \item[(d)] the safety and control objectives are met with $M$ independent of the parametric uncertainty.
    \end{enumerate}
\end{theorem}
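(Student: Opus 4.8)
The plan is to mirror the maximal-interval bootstrap used for Theorem~\ref{thm:safety_objective_no_input_uncertainty}, but now organized around the plant's highest-order barrier $h_r$ rather than the reference model's. First I would record three identities that hold along the closed loop. (i) Substituting $(\theta,\lambda)=(\theta_*,\lambda_*)$ into \eqref{eqn:z_p} and using \eqref{eqn:input_input_uncertainty} gives $\vec{z}_p(\vec{x},\vec{r}_s,\hat\theta,\hat\lambda,\theta_*,\lambda_*)=\dot{\vec{x}}$, so $\dot h_r=\frac{\partial h_r}{\partial\vec{x}}\big|_{\vec{x}}\vec{z}_p(\vec{x},\vec{r}_s,\hat\theta,\hat\lambda,\theta_*,\lambda_*)$. (ii) $\dot{\vec{x}}=\vec{z}_m(\vec{x},\vec{r}_s)+B\vec{w}$ with $\vec{w}:=F(\vec{x})\tilde\theta-\diag(\tilde\lambda)\vec{u}$, hence $\dot h_r=\frac{\partial h_r}{\partial\vec{x}}\big|_{\vec{x}}\vec{z}_m(\vec{x},\vec{r}_s)+\epsilon_h$ with $\epsilon_h:=\frac{\partial h_r}{\partial\vec{x}}\big|_{\vec{x}}B\vec{w}=\frac{\partial h_r}{\partial\vec{x}}\big|_{\vec{x}}(\dot{\vec{e}}_x-A_m\vec{e}_x)$ by \eqref{eqn:error_model_input_uncertainty}. (iii) For $i<r$ the relative-degree property (Definition~\ref{def:relative_degree}) gives $\frac{\partial h_i}{\partial\vec{x}}B=0$, so along the plant $\dot h_i=\frac{\partial h_i}{\partial\vec{x}}A\vec{x}=h_{i+1}-\alpha_i h_i$ by \eqref{eqn:hocbf_recursion_input_uncertainty}. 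Evaluating the constraint of \eqref{eqn:governor_input_uncertainty} at $\vec{r}=\vec{r}_s$, using that $(\theta_*,\lambda_*)\in\Theta\times L$ is feasible for the inner minimization (so $\min_{\theta,\lambda}\frac{\partial h_r}{\partial\vec{x}}\vec{z}_p\le\dot h_r$) together with (i)--(ii), yields the master inequality $\dot h_r\ge-\alpha_r h_r(\vec{x})+\delta+(1-\beta_{ebsf})\epsilon_h$; in particular $\dot h_r\ge-\alpha_r h_r(\vec{x})+\delta-|\epsilon_h|$ always, and $\dot h_r\ge-\alpha_r h_r(\vec{x})+\delta$ whenever $h_r(\vec{x})\le\tfrac{\delta}{3\alpha_r}$, since there $\beta_{ebsf}=1$.

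For the bootstrap: on the maximal interval $[0,T^*)$ of existence, as long as $h_r(\vec{x}(\cdot))\ge0$ up to the current time we have $\vec{x}\in S_r$ (bounded, Assumption~\ref{asn:safe_set_bounded_input_uncertainty}), while $\|\vec{e}_x\|\le E_2$, $\hat\theta\in\Theta$, $\hat\lambda\in L$ by Theorem~\ref{thm:lyapunov_input_uncertainty}(a)--(c). Under these bounds \eqref{eqn:governor_input_uncertainty} is feasible with a uniformly bounded minimizer: where $\beta_{ebsf}>0$, Assumption~\ref{asn:gradient_nonzero_near_boundary} forces $\|\frac{\partial h_r}{\partial\vec{x}}B\|\ge d_1$, and since each $\lambda_i/\hat\lambda_i$ lies in a fixed positive interval, choosing $\vec{r}$ aligned with $(\frac{\partial h_r}{\partial\vec{x}}B)^\top$ and scaled by a bounded factor makes both the $\vec{z}_m$- and $\vec{z}_p$-contributions grow linearly (the latter uniformly in $(\theta,\lambda)$), overpowering the bounded remaining terms; where $\beta_{ebsf}=0$ the constraint reduces to $\frac{\partial h_r}{\partial\vec{x}}(A_m\vec{x}+B\vec{r})\ge-\alpha_r h_r(\vec{x})+\delta$, solvable with bounded $\vec{r}$ when $\|\frac{\partial h_r}{\partial\vec{x}}B\|\ge d_2$ and already satisfied at $\vec{r}=0$ otherwise (Assumption~\ref{asn:reference_dynamics_governor_automatic_when_gradient_zero}). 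Thus $\vec{r}_s$ is bounded and (as is standard for CBF-based QPs) locally Lipschitz, so Theorem~\ref{thm:lyapunov_input_uncertainty}(d)--(e) applies: $\vec{x},\vec{x}_m$ are bounded, $\|\vec{e}_x\|\to0$, and there is no finite escape, so $T^*=\infty$. For part~(a): $h_r(\vec{x})=0$ forces $\beta_{ebsf}=1$, so the master inequality gives the strict subtangentiality condition $\dot h_r\ge\delta>0$ on $\{h_r(\vec{x})=0\}$; since $h_r(\vec{x}(0))\ge0$ (Assumption~\ref{asn:initial_conditions_input_uncertainty}), $S_r$ is forward-invariant, and (iii) plus the comparison lemma propagate this to $h_i(\vec{x}(t))\ge h_i(\vec{x}(0))e^{-\alpha_i t}\ge0$ for all $i<r$, so $h(\vec{x}(t))=h_1(\vec{x}(t))\ge0\ \forall t\ge0$. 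Part~(b) then follows.

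For part~(c) I would upgrade $\|\vec{e}_x\|\to0$ to $\dot{\vec{e}}_x\to0$, hence $\epsilon_h\to0$, by a Barbalat refinement: $\vec{e}_x\in L_2$ from the Lyapunov analysis behind Theorem~\ref{thm:lyapunov_input_uncertainty}, and $\ddot{\vec{e}}_x$ is essentially bounded because $\dot{\vec{r}}_s$ is (local Lipschitzness of $\vec{r}_s$), so $\dot{\vec{e}}_x$ is uniformly continuous with $\int_0^t\dot{\vec{e}}_x\,\d s$ convergent. Pick $T_0$ so that $\alpha_{ebsf}(\|\vec{e}_x(t)\|)<\tfrac{2\delta}{3\alpha_r}$ and $|\epsilon_h(t)|<\tfrac{\delta}{6}$ for $t\ge T_0$; then for such $t$, $\beta_{ebsf}=0\iff h_r(\vec{x})\ge\tfrac{2\delta}{3\alpha_r}$, while whenever $h_r(\vec{x}(t))\le\tfrac{2\delta}{3\alpha_r}$ the master inequality gives $\dot h_r\ge\tfrac{\delta}{3}-|\epsilon_h|>\tfrac{\delta}{6}>0$. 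Since $h_r(\vec{x})\ge0$ always, $h_r$ climbs to $\tfrac{2\delta}{3\alpha_r}$ within time $4/\alpha_r$ of $T_0$ and can never fall back below it, so with $T:=T_0+4/\alpha_r$ we get $\beta_{ebsf}(\vec{x}(t),\vec{e}_x(t))=0\ \forall t\ge T$. Finally part~(d): safety is part~(a); $\vec{x}-\vec{x}_*=\vec{e}_x+(\vec{x}_m-\vec{x}_*)$ is bounded because $\vec{e}_x$ is and $\frac{\d}{\d t}(\vec{x}_m-\vec{x}_*)=A_m(\vec{x}_m-\vec{x}_*)+B(\vec{r}_s-\vec{r}_*)$ with $A_m$ Hurwitz and $\vec{r}_s-\vec{r}_*$ bounded; since $\vec{e}_x\to0$, $\limsup_{t\to\infty}\|\vec{x}-\vec{x}_*\|=\limsup_{t\to\infty}\|\vec{x}_m-\vec{x}_*\|$, and for $t\ge T$ the governor reduces to the uncertainty-free QP $\vec{r}_s=\argmin_{\vec{r}}\|\vec{r}-\vec{r}_*\|^2$ s.t. $\frac{\partial h_r}{\partial\vec{x}}\big|_{\vec{x}}(A_m\vec{x}+B\vec{r})\ge-\alpha_r h_r(\vec{x})+\delta$, so $\vec{x}_m$ is a vanishing ($\vec{e}_x\to0$) perturbation of the nominal CBF-filtered reference dynamics; a standard ISS / vanishing-perturbation argument then identifies the tracking error $M$ with that of this nominal system, which depends only on $(A_m,h_r,\alpha_r,\delta,\vec{r}_*)$ and not on $\theta_*$ or $\lambda_*$.

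I expect the main obstacle to be part~(c): establishing $\dot{\vec{e}}_x\to0$ so that the buffer-inducing term $\epsilon_h$ genuinely vanishes --- note $\vec{w}$ itself need not tend to zero (no persistent excitation), so one must pass through $\dot{\vec{e}}_x=A_m\vec{e}_x+B\vec{w}$ and a Barbalat argument resting on local Lipschitzness of $\vec{r}_s$ --- and then making the trapping argument at the nonsmooth threshold $h_r=\tfrac{2\delta}{3\alpha_r}$ airtight. A secondary technical point is the uniform feasibility and boundedness of the $\min$-robustified QP \eqref{eqn:governor_input_uncertainty}, whose constraint is concave, not affine, in $\vec{r}$.
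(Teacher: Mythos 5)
Your treatment of claims (a), (b), and (d) is essentially the paper's: for (a) you use that $\beta_{ebsf}=1$ forces the worst-case constraint, which the true $(\theta_*,\lambda_*)\in\Theta\times L$ turns into $\dot h_r\geq-\alpha_rh_r+\delta$ on $\{h_r\leq\frac{\delta}{3\alpha_r}\}$, then cascade down through the high-order CBFs; for (b) you use the same $d_1,d_2$ case split on $\|\frac{\partial h_r}{\partial\vec{x}}B\|$ to exhibit a bounded feasible point; and for (d) you argue, as the paper does, that after $T$ the governor is the uncertainty-free QP. Your ``master inequality'' $\dot h_r\geq-\alpha_rh_r+\delta+(1-\beta_{ebsf})\epsilon_h$ is also correct.

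The genuine gap is in claim (c). Your trapping argument needs $|\epsilon_h|<\frac{\delta}{6}$ eventually, and since $\vec{w}=F(\vec{x})\tilde\theta-\diag(\tilde\lambda)\vec{u}$ need not vanish (no excitation), you route this through $\dot{\vec{e}}_x\to0$ via a Barbalat argument that requires $\ddot{\vec{e}}_x$ essentially bounded, hence $\dot{\vec{r}}_s$ essentially bounded. That premise is not available: $\vec{r}_*$ is only assumed bounded (not Lipschitz or differentiable), $\beta_{ebsf}$ has kinks from the $\max$ and $\mathrm{sat}$, and the QP solution map need not be differentiable across active-set changes, so nothing in the closed loop guarantees uniform continuity of $\dot{\vec{e}}_x$. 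The paper sidesteps this entirely with a different evaluation of the constraint: since $\hat\theta(t)\in\Theta$ and $\hat\lambda(t)\in L$ for all $t$, the inner minimum may be bounded by the value at $(\theta,\lambda)=(\hat\theta,\hat\lambda)$, for which $\vec{z}_p$ collapses exactly to $\vec{z}_m$, yielding the residual-free inequality $\frac{\partial h_r}{\partial\vec{x}}\big|_{\vec{x}}(A_m\vec{x}+B\vec{r}_s)\geq-\alpha_rh_r(\vec{x})+\delta$ at every time, independently of $\beta_{ebsf}$. It then runs the comparison on $h_r(\vec{x}_m)$, whose derivative $\frac{\partial h_r}{\partial\vec{x}}\big|_{\vec{x}_m}(A_m\vec{x}_m+B\vec{r}_s)$ differs from the left-hand side above only by terms in $\vec{e}_x$, $h_r(\vec{x})-h_r(\vec{x}_m)$, and the gradient difference --- all position-level errors that vanish as $\vec{e}_x\to0$ by smoothness and boundedness, with no derivative of $\vec{e}_x$ required. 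It concludes $h_r(\vec{x}_m)\geq\frac{5\delta}{6\alpha_r}$ eventually, hence $h_r(\vec{x})\geq\frac{2\delta}{3\alpha_r}$ and $\beta_{ebsf}=0$. You should replace your $\dot{\vec{e}}_x\to0$ step with this substitution $(\theta,\lambda)=(\hat\theta,\hat\lambda)$; the rest of your trapping argument then survives with only minor changes.
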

% \begin{proof}
%     See Appendix \ref{app:safety_objective_input_uncertainty}.
% \end{proof}
\begin{remark}
    A constant $d_1 > 0$ in Assumption \ref{asn:gradient_nonzero_near_boundary} exists if $h_* > 0$ for $h_*$ defined in Remark \ref{rmk:EBSB_relaxed_cbf_asn_conditions}, and if $\alpha_{ebsf}$ and $\delta$ are chosen such that $\max\{\alpha_{ebsf}(E_2), \frac{2\delta}{3\alpha_r}\} < h_*$. Furthermore, with small $\delta$, Assumption \ref{asn:reference_dynamics_governor_automatic_when_gradient_zero} is only slightly stronger than requiring that $\bar{h}$ is a CBF for the reference model in \eqref{eqn:reference_model} on $\bar{S}$. However, we will relax Assumption \ref{asn:reference_dynamics_governor_automatic_when_gradient_zero} in Section \ref{subsec:EBSF_R-CBF} similarly to our approach in Section \ref{subsec:EBSB_R-CBF}, showing how EBSF can be applied to a broad range of problems.
\end{remark}
\begin{remark}
    It may not be immediately obvious how to implement EBSF, as \eqref{eqn:governor_input_uncertainty} is not in the form of a quadratic programming problem.
    \if \arxivversion 1
    However, as shown in \ref{app:ebcg_quadratic_program}, it is possible to rewrite \eqref{eqn:governor_input_uncertainty} as a quadratic program with $2^m - 1$ constraints by leveraging Assumption \ref{asn:lambda_in_set}.
    \else
    However, as shown in Appendix IV in \cite{fisher2025EBSBarXiv}, it is possible to rewrite \eqref{eqn:governor_input_uncertainty} as a quadratic program with $2^m - 1$ constraints by leveraging Assumption \ref{asn:lambda_in_set}.
    \fi
\end{remark}
\begin{remark}
    It should be noted that the use of the word Filter is not in reference to the frequency response of \eqref{eqn:governor_input_uncertainty}. Rather, we refer to \eqref{eqn:governor_input_uncertainty} as an Error-Based Safety Filter as it tends to filter out unsafe reference inputs, in parity with the use of the term Safety Filter elsewhere in the CBF literature (e.g. \cite{ames2019CBF}).
\end{remark}

\subsection{Intuition Behind EBSF} \label{subsec:ebcg_intuition}

EBSB ensures safety of the plant by keeping the reference model sufficiently far away from $\partial S$, thus ensuring safety of both the plant and reference model. In contrast, EBSF guarantees safety of the plant but not the reference model during the transient period while $\|\vec{e}_x\|$ is of significant magnitude. Due to the nature of the uncertainty, whenever $\vec{x}$ is near $\partial S_r$, one may need to apply $\vec{r}_s$ in a narrow range of directions to prevent $\vec{x}$ from crossing $\partial S_r$. If $\vec{e}_x \neq 0$, $\vec{x}_m$ may be at a different location near $\partial S_r$ such that any $\vec{r}_s$ in that narrow range of directions would cause $\vec{x}_m$ to cross the boundary. It is difficult in general to avoid such transient situations where safety of $\vec{x}$ and $\vec{x}_m$ cannot be simultaneously ensured.

Instead of choosing $\vec{r}_s$ based on $\vec{x}_m$ and $\vec{e}_x$, EBSF considers two possible choices of $\vec{r}_s$. The first possibility assumes that the adaptive controller in \eqref{eqn:input_input_uncertainty} has accomplished its task and that the plant is exhibiting the desired closed-loop behavior, i.e. that $\dot{\vec{x}} = A_m\vec{x} + B\vec{r}_s$, and chooses an ideal $\vec{r}_s$ which will ensure $\frac{d}{dt}h_r(\vec{x}) \geq -\alpha_rh_r(\vec{x}) + \delta$ for the desired closed-loop plant dynamics. This first possibility is captured in $\vec{z}_m$ in \eqref{eqn:z_m}. The second possibility for $\vec{r}_s$ makes no assumptions on the completeness of the adaptation, but rather uses prior knowledge of the parameter sets $\Theta$ and $L$ to choose a conservative $\vec{r}_s$ which will ensure $\frac{d}{dt}h_r(\vec{x}) \geq -\alpha_rh_r(\vec{x}) + \delta$ for any $\theta_* \in \Theta$ and $\lambda_* \in L$. This second possibility is captured in $\vec{z}_p$ in \eqref{eqn:z_p}. Given these two possible choices of $\vec{r}_s$, then, the EBSF approach is to apply the ideal $\vec{r}_s$ when $\vec{x}$ is far away from $\partial S_r$, to apply the worst-case $\vec{r}_s$ when $\vec{x}$ is close to $\partial S_r$, and to linearly interpolate between the two choices of $\vec{r}_s$ at intermediate distances from $\partial S_r$. Concretely, we establish an interpolation window around $\partial S_r$ consisting of $h_r(\vec{x}) \in (\frac{\delta}{3\alpha_r}, \max\{\alpha_{ebsf}(\|\vec{e}_x\|), \frac{2\delta}{3\alpha_r}\})$. For $h_r(\vec{x}) \leq \frac{\delta}{3\alpha_r}$, we apply the worst-case choice of $\vec{r}_s$; for $h_r(\vec{x}) \geq \max\{\alpha_{ebsf}(\|\vec{e}_x\|), \frac{2\delta}{3\alpha_r}\}$, we apply the ideal choice of $\vec{r}_s$; and for $\vec{x}$ in the interpolation window, we linearly interpolate between the two choices of $\vec{r}_s$. $\alpha_{ebsf}$ is a hand-tuned function which controls how the width of the interpolation window varies with $\|\vec{e}_x\|$, and it ensures that in the limit as $\|\vec{e}_x\|$ goes to zero and the plant follows the reference model dynamics exactly, the width of the interpolation window shrinks to a value sufficiently small that the plant will never enter.

The results in Theorem \ref{thm:safety_objective_input_uncertainty} are independent of the choice of $\alpha_{ebsf}$ in \eqref{eqn:ebcg_interpolation}. In fact, it turns out that one can remove the error dependence altogether by choosing $\alpha_{ebsf} = 0$, and Theorem \ref{thm:safety_objective_input_uncertainty} will still hold. However, for good tracking of $\vec{x}_*$, $\delta$ typically must be very small. Thus, an interpolation window given only by $h_r(\vec{x}) \in (\frac{\delta}{3\alpha_r}, \frac{2\delta}{3\alpha_r})$ would be very thin. If the plant were to enter such a thin interpolation window, $\beta_{ebsf}$ could change from 0 to 1 rapidly, resulting in a rapid change in $\vec{r}_s$ that would tend to push the plant back across the window and cause $\beta_{ebsf}$ to rapidly change from 1 to 0. Therefore, while $\vec{r}_s(t)$ is never discontinuous with time, a thin interpolation window can result in a jitter-like behavior where $\vec{r}_s$ changes rapidly between the two aforementioned possibilities. The EBSF solution to this problem is to vary the width of the interpolation window with $\|\vec{e}_x\|$. Intuitively, while $\|\vec{e}_x\|$ is large, the plant is less predictable and more inclined to enter the window, and thus we need the window to be wide. Conversely, as $\|\vec{e}_x\|$ goes to zero, the plant becomes predictable, and we can shrink the window without fear of the plant entering. Thus, $\alpha_{ebsf}$ acts as as an error-based smoothing, helping to minimize jitter in $\vec{r}_s$. More principled ways to tune $\alpha_{ebsf}$ or otherwise avoid jitter in $\vec{r}_s$ are an important matter for future work. The simulation results in Section \ref{sec:simulations} will further explore this jittering behavior, and will propose a way of largely mitigating it in practice.

\subsection{Relaxing Assumptions \ref{asn:gradient_nonzero_near_boundary} and \ref{asn:reference_dynamics_governor_automatic_when_gradient_zero}} \label{subsec:EBSF_R-CBF}

Although Assumptions \ref{asn:gradient_nonzero_near_boundary} and \ref{asn:reference_dynamics_governor_automatic_when_gradient_zero} are more transparent than Assumption \ref{asn:governor_automatic_when_gradient_zero}, we can apply the same approach as in Section \ref{subsec:EBSB_R-CBF} to relax them further. In this section, we replace both assumptions with the following single assumption:
\begin{assumptionp} {\ref{asn:gradient_nonzero_near_boundary}$'$} \label{asn:max_error_relaxed_cbf_input_uncertainty}
    There exist constants $d_1, d_2 > 0$ and a function $\xi : \bbR^n \to [d_1, \infty)$ such that, at every $\vec{x} \in S_1 \cap \cdots \cap S_{r-1} \cap \bar{S}$ where $\bar{h}(\vec{x}) \leq \xi(\vec{x})$, we have $\|\frac{\partial\bar{h}}{\partial\vec{x}}B\| \geq d_2$, where $E_2$ is defined in Theorem \ref{thm:lyapunov_input_uncertainty}.
\end{assumptionp}

In lieu of \eqref{eqn:governor_input_uncertainty}, we propose the following choice of $\vec{r}_s$:
\if \numsides 2
\begin{gather}
    \begin{gathered} \label{eqn:governor_input_uncertainty_r-cbf}
        \vec{r}_s = \argmin_{\vec{r} \in \bbR^m} \|\vec{r} - \vec{r}_*\|^2\ \mathrm{s.t.} \\
        \begin{aligned}
            \min_{\theta \in \Theta, \lambda \in L}\rho_{ebsf}(\vec{x})\frac{\partial\bar{h}}{\partial\vec{x}}\Big|_{\vec{x}}\Big[&(1 - \beta_{ebsf}(\vec{x}, \vec{e}_x))\vec{z}_m(\vec{x}, \vec{r}) \\
            &+ \beta_{ebsf}(\vec{x}, \vec{e}_x)\vec{z}_p(\vec{x}, \vec{r}, \hat{\theta}, \hat{\lambda}, \theta, \lambda)\Big]
        \end{aligned} \\
        \geq -\alpha_r\bar{h}(\vec{x}) + \delta
    \end{gathered}
\end{gather}
\else
\begin{gather}
    \begin{gathered} \label{eqn:governor_input_uncertainty_r-cbf}
        \vec{r}_s = \argmin_{\vec{r} \in \bbR^m} \|\vec{r} - \vec{r}_*\|^2\ \mathrm{s.t.} \\
        \min_{\theta \in \Theta, \lambda \in L}\rho_{ebsf}(\vec{x})\frac{\partial\bar{h}}{\partial\vec{x}}\Big|_{\vec{x}}\left[(1 - \beta_{ebsf}(\vec{x}, \vec{e}_x))\vec{z}_m(\vec{x}, \vec{r}) + \beta_{ebsf}(\vec{x}, \vec{e}_x)\vec{z}_p(\vec{x}, \vec{r}, \hat{\theta}, \hat{\lambda}, \theta, \lambda)\right] \geq -\alpha_r\bar{h}(\vec{x}) + \delta
    \end{gathered}
\end{gather}
\fi
for any constants $\alpha_r > 0$, $d_0 \in (0, d_1)$, $\delta \in (0, \alpha_rd_0)$, and $\beta_{ebsf}$ given in \eqref{eqn:ebcg_interpolation}. $\rho_{ebsf} : \bbR^n \to [0, 1]$ is an additional state-dependent interpolation parameter given by
\begin{equation} \label{eqn:ebsf_interpolation_2}
    \rho_{ebsf}(\vec{x}) = \mathrm{sat}\left(\frac{\xi(\vec{x}) - \bar{h}(\vec{x})}{\xi(\vec{x}) - d_0}\right)
\end{equation}
where $\mathrm{sat}$ is defined in \eqref{eqn:sat}.
Theorem \ref{thm:EBSF_R-CBF} provides the guarantees of our relaxation of EBSF:
\begin{theorem} \label{thm:EBSF_R-CBF}
    For $\calC_0$ defined in \eqref{eqn:initial_condition_set}, let $\vec{x}(0) \in \calC_0$, $\hat{\theta}(0) \in \Theta$, and $\hat{\lambda}(0) \in L$. Then, for the closed-loop adaptive system consisting of \eqref{eqn:general_plant}, \eqref{eqn:reference_model}, \eqref{eqn:hocbf_base_case}-\eqref{eqn:hocbf_recursion}, \eqref{eqn:input_input_uncertainty}-\eqref{eqn:lambda_adaptive_law}, and \eqref{eqn:z_m}-\eqref{eqn:ebsf_interpolation_2} under Assumptions \ref{asn:lambda_in_set} and \ref{asn:max_error_relaxed_cbf_input_uncertainty}, all results of Theorem \ref{thm:safety_objective_input_uncertainty} hold.
\end{theorem}
% \begin{proof}
%     See Appendix \ref{app:EBSF_R-CBF}.
% \end{proof}
\begin{remark}
    Assumption \ref{asn:max_error_relaxed_cbf_input_uncertainty} is nearly identical to Assumption \ref{asn:max_error_relaxed_cbf}, except that it has no dependence on an upper bound for $\|\vec{e}_x\|$. Thus, $d_1$ and $d_2$ in Assumption \ref{asn:max_error_relaxed_cbf_input_uncertainty} exist if only Condition (i) in Remark \ref{rmk:EBSB_relaxed_cbf_asn_conditions} is satisfied. This makes Assumption \ref{asn:max_error_relaxed_cbf_input_uncertainty} more general than Assumptions \ref{asn:max_error_relaxed_cbf} and \ref{asn:max_error_relaxed_cbf_tighter}.
    % EBSF can thus be applied in Problem 1 when EBSB cannot because $h_*$ is too small relative to the maximum value of $\|\vec{e}_x\|$. EBSF is more broadly applicable because it only seeks to ensure $\vec{x}(t) \in \bar{S}\ \forall t \geq 0$, whereas EBSB additionally seeks to ensure that $\vec{x}_m(t) \in \bar{S}\ \forall t \geq 0$, meaning that EBSB can only be applied if $\vec{x}$ and $\vec{x}_m$ stay close enough together to both fit inside $\bar{S}$. The tradeoff, however, is that a value of $h_*$ which is too small to apply EBSB may imply that $d_1 - d_0$ is small, resulting in large-magnitude jitter when EBSF is applied.
\end{remark}
% \section{Putting Them Together}
% \section{Extensions to High Relative Degree} \label{sec:high_order}
% \section{Data-Driven Performance Enhancement} \label{sec:smid}
% \input{Body/SMID}
\section{Simulations and Discussion} \label{sec:simulations}
To demonstrate the efficacy of our proposed approaches, we simulate a simple academic system consisting of a point mass $m$ moving in the $x$-$y$ plane, tethered to the origin by an ideal spring and damper with coefficients $k$ and $b$. The control inputs are forces $F_x$ and $F_y$. The control objective is to cause $\vec{p} = [x, y]^\top$ to track a moving reference point $\vec{p}_*(t) = [x_*(t), y_*(t)]^\top$, while the safety objective is to avoid a circular pillar centered at $(x_{\rm pillar}, y_{\rm pillar})$ with radius $r_{\rm pillar}$. In Problem \ref{prob:unforced_dynamics}, only $k$ and $b$ are unknown, while in Problem \ref{prob:input_matrix}, $m$ is unknown as well. It is straightforward to show that the dynamics reduce to
\if \numsides 2
\begin{equation}
    \begin{gathered} \label{eqn:simulation_plant}
        \dot{\vec{x}} = A\vec{x} + B(\lambda_*\vec{u} - F(\vec{x})\theta_*), \\
        \vec{x} = \left[x, y, \dot{x}, \dot{y}\right]^\top, \quad \vec{u} = \left[F_x, F_y\right]^\top, \quad \lambda_* = \frac{1}{m}, \\
        \theta_* = \left[\frac{k}{m}, \frac{b}{m}\right]^\top, \quad
        A = \begin{bmatrix} 0_{2 \times 2} & I_2 \\ 0_{2 \times 2} & 0_{2 \times 2} \end{bmatrix}, \\
        B = \begin{bmatrix} 0_{2 \times 2} \\ I_2 \end{bmatrix}, \quad
        F(\vec{x}) = \begin{bmatrix} x & \frac{x(x\dot{x} + y\dot{y})}{x^2 + y^2} \\ y & \frac{y(x\dot{x} + y\dot{y})}{x^2 + y^2} \end{bmatrix}.
    \end{gathered}
\end{equation}
\else
\begin{equation}
    \begin{gathered} \label{eqn:simulation_plant}
        \dot{\vec{x}} = A\vec{x} + B(\lambda_*\vec{u} - F(\vec{x})\theta_*), \\
        \vec{x} = \left[x, y, \dot{x}, \dot{y}\right]^\top, \quad \vec{u} = \left[F_x, F_y\right]^\top, \quad \lambda_* = \frac{1}{m}, \quad \theta_* = \left[\frac{k}{m}, \frac{b}{m}\right]^\top, \\
        A = \begin{bmatrix} 0_{2 \times 2} & I_2 \\ 0_{2 \times 2} & 0_{2 \times 2} \end{bmatrix}, \quad B = \begin{bmatrix} 0_{2 \times 2} \\ I_2 \end{bmatrix}, \quad
        F(\vec{x}) = \begin{bmatrix} x & \frac{x(x\dot{x} + y\dot{y})}{x^2 + y^2} \\ y & \frac{y(x\dot{x} + y\dot{y})}{x^2 + y^2} \end{bmatrix}.
    \end{gathered}
\end{equation}
\fi

We use the following true parameter values and initial parameter estimates: $m = 1.1\mathrm{kg}$, $k = 1\mathrm{N/m}$, $b = 0.9\mathrm{N/(m/s)}$, $\hat{k}(0) = 2\mathrm{N/m}$, and $\hat{b}(0) = 0.5\mathrm{N/(m/s)}$. In the adaptive control design, we assume that $k \in [\ubar{k}, \bar{k}]$ and $b \in [\ubar{b}, \bar{b}]$ where $\ubar{k} = 0\mathrm{N/m}$, $\bar{k} = 3\mathrm{N/m}$, $\ubar{b} = 0\mathrm{N/(m/s)}$, and $\bar{b} = 2\mathrm{N/(m/s)}$. Additionally, in Problem 2, we use the initial parameter estimate $\hat{m}(0) = 2\mathrm{kg}$, and we assume that $m \in [\ubar{m}, \bar{m}]$ with $\ubar{m} = 0.5\mathrm{kg}$, $\bar{m} = 2.5\mathrm{kg}$. Initial estimates $\hat{\theta}(0)$ and $\hat{\lambda}(0)$ are derived from $\hat{k}(0)$, $\hat{b}(0)$, and $\hat{m}(0)$, and element-wise bounds on $\theta_*$ and $\lambda_*$ are derived from $\ubar{k}$, $\bar{k}$, $\ubar{b}$, $\bar{b}$, $\ubar{m}$, and $\bar{m}$. We implement each controller with a zero-order hold at 100 Hz. The initial state is chosen together with the initial parameter estimate such that the initial controller will tend to push the point mass toward the pillar more quickly than expected. Furthermore, the desired trajectory $\vec{p}_*$ is chosen to drive the plant through the obstacle and to stop inside it, necessitating deviation from the desired trajectory to ensure safety.
\if \arxivversion 1
See Appendix \ref{app:simulation_nominal_control} for the ideal control solution if all parameters are known.
% It should be noted that $h(\vec{x})$ in \eqref{eqn:simulation_h} satisfies Assumption \ref{asn:h_Lipschitz} with $\kappa = 1$, but that the high-order safe set $S_2$ resulting from $h_2$ in \eqref{eqn:simulation_h_2} does not satisfy Assumptions \ref{asn:S_compact} and \ref{asn:safe_set_bounded_input_uncertainty}. Nevertheless, EBSB and EBSF will be shown to work well.
\else
See Appendix VI-A in \cite{fisher2025EBSBarXiv} for the ideal control solution if all parameters are known.
% It should be noted that our choice of $h(\vec{x})$ satisfies Assumption \ref{asn:h_Lipschitz} with $\kappa = 1$, but that the high-order safe set $S_2$ resulting from $h_2$ does not satisfy Assumptions \ref{asn:S_compact} and \ref{asn:safe_set_bounded_input_uncertainty}. Nevertheless, EBSB and EBSF will be shown to work well.
\fi

\subsection{Augmented CBF Design} \label{subsec:simulations_composite_cbf}

In our simulations, we use the 1st-order CBF candidate
\begin{equation} \label{eqn:simulations_h_1}
    h_1(\vec{x}) = (x - x_{\rm pillar})^2 + (y - y_{\rm pillar})^2 - r_{\rm pillar}^2
\end{equation}
to represent the safe set, and construct $h_2$ as in \eqref{eqn:hocbf_recursion} with $\alpha_1 = 1$. Then, noting that $h_2$ can be written as
\begin{equation} \label{eqn:simulations_h_2}
    h_2(\vec{x}) = (\vec{x} - \vec{x}_{\rm pillar})^\top H(\vec{x} - \vec{x}_{\rm pillar}) - \alpha_1r_{\rm pillar}^2
\end{equation}
with $\vec{x}_{\rm pillar} = [x_{\rm pillar}, y_{\rm pillar}, 0, 0]^\top$ and an appropriate symmetric $H$, we choose the normalization
\begin{gather}
    h_{2,norm}(\vec{x}) = \frac{h_2(\vec{x})}{\sqrt{1 + (\vec{x} - \vec{x}_{\rm pillar})^\top H^2(\vec{x} - \vec{x}_{\rm pillar})}}, \label{eqn:simulations_h_2_norm}
    % c_0 = \frac{2\lambda_{min}(\sqrt{H^2})\lambda_{max}(\sqrt{H^2})\alpha_1r_{\rm pillar}^2}{2\lambda_{min}(\sqrt{H^2}) + 3\lambda_{max}(\sqrt{H^2}))}. \label{eqn:simluations_h_2_norm_c0}
\end{gather}
% \pfcomment{Is this standard notation?} \\
% One can show that, given \eqref{eqn:simluations_h_2_norm_c0}, $h_{2,norm}$ in \eqref{eqn:simulations_h_2_norm} satisfies $\|\frac{\partial h_{2,norm}}{\partial\vec{x}}\| \leq \kappa_{2,norm} = 1 + \frac{\lambda_{max}(\sqrt{H^2})}{2\lambda_{min}(\sqrt{H^2})}\ \forall \vec{x} \in \bbR^n$.
Finally, by choosing arbitrary maximum and minimum values for each state as $\bar{x} = \bar{y} = -\ubar{x} = -\ubar{y} = 10$m, $\bar{\dot{x}} = \bar{\dot{y}} = -\ubar{\dot{x}} = -\ubar{\dot{y}} = 5$m/s and the $\mathrm{softmin}$ smoothing parameter $\mu = 10$, we construct $\bar{h}$ as in Section \ref{subsec:constructing_hbar}.
Via numerical optimization, we find that $\kappa = 1.5$ is a uniform upper bound on $\|\frac{\partial\bar{h}}{\partial\vec{x}}\|$.

We find that $E_1 = 3.46$, and solving the NLP in Remark \ref{rmk:EBSB_relaxed_cbf_asn_conditions}, we obtain $h_* = 0.77$. Assumption \ref{asn:max_error_relaxed_cbf_input_uncertainty} is satisfied with $d_1 = 0.7$, but Assumption \ref{asn:max_error_relaxed_cbf} is not. However, as discussed in Section \ref{subsec:EBSB_R-CBF}, $E_1$ is a very loose bound on $\sup_{t \geq 0} \|\vec{e}_x(t)\|$, and Assumption \ref{asn:max_error_relaxed_cbf_tighter} is satisfied with $d_1 = 0.7$ and $d_0 = d_1/10$ if $\sup_{t \geq 0} \|\vec{e}_x(t)\| < 0.39$.

\subsection{Problem \ref{prob:unforced_dynamics}: Simulation Results} \label{subsec:simulations_problem_1}

For Problem \ref{prob:unforced_dynamics} with $k$ and $b$ unknown and $m$ known, we simulate and compare four approaches with formal guarantees of safety: aCBF \cite{taylor2020aCBFs}, RaCBF \cite{lopez2021RaCBFs}, EBSB, and EBSF.
\if \arxivversion 1
See Appendices \ref{app:aCBF} and \ref{app:RaCBF} respectively for aCBF and RaCBF implementation details. For comparison, we also simulate the hypothetical ideal safe controller in Appendix \ref{app:simulation_nominal_control}.
\else
See Appendices VI-B and VI-C in \cite{fisher2025EBSBarXiv} for aCBF and RaCBF implementation details. For comparison, we also simulate the hypothetical ideal safe controller in Appendix VI-A in \cite{fisher2025EBSBarXiv}.
\fi
Set Membership Identification (SMID), proposed in \cite{lopez2021RaCBFs} to improve RaCBF performance, is not used here -- we defer discussion of SMID to Section \ref{subsec:simulations_SMID}.
The control adaptive laws in all approaches use the adaptation rate $\gamma_\theta = \gamma_\lambda = 3$, and for the additional safety adaptive laws in the aCBF and RaCBF approaches, $\gamma_{\theta,s} = \gamma_{\lambda,s} = 10$ is used. For $\alpha_{ebsf}$ in EBSF, we use the intuitive choice
\begin{equation}
    \alpha_{ebsf}(\|\vec{e}_x\|) = \kappa\|\vec{e}_x\|.
\end{equation}

The $(x, y)$ trajectories of all simulated approaches are compared in Fig. \ref{fig:all_trajectories_1}, and the reference inputs $\vec{r}_s$ are compared in Fig. \ref{fig:all_inputs_1}. Clearly, while both aCBF and RaCBF maintain safety, they maintain a distance from the pillar which depends on the choice of $\bar{k} - \ubar{k}$ and $\bar{b} - \ubar{b}$.
In contrast, EBSB and EBSF are able to stay closer to the pillar in transient, and as shown in the preceding theorems, they permit the plant to start approaching the pillar as $\|\vec{e}_x\|$ tends toward zero.
% The trade-offs between EBSB and EBSF are apparent in Figs. \ref{fig:all_trajectories_1} and \ref{fig:all_inputs_1}: EBSF is less conservative than EBSB, but EBSB maintains a smooth reference input while EBSF demonstrates a slight jitter in $\vec{r}_s$ starting at roughly 12 seconds. This jitter is an artifact of the zero-order hold used in our simulations. Under a zero-order hold, as $\|\vec{e}_x\|$ becomes very small, $\beta_{ebsf}$ can go from zero to one or vice versa in the span of a single time step, causing the input produced by EBSF to change rapidly. The jitter shrinks in magnitude as the time step is decreased. However, in any real application, EBSF will be applied in a digital control system with a zero-order hold, and thus future work is needed to avoid jitter in digital control systems with EBSF.
% Furthermore, it is apparent in Figs. \ref{fig:all_trajectories_1} and \ref{fig:all_inputs_1} that EBSF is able to be less conservative than EBSB at the cost of some jitter in $\vec{r}_s$ starting at roughly 12 seconds, while $\vec{r}_s$ produced by EBSB remains smooth.
Futhermore, we find that $\|\vec{e}_x\|$ stays below 0.17, which is about 5\% of $E_1$ and is easily small enough to satisfy Assumption \ref{asn:max_error_relaxed_cbf_tighter}.

\if \numsides 2
\begin{figure}
    \centering
    \includegraphics[width=3.5in]{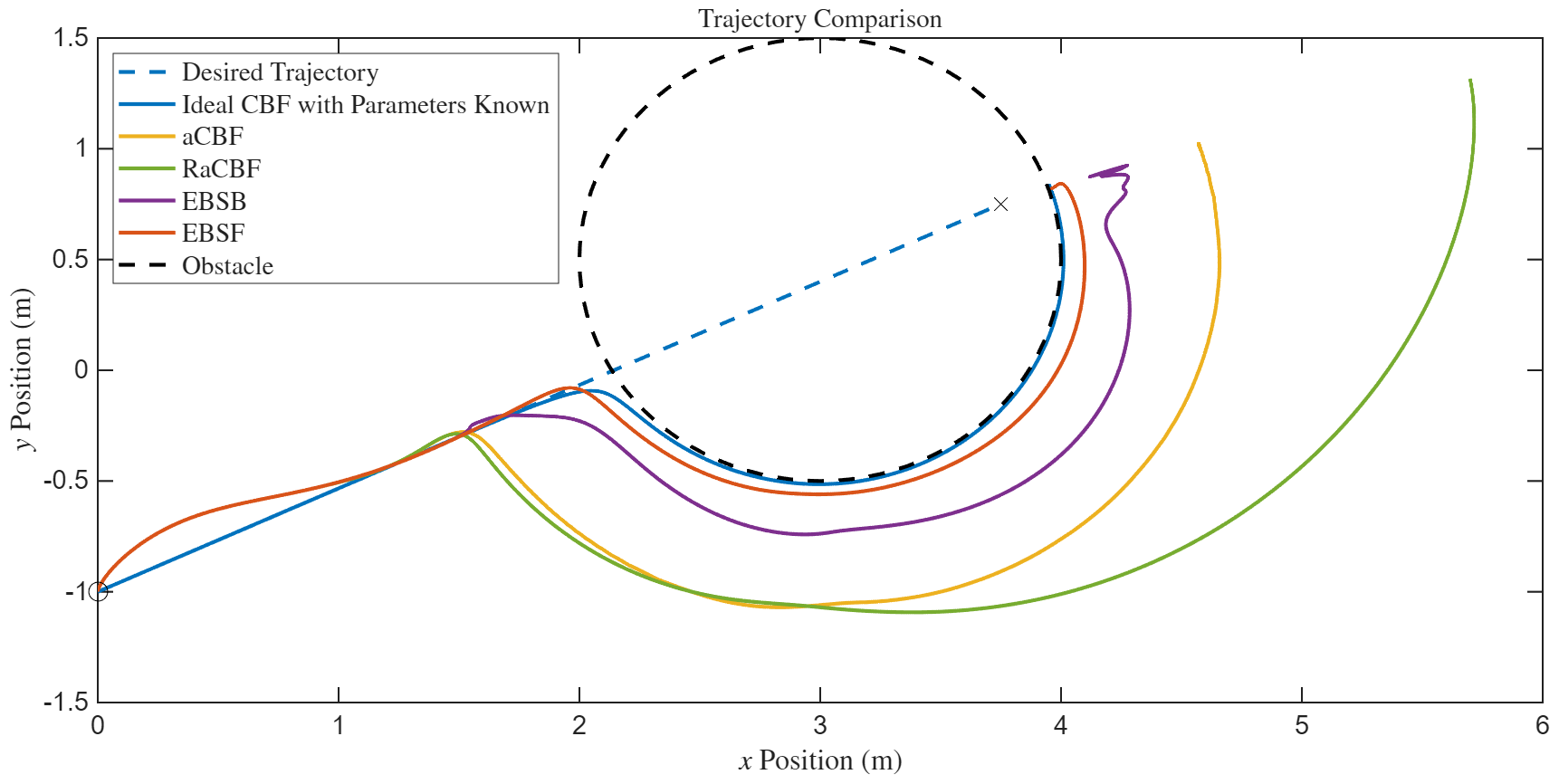}
    \caption{Trajectories resulting from aCBF \cite{taylor2020aCBFs}, RaCBF \cite{lopez2021RaCBFs}, EBSB, and EBSF for Problem \ref{prob:unforced_dynamics} compared to the ideal controller with all parameters known.}
    \label{fig:all_trajectories_1}
\end{figure}

\begin{figure}
    \centering
    \includegraphics[width=3.5in]{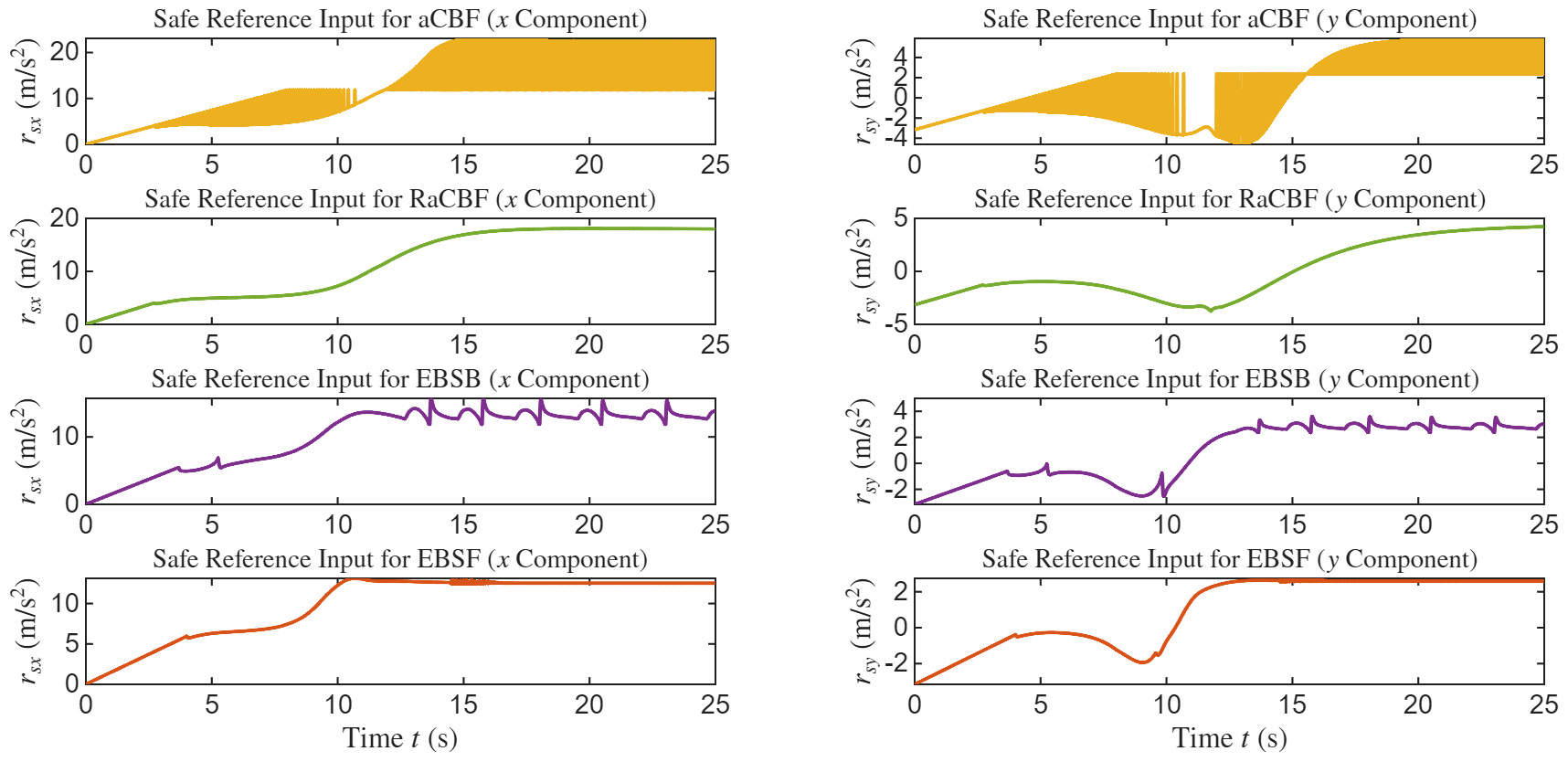}
    \caption{Reference inputs $\vec{r}_s$ from aCBF \cite{taylor2020aCBFs}, RaCBF \cite{lopez2021RaCBFs}, EBSB, and EBSF for Problem \ref{prob:unforced_dynamics}.}
    \label{fig:all_inputs_1}
    % \begin{minipage}{\columnwidth}
    %     \centering
    %     \includegraphics[width=3.5in]{Figures/all_trajectories_prob_1.png}
    %     \caption{Trajectories resulting from aCBF \cite{taylor2020aCBFs}, RaCBF \cite{lopez2021RaCBFs}, EBSB, and EBSF for Problem \ref{prob:unforced_dynamics} compared to the ideal controller with all parameters known. \\}
    %     \label{fig:all_trajectories_1}
    % \end{minipage}
    % \begin{minipage}{\columnwidth}
    %     \centering
    %     \includegraphics[width=3.5in]{Figures/all_ref_inputs_prob_1.png}
    %     \caption{Reference inputs $\vec{r}_s$ from aCBF \cite{taylor2020aCBFs}, RaCBF \cite{lopez2021RaCBFs}, EBSB, and EBSF for Problem \ref{prob:unforced_dynamics}.}
    %     \label{fig:all_inputs_1}
    % \end{minipage}
\end{figure}
\else
\begin{figure}
    \centering
    \begin{minipage}{3.5in}
        \centering
        \includegraphics[width=3.5in]{Figures/all_trajectories_prob_1.png}
        \caption{Trajectories resulting from aCBF \cite{taylor2020aCBFs}, RaCBF \cite{lopez2021RaCBFs}, EBSB, and EBSF for Problem \ref{prob:unforced_dynamics} compared to the ideal controller with all parameters known.}
        \label{fig:all_trajectories_1}
    \end{minipage}
    \hfill
    \begin{minipage}{3.5in}
        \centering
        \includegraphics[width=3.5in]{Figures/all_ref_inputs_prob_1.png}
        \caption{Reference inputs $\vec{r}_s$ from aCBF \cite{taylor2020aCBFs}, RaCBF \cite{lopez2021RaCBFs}, EBSB, and EBSF for Problem \ref{prob:unforced_dynamics}.}
        \label{fig:all_inputs_1}
    \end{minipage}
\end{figure}
\fi

\subsection{Problem \ref{prob:input_matrix}: Simulation Results} \label{subsec:simulations_problem_2}

For Problem \ref{prob:input_matrix} with $k$, $b$, and $m$ unknown, we simulate and compare three approaches: aCBF \cite{taylor2020aCBFs}, RaCBF \cite{lopez2021RaCBFs}, and EBSF.
\if \arxivversion 1
For comparison, we also simulate the hypothetical ideal safe controller in Appendix \ref{app:simulation_nominal_control}.
\else
For comparison, we also simulate the hypothetical ideal safe controller in Appendix VI-A in \cite{fisher2025EBSBarXiv}.
\fi
The true parameters, initial estimates, and upper/lower bounds are the same as in Section \ref{subsec:simulations_problem_1}, with the addition of $\hat{m}(0) = 2\mathrm{kg}$ and the assumption that $m \in [\ubar{m}, \bar{m}]$ with $\ubar{m} = 0.5\mathrm{kg}$, $\bar{m} = 2.5\mathrm{kg}$.

% The $(x, y)$ trajectories of aCBF and RaCBF compared with the ideal controller with all parameters known can be seen in Figs. \ref{fig:acbf_trajectory_2} and \ref{fig:racbf_trajectory_2} respectively, and the aCBF and RaCBF reference inputs $\vec{r}_s$ are shown in Figs. \ref{fig:acbf_input_2} and \ref{fig:racbf_input_2} respectively. The a priori bounds on $\hat{\theta}$ are much looser here than in Section \ref{subsec:simulations_problem_1} because they must also take into account the degree of uncertainty in $m$, and there is an additional parameter estimate, $\hat{\lambda}$, whose bounds must also be accounted for. Both of these factors lead to much higher conservatism than in Section \ref{subsec:simulations_problem_1}. In contrast, the trajectory and reference input resulting from EBSF are shown in Figs. \ref{fig:ebcg_trajectory_2}-\ref{fig:ebcg_input_2}, and it is clear that EBSF is able to be nonconservative even under input uncertainties while maintaining a formal safety guarantee. However, the jitter observed in Problem \ref{prob:unforced_dynamics} due to the zero-order hold is amplified in Problem \ref{prob:input_matrix} because there is more overall uncertainty, and thus the conservative $\vec{r}_s$ resulting from $\beta_{ebsf} = 1$ is larger in magnitude. In the following section, we propose a way to both further reduce the conservatism of EBSB and EBSF and reduce the jitter of EBSF in practice.
The $(x, y)$ trajectories of all simulated approaches are compared in Fig. \ref{fig:all_trajectories_2}, and the reference inputs $\vec{r}_s$ are compared in Fig. \ref{fig:all_inputs_2}. Again, aCBF and RaCBF remain safe but are even more conservative than in Problem \ref{prob:unforced_dynamics} due to the increased uncertainty in the parameters. In contrast, EBSF is able to be nonconservative while maintaining safety even under input uncertainties. We do observe in Fig. \ref{fig:all_inputs_2}, however, that EBSF experiences high-frequency spikes in the control input of large magnitude in Problem \ref{prob:input_matrix}. This jitter was also present in Problem \ref{prob:unforced_dynamics} (see Fig. \ref{fig:all_inputs_1}), but is amplified in Problem \ref{prob:input_matrix} because of the additional uncertainty.
% In the following section, we explore the use of SMID to both further reduce the conservatism of EBSB and EBSF and reduce the jitter of EBSF in practice.

\if \numsides 2
\begin{figure}
    \centering
    \includegraphics[width=3.5in]{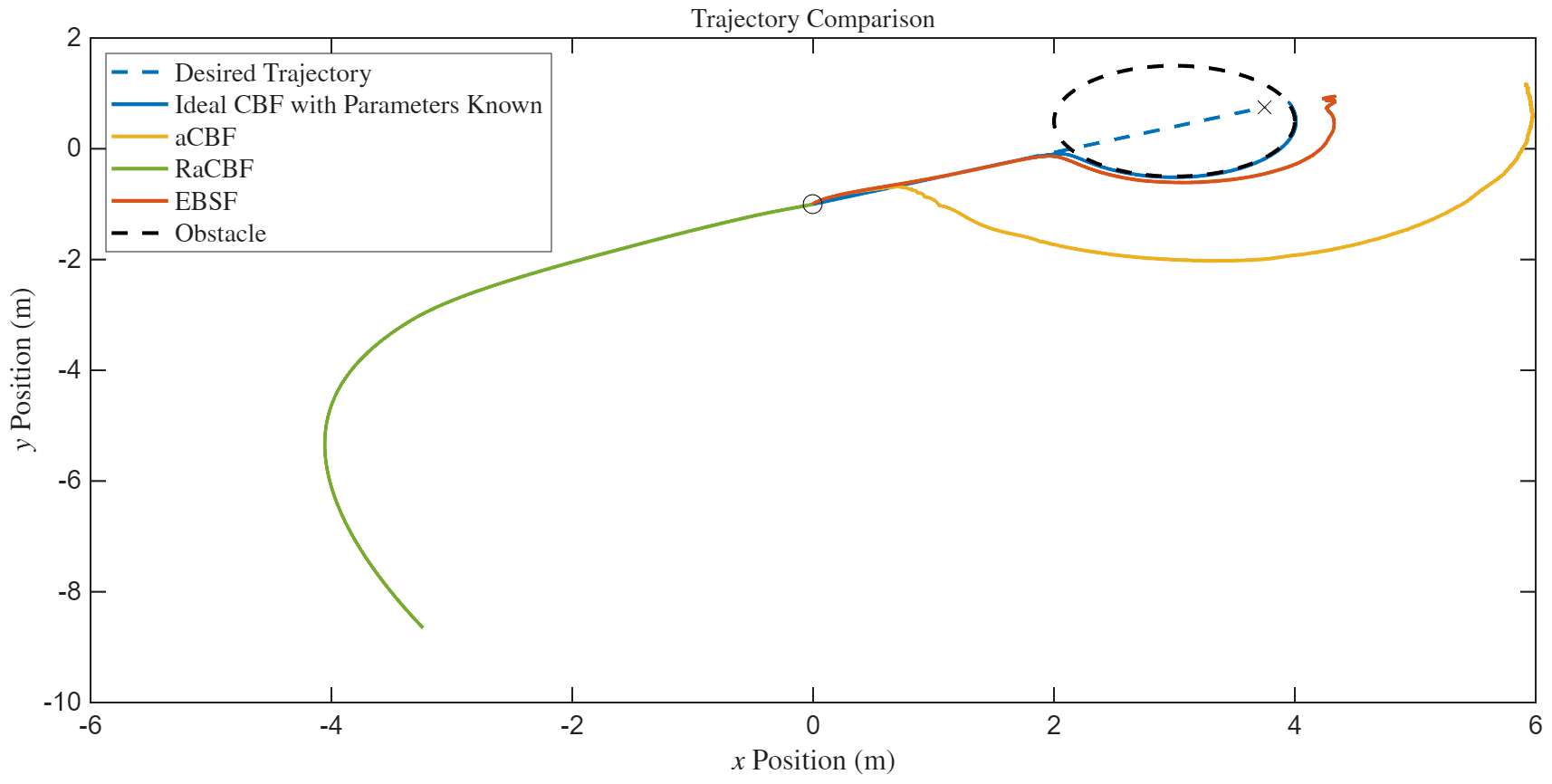}
    \caption{Trajectories resulting from aCBF \cite{taylor2020aCBFs}, RaCBF \cite{lopez2021RaCBFs}, and EBSF for Problem \ref{prob:input_matrix} compared to the ideal controller with all parameters known.}
    \label{fig:all_trajectories_2}
\end{figure}

\begin{figure}
    \centering
    \includegraphics[width=3.5in]{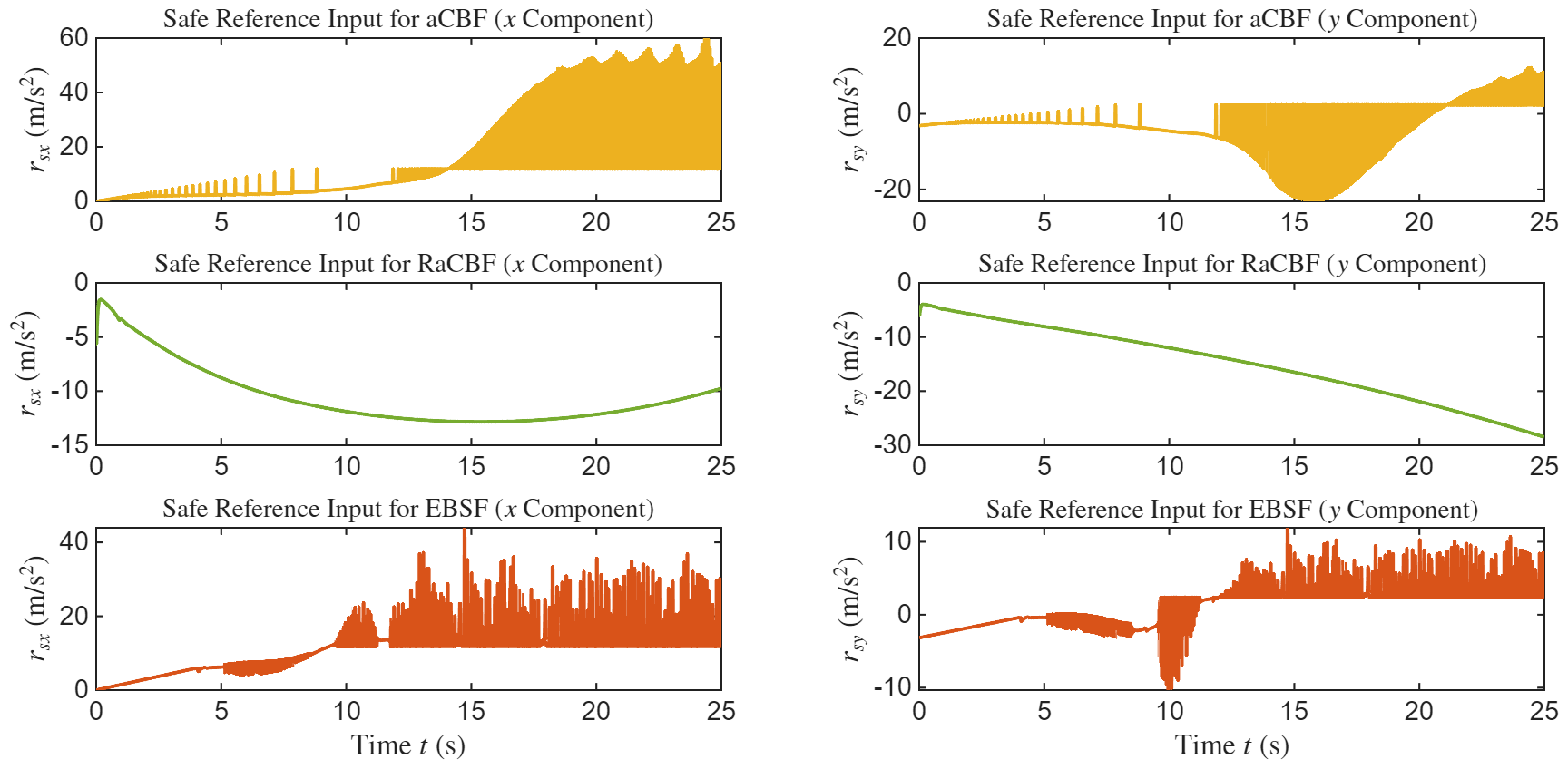}
    \caption{Reference inputs $\vec{r}_s$ from aCBF \cite{taylor2020aCBFs}, RaCBF \cite{lopez2021RaCBFs}, and EBSF for Problem \ref{prob:input_matrix}.}
    \label{fig:all_inputs_2}
    % \begin{minipage}{\columnwidth}
    %     \centering
    %     \includegraphics[width=3.5in]{Figures/all_trajectories_prob_2.png}
    %     \caption{Trajectories resulting from aCBF \cite{taylor2020aCBFs}, RaCBF \cite{lopez2021RaCBFs}, and EBSF for Problem \ref{prob:input_matrix} compared to the ideal controller with all parameters known. \\}
    %     \label{fig:all_trajectories_2}
    % \end{minipage}
    % \begin{minipage}{\columnwidth}
    %     \centering
    %     \includegraphics[width=3.5in]{Figures/all_ref_inputs_prob_2.png}
    %     \caption{Reference inputs $\vec{r}_s$ from aCBF \cite{taylor2020aCBFs}, RaCBF \cite{lopez2021RaCBFs}, and EBSF for Problem \ref{prob:input_matrix}.}
    %     \label{fig:all_inputs_2}
    % \end{minipage}
\end{figure}
\else
\begin{figure}
    \centering
    \begin{minipage}{3.5in}
        \centering
        \includegraphics[width=3.5in]{Figures/all_trajectories_prob_2.png}
        \caption{Trajectories resulting from aCBF \cite{taylor2020aCBFs}, RaCBF \cite{lopez2021RaCBFs}, and EBSF for Problem \ref{prob:input_matrix} compared to the ideal controller with all parameters known.}
        \label{fig:all_trajectories_2}
    \end{minipage}
    \hfill
    \begin{minipage}{3.5in}
        \centering
        \includegraphics[width=3.5in]{Figures/all_ref_inputs_prob_2.png}
        \caption{Reference inputs $\vec{r}_s$ from aCBF \cite{taylor2020aCBFs}, RaCBF \cite{lopez2021RaCBFs}, and EBSF for Problem \ref{prob:input_matrix}.}
        \label{fig:all_inputs_2}
    \end{minipage}
\end{figure}
\fi

% \begin{figure}[h]
%     \includegraphics[width=3.5in]{Figures/aCBF_trajectory_prob_2.png}
%     \caption{Trajectory resulting from aCBF-based controller \cite{taylor2020aCBFs} for Problem 2 compared to the ideal controller with all parameters known.}
%     \label{fig:acbf_trajectory_2}
% \end{figure}

% \begin{figure}[h]
%     \includegraphics[width=3.5in]{Figures/aCBF_ref_input_prob_2.png}
%     \caption{Reference input $\vec{r}_s$ from aCBF \cite{taylor2020aCBFs} for Problem 2.}
%     \label{fig:acbf_input_2}
% \end{figure}

% \begin{figure}[h]
%     \includegraphics[width=3.5in]{Figures/RaCBF_trajectory_prob_2.png}
%     \caption{Trajectory resulting from RaCBF-based controller \cite{lopez2021RaCBFs} for Problem 2 compared to the ideal controller with all parameters known.}
%     \label{fig:racbf_trajectory_2}
% \end{figure}

% \begin{figure}[h]
%     \includegraphics[width=3.5in]{Figures/RaCBF_ref_input_prob_2.png}
%     \caption{Reference input $\vec{r}_s$ from RaCBF \cite{lopez2021RaCBFs} for Problem 2.}
%     \label{fig:racbf_input_2}
% \end{figure}

% \begin{figure}[h]
%     \includegraphics[width=3.5in]{Figures/EBSF_trajectory_prob_2.png}
%     \caption{Trajectory resulting from EBSF-based controller for Problem 2 compared to the ideal controller with all parameters known.}
%     \label{fig:ebcg_trajectory_2}
% \end{figure}

% \begin{figure}[h]
%     \includegraphics[width=3.5in]{Figures/EBSF_ref_input_prob_2.png}
%     \caption{Reference input $\vec{r}_s$ from EBSF for Problem 2.}
%     \label{fig:ebcg_input_2}
% \end{figure}

\subsection{Discussion: EBSB and EBSF}

Figs. \ref{fig:all_trajectories_1}-\ref{fig:all_inputs_2} demonstrate the relative advantages and drawbacks of EBSB and EBSF.
The primary advantage of EBSB over EBSF is its avoidance of the high-frequency spiking in $\vec{r}_s$, visible starting at around 14 seconds in Fig. \ref{fig:all_inputs_1} and apparent at a much higher magnitude in Fig. \ref{fig:all_inputs_2}. This jitter-like behavior exhibited by EBSF is an artifact of the zero-order hold used in our simulations. Under a zero-order hold, when $\alpha_{ebsf}(\|\vec{e}_x\|)$ becomes very small and adaptation is not yet complete, $\beta_{ebsf}$ can go from zero to one or vice versa in the span of a single time step, causing the input produced by EBSF to change rapidly. When the time step is decreased, we find that the jitter shrinks in magnitude. However, any practical application of EBSF would be in a digital control system with a zero-order hold, and thus future work is needed to avoid jitter in digital control systems with EBSF. In the following section, we will explore the use of a data-driven method to heuristically reduce the jitter.

While EBSB produces a smoother control input, EBSF is applicable to a wider range of problems, and even in Problem \ref{prob:unforced_dynamics} it is typically less conservative in practice, as exhibited in Fig. \ref{fig:all_trajectories_1}. Furthermore, EBSB may experience oscillations near desired setpoints, as exhibited in Figs. \ref{fig:all_trajectories_1}-\ref{fig:all_inputs_1} starting around 13 seconds. These oscillations are not due to the zero-order hold; rather, they result from nonlinearity in the closed-loop adaptive system. They do vanish asymptotically, but may persist for long periods of time. In the next section, we will employ Set Membership Identification to help damp out the oscillations in EBSB and reduce the jitter in EBSF.

\if \numsides 2
\begin{figure}
    \centering
    \includegraphics[width=3.5in]{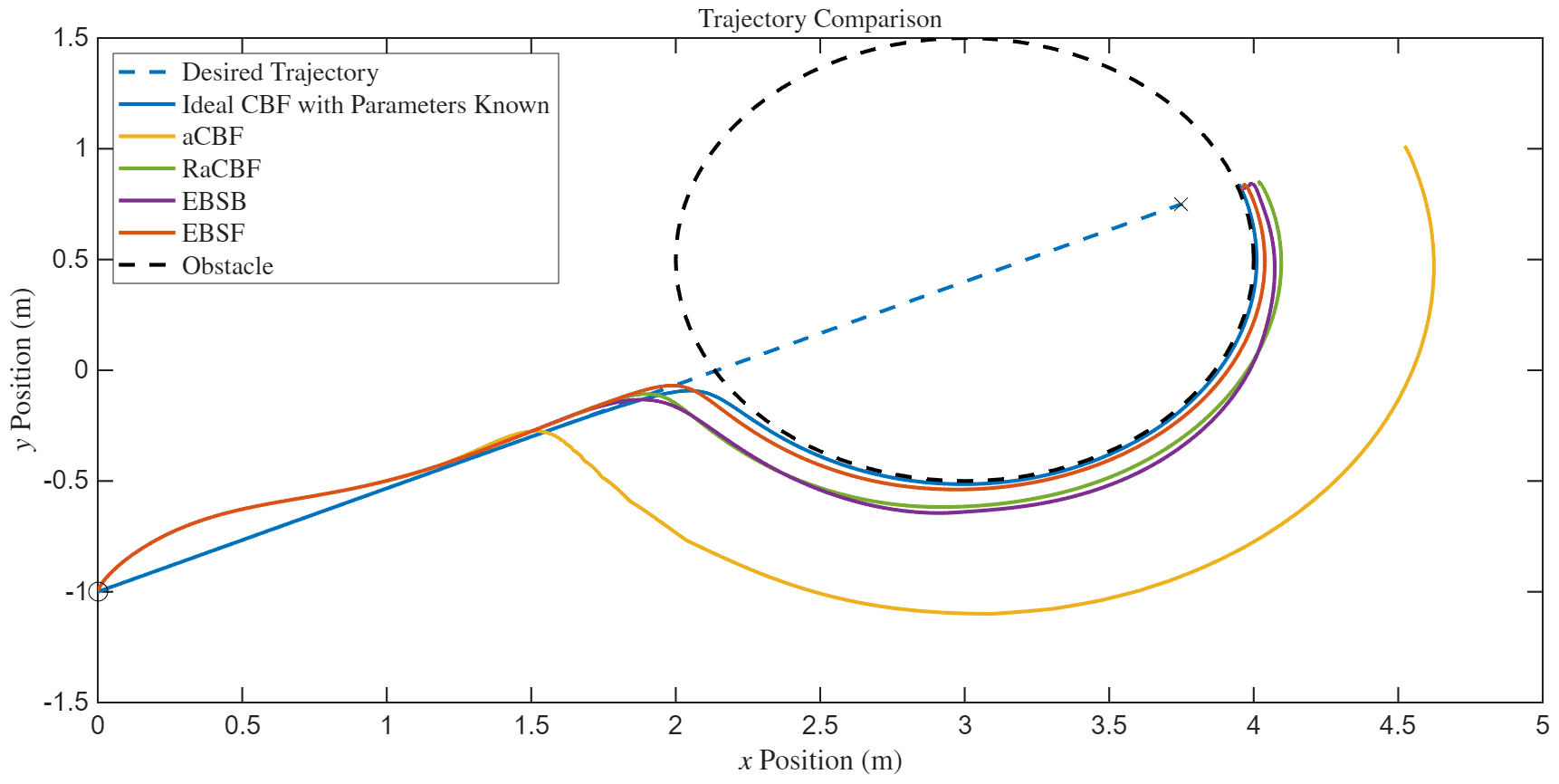}
    \caption{Trajectories resulting from aCBF \cite{taylor2020aCBFs}, RaCBF \cite{lopez2021RaCBFs}, EBSB, and EBSF with SMID for Problem \ref{prob:unforced_dynamics} compared to the ideal controller with all parameters known.}
    \label{fig:all_trajectories_1_smid}
\end{figure}

\begin{figure}
    \centering
    \includegraphics[width=3.5in]{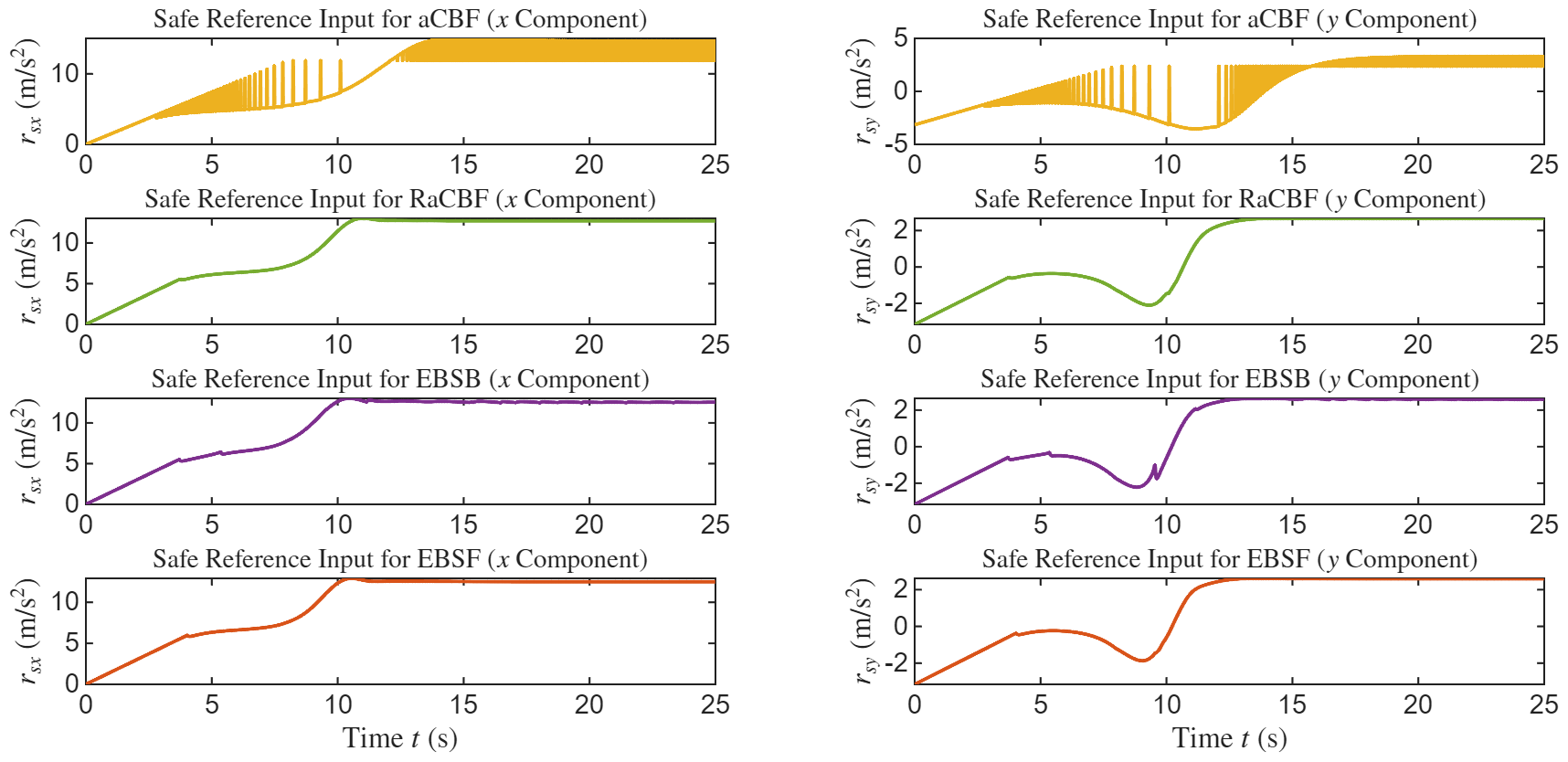}
    \caption{Reference inputs $\vec{r}_s$ from aCBF \cite{taylor2020aCBFs}, RaCBF \cite{lopez2021RaCBFs}, EBSB, and EBSF with SMID for Problem \ref{prob:unforced_dynamics}.}
    \label{fig:all_inputs_1_smid}
    % \begin{minipage}{\columnwidth}
    %     \centering
    %     \includegraphics[width=3.5in]{Figures/all_trajectories_prob_1_smid.png}
    %     \caption{Trajectories resulting from aCBF \cite{taylor2020aCBFs}, RaCBF \cite{lopez2021RaCBFs}, EBSB, and EBSF with SMID for Problem \ref{prob:unforced_dynamics} compared to the ideal controller with all parameters known. \\}
    %     \label{fig:all_trajectories_1_smid}
    % \end{minipage}
    % \begin{minipage}{\columnwidth}
    %     \centering
    %     \includegraphics[width=3.5in]{Figures/all_ref_inputs_prob_1_smid.png}
    %     \caption{Reference inputs $\vec{r}_s$ from aCBF \cite{taylor2020aCBFs}, RaCBF \cite{lopez2021RaCBFs}, EBSB, and EBSF with SMID for Problem \ref{prob:unforced_dynamics}.}
    %     \label{fig:all_inputs_1_smid}
    % \end{minipage}
\end{figure}
\else
\begin{figure}
    \centering
    \begin{minipage}{3.5in}
        \centering
        \includegraphics[width=3.5in]{Figures/all_trajectories_prob_1_smid.png}
        \caption{Trajectories resulting from aCBF \cite{taylor2020aCBFs}, RaCBF \cite{lopez2021RaCBFs}, EBSB, and EBSF with SMID for Problem \ref{prob:unforced_dynamics} compared to the ideal controller with all parameters known.}
        \label{fig:all_trajectories_1_smid}
    \end{minipage}
    \hfill
    \begin{minipage}{3.5in}
        \centering
        \includegraphics[width=3.5in]{Figures/all_ref_inputs_prob_1_smid.png}
        \caption{Reference inputs $\vec{r}_s$ from aCBF \cite{taylor2020aCBFs}, RaCBF \cite{lopez2021RaCBFs}, EBSB, and EBSF with SMID for Problem \ref{prob:unforced_dynamics}.}
        \label{fig:all_inputs_1_smid}
    \end{minipage}
\end{figure}
\fi

\subsection{Performance Enhancement with Set Membership Identification} \label{subsec:simulations_SMID}

Set Membership Identification (SMID) was proposed in \cite{lopez2021RaCBFs} as a method of reducing the conservatism of RaCBF in practice. In the preceeding simulations, SMID was omitted. In this section, we add SMID to RaCBF following the approach in \cite{lopez2021RaCBFs},
\if \arxivversion 1
and we add SMID to EBSB and EBSF following the approaches in \ref{app:SMID}. Details of our SMID implementation can be found in Appendix \ref{app:SMID_implementation}.
\else
and we add SMID to EBSB and EBSF following the approaches in Appendix V in \cite{fisher2025EBSBarXiv}. Details of our SMID implementation can be found in Appendix VI-D in \cite{fisher2025EBSBarXiv}.
\fi

Figs. \ref{fig:all_trajectories_1_smid} and \ref{fig:all_inputs_1_smid} show the trajectories and reference inputs respectively of aCBF, RaCBF, EBSB, and EBSF augmented with SMID in the same simulation scenario as in Section \ref{subsec:simulations_problem_1}. SMID greatly reduces the conservatism of RaCBF, and also further reduces the conservatism of EBSB and EBSF, enabling EBSB and EBSF to still provide superior reference tracking. Furthermore, because the bounds on the uncertainty are decreased online, the oscillations in EBSB and jitter in EBSF are greatly reduced in magnitude relative to Fig. \ref{fig:all_inputs_1}.

Figs. \ref{fig:all_trajectories_2_smid} and \ref{fig:all_inputs_2_smid} show the trajectories and reference inputs respectively of aCBF, RaCBF, EBSB, and EBSF augmented with SMID in the same simulation scenario as in Section \ref{subsec:simulations_problem_2}. The same result is apparent: both RaCBF and EBSF are greatly reduced in conservatism, with EBSF still the least conservative approach. In this simulation, the jitter in EBSF is still somewhat present, but is again decreased in magnitude relative to Fig. \ref{fig:all_inputs_2}.

\if \numsides 2
\begin{figure}
    \centering
    \includegraphics[width=3.5in]{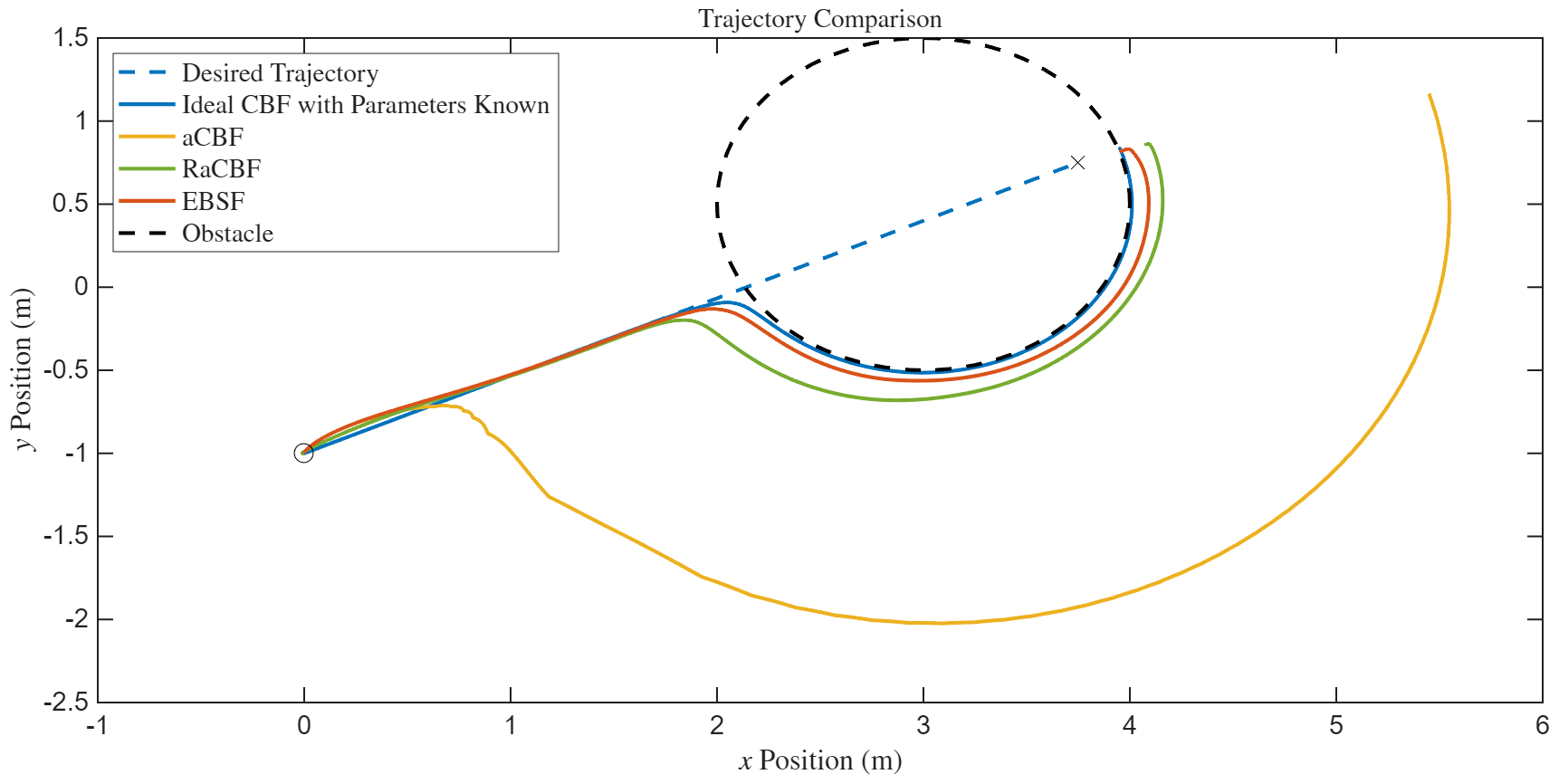}
    \caption{Trajectories resulting from aCBF \cite{taylor2020aCBFs}, RaCBF \cite{lopez2021RaCBFs}, and EBSF with SMID for Problem \ref{prob:input_matrix} compared to the ideal controller with all parameters known.}
    \label{fig:all_trajectories_2_smid}
\end{figure}

\begin{figure}
    \centering
    \includegraphics[width=3.5in]{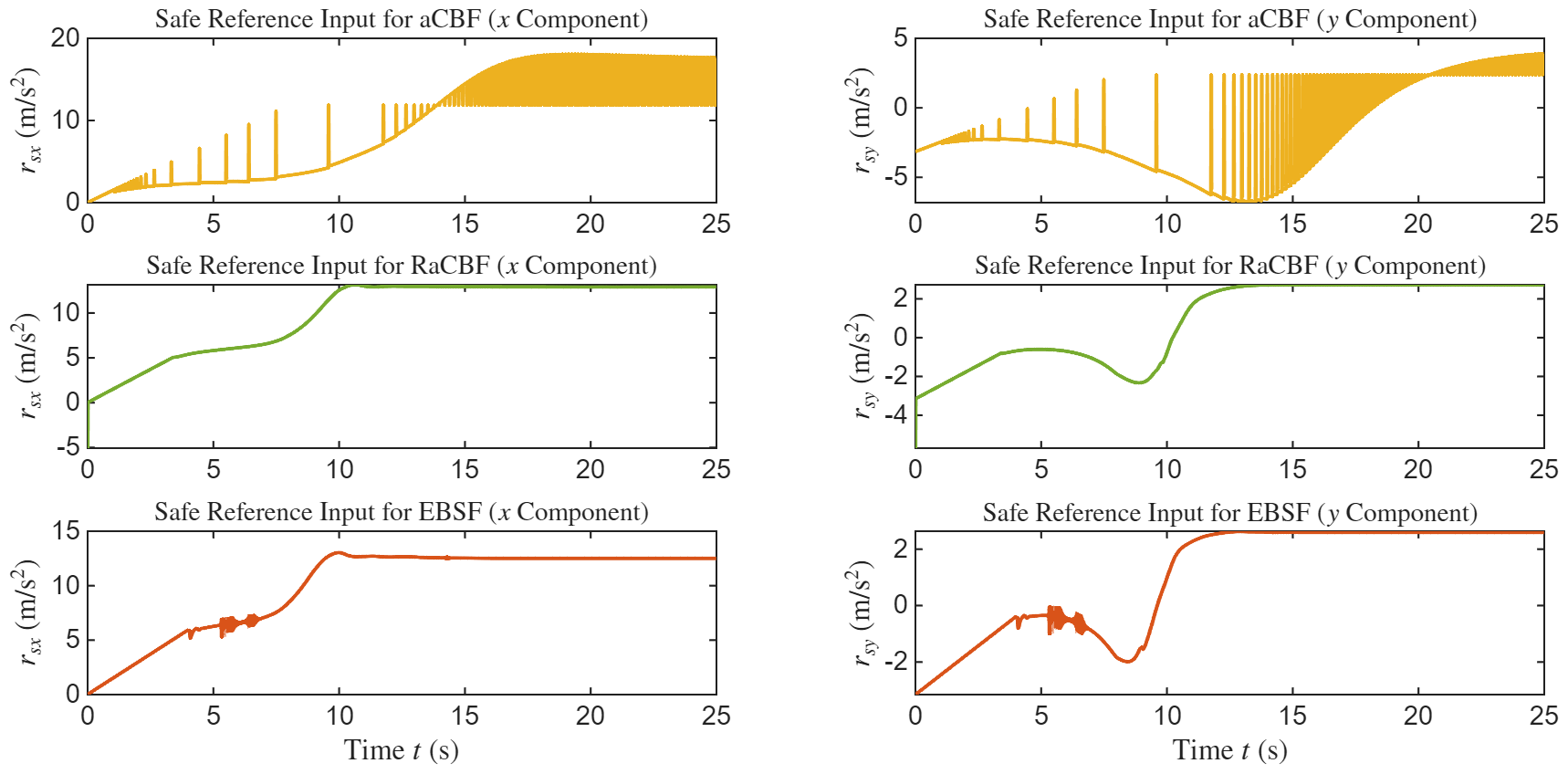}
    \caption{Reference inputs $\vec{r}_s$ from aCBF \cite{taylor2020aCBFs}, RaCBF \cite{lopez2021RaCBFs}, and EBSF with SMID for Problem \ref{prob:input_matrix}.}
    \label{fig:all_inputs_2_smid}
    % \begin{minipage}{\columnwidth}
    %     \centering
    %     \includegraphics[width=3.5in]{Figures/all_trajectories_prob_2_smid.png}
    %     \caption{Trajectories resulting from aCBF \cite{taylor2020aCBFs}, RaCBF \cite{lopez2021RaCBFs}, and EBSF with SMID for Problem \ref{prob:input_matrix} compared to the ideal controller with all parameters known. \\}
    %     \label{fig:all_trajectories_2_smid}
    % \end{minipage}
    % \begin{minipage}{\columnwidth}
    %     \centering
    %     \includegraphics[width=3.5in]{Figures/all_ref_inputs_prob_2_smid.png}
    %     \caption{Reference inputs $\vec{r}_s$ from aCBF \cite{taylor2020aCBFs}, RaCBF \cite{lopez2021RaCBFs}, and EBSF with SMID for Problem \ref{prob:input_matrix}.}
    %     \label{fig:all_inputs_2_smid}
    % \end{minipage}
\end{figure}
\else
\begin{figure}
    \centering
    \begin{minipage}{3.5in}
        \centering
        \includegraphics[width=3.5in]{Figures/all_trajectories_prob_2_smid.png}
        \caption{Trajectories resulting from aCBF \cite{taylor2020aCBFs}, RaCBF \cite{lopez2021RaCBFs}, and EBSF with SMID for Problem \ref{prob:input_matrix} compared to the ideal controller with all parameters known.}
        \label{fig:all_trajectories_2_smid}
    \end{minipage}
    \hfill
    \begin{minipage}{3.5in}
        \centering
        \includegraphics[width=3.5in]{Figures/all_trajectories_prob_2_smid.png}
        \caption{Trajectories resulting from aCBF \cite{taylor2020aCBFs}, RaCBF \cite{lopez2021RaCBFs}, and EBSF with SMID for Problem \ref{prob:input_matrix} compared to the ideal controller with all parameters known.}
        \label{fig:all_inputs_2_smid}
    \end{minipage}
\end{figure}
\fi

\section{Conclusions and Future Work} \label{sec:conclusions}
% This paper proposes two new control designs for adaptive control with state constraints: EBSB, which applies for plants with uncertain parameters in the unforced dynamics, and EBSF, which applies for plants with additional uncertainties in the input matrix. Both approaches are proven to render the uncertain plant safe for all time, with conservatism that decays to near-zero asymptotically with no requirements on excitation. Simulations show that both approaches are significantly less conservative than comparable prior approaches.

% As previously discussed, methods for tuning $\alpha_{ebsf}$ or otherwise reducing jitter in EBSF are an important topic for future work. Additionally, this work assumed a feedback-linearizable plant with matched uncertainties, and future work could extend EBSB and EBSF to more general dynamical systems. Finally, in this work, we assumed the ability to apply any control input. How to apply EBSB and EBSF in the presence of input magnitude and/or rate saturation is an open question, and important to explore in future work.

In this paper, we consider safety and stability of adaptive control of a class of feedback linearizable plants with matched parametric uncertainties whose states are accessible, subject to a state constraint. The adaptive control architectures proposed are shown to guarantee stability, ensure control performance, and maintain safety even with the parametric uncertainties. Two problems are considered, where the first includes parametric uncertainties in the system Jacobian, while the second assumes additional uncertainties in the input matrix. In both cases, the control barrier function is assumed to have an arbitrary relative degree. No excitation conditions are imposed on the command signal. Simulation results demonstrate the non-conservatism of all of the theoretical developments.

Considerable work remains in the direction of ensuring simultaneous satisfaction of control performance and safety. How the proposed methods can be extended to plants with output feedback, general nonlinear systems that are not feedback linearizable, and multi-inputs are important directions for future work. Accommodation of input constraints due to magnitude and rate limits needs to be addressed as well. More immediately, better tuning methods for $\alpha_{ebsf}$ so as to reduce jitter are an important topic for future work.

%----------------------------------------------------------------
%----------------------------------------------------------------
% APPENDIX
%----------------------------------------------------------------
%----------------------------------------------------------------

\if \arxivversion 1
    \appendix
    \renewcommand{\thesection}{Appendix \Roman{section}}
    \renewcommand{\thesubsection}{\Roman{section}.\Alph{subsection}}
\else
    \appendices
\fi

% \section{Tangent Cone Proofs}
\section{}

\subsection{Proof of Lemma \ref{lem:C_finitely_many_g}}
\label{app:C_finitely_many_g}

We can first say that the surface $\partial\calC$ is continuous, as any discontinuities in $\partial\calC$ would result in $\calC$ being non-closed. Additionally, we can say that $\partial\calC$ is piecewise smooth, as a set whose boundary is continuous but not piecewise smooth must either be non-convex or have zero $n$-dimensional volume. Finally, it is easy to see that a closed, convex set $\calC \subset \bbR^n$ whose boundary is continuous and piecewise smooth can be expressed as the intersection of zero-superlevel sets of smooth functions $g_k : \bbR^n \to \bbR$, $k = 1, 2, \dots$, with each smooth piece of $\partial\calC$ corresponding to a function $g_k$.

\subsection{Proof of Lemma \ref{lem:projection_to_tangent_cone}}
\label{app:projection_to_tangent_cone}

First, suppose that $\vec{v} \in \mathrm{int}(\calC)$. Then, $\proj_{T_\calC(\vec{v})}[\vec{z}] = \vec{z}$, and the inequality is trivially satisfied.

Second, suppose that $\vec{v} \in \partial C$. Then, define $\bbK \subset \bbN$ such that $g_k(\vec{v}) = 0\ \forall k \in \bbK$, and we have
\begin{equation} \label{eqn:projection_to_tangent_cone_QP}
    \proj_{T_\calC(\vec{v})}[\vec{z}] = \argminbelow_{\vec{y} \in \bbR^n} \|\vec{y} - \vec{z}\|^2\ \mathrm{s.t.}\ \nabla g_k(\vec{v})^\top\vec{y} \geq 0\ \forall k \in \bbK.
\end{equation}
Forming the Lagrangian and applying the KKT conditions, the solution to \eqref{eqn:projection_to_tangent_cone_QP} can be written as
\begin{equation}
    \proj_{T_\calC(\vec{v})}[\vec{z}] = \vec{z} + \sum_{k \in \bbK} \lambda_k\nabla g_k(\vec{v})
\end{equation}
for some $\lambda_k \geq 0$. Finally, it follows from the fact that $\calC$ is convex that
\begin{equation}
    (\vec{w} - \vec{v})^\top\nabla g_k(\vec{v}) \geq 0\ \forall \vec{w} \in \calC, k \in \bbK,
\end{equation}
and we obtain
\if \numsides 2
\begin{align}
    (\vec{v} - \vec{w})^\top\proj_{T_\calC(\vec{v})}[\vec{z}] &= (\vec{v} - \vec{w})^\top\vec{z} \nonumber \\
    &\indenti{=}\ + \sum_{k \in \bbK} \lambda_k(\vec{v} - \vec{w})^\top\nabla g_k(\vec{v}) \nonumber \\
    &\leq (\vec{v} - \vec{w})^\top\vec{z}.
\end{align}
\else
\begin{align}
    (\vec{v} - \vec{w})^\top\proj_{T_\calC(\vec{v})}[\vec{z}] &= (\vec{v} - \vec{w})^\top\vec{z} + \sum_{k \in \bbK} \lambda_k(\vec{v} - \vec{w})^\top\nabla g_k(\vec{v}) \leq (\vec{v} - \vec{w})^\top\vec{z}.
\end{align}
\fi
% \section{Analysis of Problem \ref{prob:unforced_dynamics}}
\section{}

\subsection{Proof of Theorem \ref{thm:lyapunov_no_input_uncertainty}}
\label{app:lyapunov_no_input_uncertainty}

Consider the functions $g_{\Theta,1}, \dots, g_{\Theta,N}$ describing $\Theta$ as in Lemma \ref{lem:C_finitely_many_g}. Because $\dot{\hat{\theta}}(t) \in T_\Theta(\hat{\theta}(t))\ \forall t \geq 0$, we know that whenever $g_{\Theta,k}(\hat{\theta}) = 0$ for any $k$, we have $\frac{d}{dt}g_{\Theta,k}(\hat{\theta}) = \nabla g_{\Theta,k}(\hat{\theta})^\top\dot{\hat{\theta}} \geq 0$. Result (a) follows from $\hat{\theta}(0) \in \Theta$.

We propose the following Lyapunov function candidate, which is standard in adaptive control \cite{Narendra05}:
\begin{equation} \label{eqn:lyapunov_function}
    V = \vec{e}_x^\top P\vec{e}_x + \frac{\|\tilde{\theta}\|^2}{\gamma}.
\end{equation}
Since $\theta_*$ is assumed constant, we have $\dot{\tilde{\theta}} = \dot{\hat{\theta}}$. Additionally, since $\hat{\theta}, \theta_* \in \Theta$ and $\dot{\hat{\theta}} \in T_\Theta(\hat{\theta})$, from Lemma \ref{lem:projection_to_tangent_cone} and \eqref{eqn:theta_adaptive_law}, we have
\begin{equation} \label{eqn:theta_adaptive_law_property}
    \tilde{\theta}^\top\dot{\hat{\theta}} \leq \tilde{\theta}^\top(-\gamma F(\vec{x})^\top B^\top P\vec{e}_x).
\end{equation}
Then, using \eqref{eqn:lyapunov_eqn}, \eqref{eqn:error_model}, and \eqref{eqn:theta_adaptive_law_property}, one can show through straightforward algebra that
\begin{equation}
    \dot{V} \leq -\vec{e}_x^\top Q\vec{e}_x \leq 0
\end{equation}
and thus that $V(t) \leq V(0)\ \forall t \geq 0$. Result (b) is immediate.

Now, if $\vec{r}_s(t)$ is bounded, then from \eqref{eqn:reference_model} and Hurwitzness of $A_m$, we know that $\vec{x}_m(t)$ is bounded. It follows from boundedness of $\vec{e}_x(t)$ that $\vec{x}(t)$ is bounded, thus giving result (c). Finally, from \eqref{eqn:error_model}, boundedness of $\vec{e}_x(t)$ and $\vec{x}(t)$ guarantees that $\dot{\vec{e}}_x(t)$ is bounded. Result (d) follows from Barbalat's lemma \cite{Narendra05}.

\subsection{Proof of Lemma \ref{lem:Lipschitz_W}} \label{app:Lipschitz_W}

Consider a line given by $\vec{x}_2 + s(\vec{x}_1 - \vec{x}_2)$, as $s$ varies from 0 to 1. Then, $\frac{d\bar{h}}{ds} = \frac{\partial\bar{h}}{\partial\vec{x}}|_{\vec{x}_2 + s(\vec{x}_1 - \vec{x}_2)}(\vec{x}_1 - \vec{x}_2)$. Using the property $\|\frac{\partial\bar{h}}{\partial\vec{x}}\| \leq \kappa$ on $\calC$ and the fact that $\calC$ is convex,
\begin{align}
    |\bar{h}(\vec{x}_1) - \bar{h}(\vec{x}_2)| &= \left|\int_0^1 \frac{\partial\bar{h}}{\partial\vec{x}}\Big|_{\vec{x}_2 + s(\vec{x}_1 - \vec{x}_2)}(\vec{x}_1 - \vec{x}_2)\d s\right| \nonumber \\
    % &\leq \|\vec{x}_1 - \vec{x}_2\|\int_0^1 \left\|\frac{\partial\bar{h}}{\partial\vec{x}}\Big|_{\vec{x}_2 + s(\vec{x}_1 - \vec{x}_2)}\right\|\d s \nonumber \\
    &\leq \|\vec{x}_1 - \vec{x}_2\|\sup_{s \in [0, 1]} \left\|\frac{\partial\bar{h}}{\partial\vec{x}}\Big|_{\vec{x}_2 + s(\vec{x}_1 - \vec{x}_2)}\right\| \nonumber \\
    &\leq \kappa\|\vec{x}_1 - \vec{x}_2\|.
\end{align}

% \subsection{Proof of Lemma \ref{lem:relative_degree_W}} \label{app:relative_degree_W}

% From the definition of relative degree $r$ in Definition \ref{def:relative_degree} applied to the reference model in \eqref{eqn:reference_model}, we obtain
% \begin{equation} \label{eqn:relative_degree_LTI}
%     \frac{\partial h}{\partial\vec{x}}B = \frac{\partial h}{\partial\vec{x}}A_mB = \cdots = \frac{\partial h}{\partial\vec{x}}A_m^{r-2}B = 0\ \forall \vec{x} \in \bbR^n.
% \end{equation}
% Now, note that the definition of $\bbI_h$ in \eqref{eqn:h_image} implies that for any linear combination of the vectors $\vec{w}_j$, there exists $\vec{x} \in \bbR^n$ where $\frac{\partial h}{\partial\vec{x}}$ is parallel to that linear combination. Additionally, from \eqref{eqn:C_h}, we see that each row of $C_h$ is a linear combination of the vectors $\vec{w}_j^\top$. From these two observations, we conclude that for each row of $C_h$, there exists $\vec{x} \in \bbR^n$ where $\frac{\partial h}{\partial\vec{x}}$ is parallel to that row. Finally, from this conclusion and \eqref{eqn:relative_degree_LTI}, it immediately follows that
% \begin{equation} \label{eqn:C_h_kernel_property}
%     C_hB = C_hA_mB = \cdots = C_hA_m^{r-2}B = 0.
% \end{equation}
% The claim follows from tracing the recursion in \eqref{eqn:W_0}-\eqref{eqn:W_i} up through $W_{r-2}$.

\subsection{Proof of Proposition \ref{prop:reference_safe_no_input_uncertainty}}
\label{app:reference_safe_no_input_uncertainty}

Define the time-varying signal
\begin{equation} \label{eqn:H_m}
    H_m(t) = \bar{h}(\vec{x}_m(t)) - \frac{\delta}{\alpha_r}.
\end{equation}
Then, using \eqref{eqn:reference_model} and \eqref{eqn:governor_no_input_uncertainty} with the fact that $\Delta_{ebsb} \geq 0$,
\begin{align}
    \dot{H}_m &= \frac{\partial\bar{h}}{\partial\vec{x}}\Big|_{\vec{x}_m}(A_m\vec{x}_m + B\vec{r}_s) \nonumber \\
    &\geq -\alpha_r\bar{h}(\vec{x}_m) + \delta \\
    &= -\alpha_rH_m. \label{eqn:Hdot_m}
\end{align}
The initial condition requirement implies that $H_m(0) \geq 0$. It follows immediately from Nagumo's theorem \cite{Nagumo_1942} that $H_m(t) \geq 0\ \forall t \geq 0$. Additionally, from \eqref{eqn:H_m} and the construction of $\bar{h}$, we have
\begin{equation} \label{eqn:h_mr>0}
    \begin{gathered}
        H_m(t) \geq 0 \implies \bar{h}(\vec{x}_m(t)) \geq \frac{\delta}{\alpha_r} \implies h_r(\vec{x}_m(t)) > 0.
     \end{gathered}
\end{equation}
If $r = 1$, the proof is complete.

Now, suppose that $r > 1$, and consider $h_i, i \in [1, r-1]$ as defined in \eqref{eqn:hocbf_base_case}-\eqref{eqn:hocbf_recursion}. The definition of relative degree $r$ implies that $\frac{\partial h_i}{\partial\vec{x}}|_{\vec{x}_m}B = 0\ \forall i < r$. Thus, using \eqref{eqn:reference_model} and \eqref{eqn:hocbf_recursion} and recalling that $A_m = A + BK$, we have
\begin{align}
    \frac{d}{dt}h_i(\vec{x}_m) &= \frac{\partial h_i}{\partial\vec{x}}\Big|_{\vec{x}_m}(A_m\vec{x}_m + B\vec{r}_s) \nonumber \\
    &= \frac{\partial h_i}{\partial\vec{x}}\Big|_{\vec{x}_m}A\vec{x}_m + \frac{\partial h_i}{\partial\vec{x}}\Big|_{\vec{x}_m}B(K\vec{x}_m + \vec{r}_s) \nonumber \\
    &= h_{i+1}(\vec{x}_m) - \alpha_ih_i(\vec{x}_m). \label{eqn:hdot_mi}
\end{align}
The initial condition requirement states that $h_i(\vec{x}_m(0)) \geq 0\ \forall i \in [1, r]$, and we have already shown $h_r(\vec{x}_m(t)) \geq 0\ \forall t \geq 0$. Therefore, by recursion and Nagumo's theorem \cite{Nagumo_1942},
\if \numsides 2
\begin{gather}
    \frac{d}{dt}h_{r-1}(\vec{x}_m) \geq -\alpha_{r-1}h_{r-1}(\vec{x}_m) \implies \nonumber \\
    h_{r-1}(\vec{x}_m(t)) \geq 0\ \forall t \geq 0 \implies \nonumber \\
    \vdots \nonumber \\
    \frac{d}{dt}h_1(\vec{x}_m) \geq -\alpha_1h_1(\vec{x}_m) \implies h_1(\vec{x}_m(t)) \geq 0\ \forall t \geq 0. \label{eqn:h_m>0}
\end{gather}
\else
\begin{gather}
    \frac{d}{dt}h_{r-1}(\vec{x}_m) \geq -\alpha_{r-1}h_{r-1}(\vec{x}_m) \implies h_{r-1}(\vec{x}_m(t)) \geq 0\ \forall t \geq 0 \implies \nonumber \\
    \vdots \nonumber \\
    \frac{d}{dt}h_1(\vec{x}_m) \geq -\alpha_1h_1(\vec{x}_m) \implies h_1(\vec{x}_m(t)) \geq 0\ \forall t \geq 0. \label{eqn:h_m>0}
\end{gather}
\fi
Noting that $h_1 = h$, the proof is complete.

\subsection{Proof of Theorem \ref{thm:safety_objective_no_input_uncertainty}}
\label{app:safety_objective_no_input_uncertainty}

\noindent\textbf{Proof of Claim (a)}

Define the time-varying signal
\begin{equation}
    H_e(t) = \bar{h}(\vec{x}_m(t))^2 - \kappa^2\|\vec{e}_x(t)\|^2. \label{eqn:H_e}
\end{equation}
Then, using \eqref{eqn:reference_model}, \eqref{eqn:error_model}, \eqref{eqn:governor_no_input_uncertainty}-\eqref{eqn:e_xF}, the fact that $\bar{h}(\vec{x}_m) > 0$ from Proposition \ref{prop:reference_safe_no_input_uncertainty}, and the fact that $\theta_* \in \Theta$ from Assumption \ref{asn:theta_in_set}, we obtain
\if \numsides 2
\begin{align}
    \dot{H}_e &= 2\bar{h}(\vec{x}_m)\frac{\partial\bar{h}}{\partial\vec{x}}\Big|_{\vec{x}_m}(A_m\vec{x}_m + B\vec{r}_s) \nonumber \\
    &\indenti{=}\ - 2\kappa^2\vec{e}_x^\top(A_m\vec{e}_x + BF(\vec{x})\tilde{\theta}) \nonumber \\
    &\geq 2\bar{h}(\vec{x}_m)(-\alpha_r\bar{h}(\vec{x}_m) + \Delta_{ebsb}) - 2\kappa^2\vec{e}_x^\top A_m\vec{e}_x \nonumber \\
    &\indenti{=}\ - 2\kappa^2\vec{e}_{xF}^\top\hat{\theta} + 2\kappa^2\vec{e}_{xF}^\top\theta_* \nonumber \\
    &\geq -2\alpha_rH_e + 2\bar{h}(\vec{x}_m)\Delta_{ebsb} - \kappa^2\vec{e}_x^\top W\vec{e}_x \nonumber \\
    &\indenti{=}\ - 2\kappa^2\vec{e}_{xF}^\top\hat{\theta} + 2\kappa^2\inf_{\theta \in \Theta}\vec{e}_{xF}^\top\theta \nonumber \\
    &\geq -2\alpha_rH_e. \label{eqn:Hdot_e}
\end{align}
\else
\begin{align}
    \dot{H}_e &= 2\bar{h}(\vec{x}_m)\frac{\partial\bar{h}}{\partial\vec{x}}\Big|_{\vec{x}_m}(A_m\vec{x}_m + B\vec{r}_s) - 2\kappa^2\vec{e}_x^\top(A_m\vec{e}_x + BF(\vec{x})\tilde{\theta}) \nonumber \\
    &\geq 2\bar{h}(\vec{x}_m)(-\alpha_r\bar{h}(\vec{x}_m) + \Delta_{ebsb}) - 2\kappa^2\vec{e}_x^\top A_m\vec{e}_x - 2\kappa^2\|\tilde{\theta}\|\|F(\vec{x})^\top B^\top\vec{e}_x\| \nonumber \\
    %
    % &= -2\alpha_r\bar{h}(\vec{x}_m)^2 + 2\bar{h}(\vec{x}_m)\Delta_{ebsb} - 2\vec{e}_x^\top W_0A_m\vec{e}_x - 2\kappa^2\|\tilde{\theta}\|\|F(\vec{x})^\top B^\top\vec{e}_x\| \nonumber \\
    %
    &= -2\alpha_rH_e + 2\bar{h}(\vec{x}_m)\Delta_{ebsb} - \kappa^2\vec{e}_x^\top W\vec{e}_x - 2\kappa^2\|\tilde{\theta}\|\|F(\vec{x})^\top B^\top\vec{e}_x\| \nonumber \\
    &\geq -2\alpha_rH_e. \label{eqn:Hdot_e}
\end{align}
\fi
The initial condition requirement implies that $H_e(0) \geq 0$. Therefore, it follows immediately from Nagumo's theorem that $H_e(t) \geq 0\ \forall t \geq 0$.
Now, from Proposition \ref{prop:reference_safe_no_input_uncertainty}, \eqref{eqn:Hdot_e}, and Lemma \ref{lem:Lipschitz_W}, whenever $\vec{x} \in \calC$ for $\calC$ defined in the construction of $\bar{h}$, we have
\if \numsides 2
\begin{equation} \label{eqn:thm_safety_no_input_uncertainty_claim_a}
    \begin{gathered}
        \bar{h}(\vec{x}_m) = |\bar{h}(\vec{x}_m)| \geq \kappa\|\vec{e}_x\| \geq |\bar{h}(\vec{x}) - \bar{h}(\vec{x}_m)| \\
        \implies \bar{h}(\vec{x}) \geq 0.
    \end{gathered}
\end{equation}
\else
\begin{equation} \label{eqn:thm_safety_no_input_uncertainty_claim_a}
    \bar{h}(\vec{x}_m) = |\bar{h}(\vec{x}_m)| \geq \kappa\|\vec{e}_x\| \geq |\bar{h}(\vec{x}) - \bar{h}(\vec{x}_m)| \implies \bar{h}(\vec{x}) \geq 0.
\end{equation}
\fi

We can show that $\bar{h}(\vec{x}(t)) \geq 0\ \forall t \geq 0$ via proof by contradiction: suppose that there exists a time $t_1 > 0$ where $\bar{h}(\vec{x}(t_1)) < 0$. The initial condition requirement states that $\bar{h}(\vec{x}(0)) \geq 0$, and we know that the signal $\bar{h}(\vec{x}(t))$ must be continuous with time. Therefore, there must exist a time $t_2 \in (0, t_1]$ where $\bar{h}(\vec{x}(t_2)) \in (-c, 0)$ for $c$ defined in the construction of $\bar{h}$. However, this implies that $\bar{h}(\vec{x}(t_2)) < 0$ and $\vec{x}(t_2) \in \calC$, contradicting the conclusion in \eqref{eqn:thm_safety_no_input_uncertainty_claim_a}. Therefore, from the construction of $\bar{h}$, we have
\begin{equation}
    \bar{h}(\vec{x}(t)) \geq 0 \implies h_r(\vec{x}(t)) \geq 0\ \forall t \geq 0.
\end{equation}If $r = 1$, the proof is complete.

Now, suppose that $r > 1$, and consider $h_i, i \in [1, r-1]$ as defined in \eqref{eqn:hocbf_base_case}-\eqref{eqn:hocbf_recursion}. The definition of relative degree $r$ implies that $\frac{\partial h_i}{\partial\vec{x}}|_{\vec{x}}B = 0\ \forall i < r$. Thus, using \eqref{eqn:plant_no_input_uncertainty} and \eqref{eqn:hocbf_recursion}, we have
\begin{align}
    \frac{d}{dt}h_i(\vec{x}) &= \frac{\partial h_i}{\partial\vec{x}}\Big|_{\vec{x}}(A\vec{x} + B(\vec{u} - F(\vec{x})\theta_*) \nonumber \\
    &= h_{i+1}(\vec{x}) - \alpha_ih_i(\vec{x})\ \forall i \in [1, r-1]. \label{eqn:hdot_pi}
\end{align}
The initial condition requirement states that $h_i(\vec{x}(0)) \geq 0\ \forall i \in [1, r]$, and we have already shown that $h_r(\vec{x}(t)) \geq 0\ \forall t \geq 0$. Therefore, by recursion and Nagumo's theorem \cite{Nagumo_1942},
\if \numsides 2
\begin{gather}
    \frac{d}{dt}h_{r-1}(\vec{x}) \geq -\alpha_{r-1}h_{r-1}(\vec{x}) \implies \nonumber \\
    h_{r-1}(\vec{x}(t)) \geq 0\ \forall t \geq 0 \implies \nonumber \\
    \vdots \nonumber \\
    \frac{d}{dt}h_1(\vec{x}) \geq -\alpha_1h_1(\vec{x}) \implies h_1(\vec{x}(t)) \geq 0\ \forall t \geq 0. \label{eqn:h_pi>0}
\end{gather}
\else
\begin{gather}
    \frac{d}{dt}h_{r-1}(\vec{x}) \geq -\alpha_{r-1}h_{r-1}(\vec{x}) \implies h_{r-1}(\vec{x}(t)) \geq 0\ \forall t \geq 0 \implies \nonumber \\
    \vdots \nonumber \\
    \frac{d}{dt}h_1(\vec{x}) \geq -\alpha_1h_1(\vec{x}) \implies h_1(\vec{x}(t)) \geq 0\ \forall t \geq 0. \label{eqn:h_pi>0}
\end{gather}
\fi
% Claim (a) follows, since $h_1 = h$. \\
Thus, $\vec{x}(t) \in S_1 \cap \cdots \cap S_{r-1} \cap \bar{S}\ \forall t \geq 0$, and Claim (a) follows.

\noindent\textbf{Proof of Claim (b)}

From Theorem \ref{thm:lyapunov_no_input_uncertainty}, in order to prove claim (b), we need only to prove that $\vec{r}_s(t)$ is bounded. As we know that $\vec{r}_*(t)$ is bounded by assumption, it suffices to find a feasible solution to \eqref{eqn:governor_no_input_uncertainty} and to show that that feasible solution is bounded. We also know from Proposition \ref{prop:reference_safe_no_input_uncertainty} and claim (a) of Theorem \ref{thm:safety_objective_no_input_uncertainty} that $\vec{x}_m(t) \in \bar{S}$ and $\vec{x}(t) \in \bar{S}\ \forall t \geq 0$. Furthermore, the initial condition requirement states that $\vec{x}_m(0) \in \bar{\calC}$ and $\vec{x}(0) \in \bar{\calC}$, where $\calC$ is defined in the construction of $\bar{h}$. From the construction of $\bar{h}$, therefore, it follows that $\vec{x}_m(t) \in \bar{\calC}$ and $\vec{x}(t) \in \bar{\calC}\ \forall t \geq 0$, and thus that $\vec{x}_m(t)$ and $\vec{x}(t)$ are bounded. In what follows, we will consider two cases: $\|\frac{\partial\bar{h}}{\partial\vec{x}}|_{\vec{x}_m}B\| < d$ and $\|\frac{\partial\bar{h}}{\partial\vec{x}}|_{\vec{x}_m}B\| \geq d$ for $d$ defined in Assumption \ref{asn:governor_automatic_when_gradient_zero}. \\

\noindent\underline{Case 1: $\|\frac{\partial\bar{h}}{\partial\vec{x}}|_{\vec{x}_m}B\| < d$}

In this case, it follows immediately from Assumption \ref{asn:governor_automatic_when_gradient_zero} and all previously-stated results that $\vec{r}_s = 0$ is a feasible solution to \eqref{eqn:governor_no_input_uncertainty}. \\

\noindent\underline{Case 2: $\|\frac{\partial\bar{h}}{\partial\vec{x}}|_{\vec{x}_m}B\| \geq d$}

Here, we can obtain immediately from \eqref{eqn:governor_no_input_uncertainty} that
\if \numsides 2
\begin{equation} \label{eqn:problem_1_feasible_solution}
    \begin{aligned}
        \vec{r}_s = \bigg(&\delta + \Delta_{ebsb}(\vec{x}_m, \vec{e}_x, \hat{\theta}) - \alpha_r\bar{h}(\vec{x}_m) \\
        &- \frac{\partial\bar{h}}{\partial\vec{x}}\Big|_{\vec{x}_m}A_m\vec{x}_m\bigg)\frac{\left(\frac{\partial\bar{h}}{\partial\vec{x}}|_{\vec{x}_m}B\right)^\top}{\|\frac{\partial\bar{h}}{\partial\vec{x}}|_{\vec{x}_m}B\|^2}
    \end{aligned}
\end{equation}
\else
\begin{equation} \label{eqn:problem_1_feasible_solution}
    \vec{r}_s = \left(\delta + \Delta_{ebsb}(\vec{x}_m, \vec{e}_x, \hat{\theta}) - \alpha_r\bar{h}(\vec{x}_m) - \frac{\partial\bar{h}}{\partial\vec{x}}\Big|_{\vec{x}_m}A_m\vec{x}_m\right)\frac{\left(\frac{\partial\bar{h}}{\partial\vec{x}}|_{\vec{x}_m}B\right)^\top}{\|\frac{\partial\bar{h}}{\partial\vec{x}}|_{\vec{x}_m}B\|^2}
\end{equation}
\fi
is a feasible solution to \eqref{eqn:governor_no_input_uncertainty}. As previously stated, we know that $\vec{x}_m(t)$, $\vec{e}_x(t)$, and $\hat{\theta}(t)$ are bounded. Additionally, we know from Proposition \ref{prop:reference_safe_no_input_uncertainty} that $\bar{h}(\vec{x}_m(t)) \geq \frac{\delta}{\alpha_r} > 0\ \forall t \geq 0$. Thus, we know that $\Delta_{ebsb}(\vec{x}_m(t), \vec{e}_x(t), \hat{\theta}(t))$ and all other terms in the numerator of \eqref{eqn:problem_1_feasible_solution} are bounded. Additionally, as we are specifically considering the case where $\|\frac{\partial\bar{h}}{\partial\vec{x}}|_{\vec{x}_m}B\| \geq d$, we know that the denominator in \eqref{eqn:problem_1_feasible_solution} has a positive lower bound. Therefore, the feasible solution in \eqref{eqn:problem_1_feasible_solution} is bounded. \\

\noindent\underline{Combining Cases 1 and 2}

As the above two cases cover all possible scenarios, $\vec{r}_s(t)$ is bounded, and claim (b) follows. \\

\noindent\textbf{Proof of Claim (c)}

From Proposition \ref{prop:reference_safe_no_input_uncertainty}, we know that $\bar{h}(\vec{x}_m(t)) \geq \frac{\delta}{\alpha_r} > 0\ \forall t \geq 0$, and from Theorem \ref{thm:lyapunov_no_input_uncertainty} and claim (b) of Theorem \ref{thm:safety_objective_no_input_uncertainty}, we know that $\vec{x}(t)$ is bounded and $\lim_{t \to \infty} \|\vec{e}_x(t)\| = 0$. Therefore, claim (c) follows immediately from \eqref{eqn:error_based_safety_buffer}. \\

\noindent\textbf{Proof of Claim (d)}

Proof of the safety objective is immediate from claim (a) of Theorem \ref{thm:safety_objective_no_input_uncertainty}. For the control objective, $\limsup_{t \to \infty} \|\vec{x}(t) - \vec{x}_*(t)\| \leq M$ for a constant $M > 0$ follows immediately from the facts that $\vec{x}(t)$ is bounded as previously shown, and $\vec{x}_*(t)$ is bounded due to boundedness of $\vec{r}_*$ and Hurwitzness of $A_m$. Furthermore, we have shown that $\lim_{t \to \infty} \|\vec{x}(t) - \vec{x}_m(t)\| = 0$, and the only way for the parametric uncertainty to influence $\vec{x}_m(t)$ is through $\Delta_{ebsb}$. Therefore, the fact that $\lim_{t \to \infty} \Delta_{ebsb}(\vec{x}_m(t), \vec{e}_x(t), \hat{\theta}(t)) = 0$ implies that $M$ is independent of the parametric uncertainty.

\subsection{Proof of Theorem \ref{thm:EBSB_R-CBF}}
\label{app:EBSB_R-CBF}

The proof of Theorem \ref{thm:EBSB_R-CBF} largely follows the proofs of Proposition \ref{prop:reference_safe_no_input_uncertainty} and Theorem \ref{thm:safety_objective_no_input_uncertainty}, with the following key differences. \\

\noindent\textbf{Proposition \ref{prop:reference_safe_no_input_uncertainty}}

Recall that $\delta \in (0, \alpha_rd_0) \implies d_0 - \frac{\delta}{\alpha_r} > 0$. For $H_m$ as defined in \eqref{eqn:H_m}, whenever $H_m \leq d_0 - \frac{\delta}{\alpha_r}$, we have
\begin{equation}
    H_m \leq d_0 - \frac{\delta}{\alpha_r} \implies \bar{h}(\vec{x}_m) \leq d_0 \leq d_0 + \kappa\|\vec{e}_x\|,
\end{equation}
and thus we know from \eqref{eqn:ebsb_interpolation} that $\rho_{ebsb} = 1$, and therefore that \eqref{eqn:Hdot_m} holds. Furthermore, the initial condition requirements imply that $H_m(0) \geq 0$. Therefore, $H_m(t) \geq 0\ \forall t \geq 0$ by Nagumo's theorem \cite{Nagumo_1942}, and the remainder of the proof follows identically to Appendix \ref{app:reference_safe_no_input_uncertainty}. \\

\noindent\textbf{Theorem \ref{thm:safety_objective_no_input_uncertainty}, Claim (a)}

For $H_e$ as defined in \eqref{eqn:H_e}, whenever $H_e \leq d_0$, we know from \eqref{eqn:ebsb_interpolation} that $\rho_{ebsb} = 1$, and therefore that \eqref{eqn:Hdot_e} holds. Furthermore, the initial condition requirements imply that $H_e(0) \geq 0$. Therefore, $H_e(t) \geq 0\ \forall t \geq 0$ by Nagumo's theorem \cite{Nagumo_1942}, and the remainder of the proof follows identically to Appendix \ref{app:safety_objective_no_input_uncertainty}. \\

\noindent\textbf{Theorem \ref{thm:safety_objective_no_input_uncertainty}, Claim (b)}

As in the proof of Theorem \ref{thm:safety_objective_no_input_uncertainty}, we conclude immediately that $\vec{x}_m(t), \vec{x}(t) \in \bar{\calC}\ \forall t \geq 0$, and thus that $\vec{x}_m(t)$ and $\vec{x}(t)$ are bounded. The remainder of the proof of claim (b) is split into three cases: $\rho_{ebsb} = 0$, $\rho_{ebsb} = 1$, and $\rho_{ebsb} \in (0, 1)$. We will obtain a bounded feasible solution in each case, and the claim will follow as in Appendix \ref{app:safety_objective_no_input_uncertainty}. \\

\noindent\underline{Case 1: $\rho_{ebsb} = 0$}

In this case, we know from \eqref{eqn:ebsb_interpolation} that $\bar{h}(\vec{x}_m) \geq \xi(\vec{x}_m) \geq d_1$. Since $d_1 > d_0 > \frac{\delta}{\alpha_r}$, we further conclude $-\alpha_r\bar{h}(\vec{x}_m) + \delta < 0$. Therefore, the solution to \eqref{eqn:governor_no_input_uncertainty_r-cbf} is $\vec{r}_s = \vec{r}_*$, which is bounded. \\

\noindent\underline{Case 2: $\rho_{ebsb} = 1$}

In this case, we know from \eqref{eqn:ebsb_interpolation} and Theorem \ref{thm:lyapunov_no_input_uncertainty} that $\bar{h}(\vec{x}_m) \leq d_0 + \kappa\|\vec{e}_x\| < \xi(\vec{x}_m) + \kappa E_1$. Additionally, we know from proving Claim (a) that $\vec{x}_m \in S_1 \cap \cdots \cap S_{r-1} \cap \bar{S}$. It then follows from Assumption \ref{asn:max_error_relaxed_cbf} that $\|\frac{\partial\bar{h}}{\partial\vec{x}}|_{\vec{x}_m}B\| \geq d_2$. Therefore, $\vec{r}_s$ in \eqref{eqn:problem_1_feasible_solution} is a bounded feasible solution to \eqref{eqn:governor_no_input_uncertainty_r-cbf}. \\

\noindent\underline{Case 3: $\rho_{ebsb} \in (0, 1)$}

% This case requires further analysis, as we must take care to show that $\vec{r}_s$ cannot blow up to infinity in the limit as $\rho_{ebsb} \to 0$.
First, we observe that $\rho_{ebsb} < 1 \implies \bar{h}(\vec{x}_m) > d_0 + \kappa\|\vec{e}_x\| \geq d_0 \implies -\alpha_r\bar{h}(\vec{x}_m) + \delta < 0$ as in Case 1. Additionally, $\rho_{ebsb} > 0 \implies \bar{h}(\vec{x}_m) < \xi(\vec{x}_m)$, and thus that $\|\frac{\partial\bar{h}}{\partial\vec{x}}|_{\vec{x}_m}B\| \geq d_2$ as in Case 2. Finally, note that, as has already been shown, $\frac{\partial\bar{h}}{\partial\vec{x}}|_{\vec{x}_m}A_m\vec{x}_m - \Delta_{ebsb}(\vec{x}_m, \vec{e}_x, \hat{\theta})$ is a bounded quantity. Therefore, there exists a finite $\bar{\rho}_{ebsb} \in (0, 1)$ such that $\rho_{ebsb}[\frac{\partial\bar{h}}{\partial\vec{x}}|_{\vec{x}_m}A_m\vec{x}_m - \Delta_{ebsb}(\vec{x}_m, \vec{e}_x, \hat{\theta})] \geq -\alpha_r\bar{h}(\vec{x}_m) + \delta$ whenever $\rho_{ebsb} \leq \bar{\rho}_{ebsb}$. Whenever $\rho_{ebsb} \leq \bar{\rho}_{ebsb}$, $\vec{r}_s = 0$ is a feasible solution to \eqref{eqn:governor_no_input_uncertainty_r-cbf}. Conversely, whenever $\rho_{ebsb} > \bar{\rho}_{ebsb}$,
\if \numsides 2
\begin{equation} \label{eqn:problem_1_feasible_solution_r-cbf}
    \begin{aligned}
        \vec{r}_s = \bigg(&\delta + \Delta_{ebsb}(\vec{x}_m, \vec{e}_x, \hat{\theta}) - \alpha_r\bar{h}(\vec{x}_m) \\
        &- \frac{\partial\bar{h}}{\partial\vec{x}}\Big|_{\vec{x}_m}A_m\vec{x}_m\bigg)\frac{\left(\frac{\partial\bar{h}}{\partial\vec{x}}|_{\vec{x}_m}B\right)^\top}{\rho_{ebsb}(\vec{x}_m, \vec{e}_x)\|\frac{\partial\bar{h}}{\partial\vec{x}}|_{\vec{x}_m}B\|^2}
    \end{aligned}
\end{equation}
\else
\begin{equation} \label{eqn:problem_1_feasible_solution_r-cbf}
    \vec{r}_s = \left(\delta + \Delta_{ebsb}(\vec{x}_m, \vec{e}_x, \hat{\theta}) - \alpha_r\bar{h}(\vec{x}_m) - \frac{\partial\bar{h}}{\partial\vec{x}}\Big|_{\vec{x}_m}A_m\vec{x}_m\right)\frac{\left(\frac{\partial\bar{h}}{\partial\vec{x}}|_{\vec{x}_m}B\right)^\top}{\rho_{ebsb}(\vec{x}_m, \vec{e}_x)\|\frac{\partial\bar{h}}{\partial\vec{x}}|_{\vec{x}_m}B\|^2}
\end{equation}
\fi
is a feasible solution to \eqref{eqn:governor_no_input_uncertainty_r-cbf}. \\

\noindent\textbf{Theorem \ref{thm:safety_objective_no_input_uncertainty}, Claims (c) and (d)}

Claim (c) follows identically to the proof in Appendix \ref{app:safety_objective_no_input_uncertainty}. Claim (d) also follows nearly identically, with the additional observation that $\rho_{ebsb}$ becomes independent of the parametric uncertainty as $\|\vec{e}_x\| \to 0$.
% \section{Analysis of Problem \ref{prob:input_matrix}}
\section{}

\subsection{Proof of Theorem \ref{thm:lyapunov_input_uncertainty}}
\label{app:lyapunov_input_uncertainty}

% Since $\hat{\theta}(0) \in \Theta$ and $\hat{\lambda}(0) \in L$, and $\dot{\hat{\theta}}(t) \in T_\Theta(\hat{\theta}(t))$ and $\dot{\hat{\lambda}}(t) \in T_L(\hat{\lambda}(t))\ \forall t \geq 0$ from \eqref{eqn:theta_adaptive_law_2}-\eqref{eqn:lambda_adaptive_law}, it follows immediately from Nagumo's theorem \cite{nagumo1942} that $\hat{\theta}(t) \in \Theta$ and $\hat{\lambda}(t) \in L\ \forall t \geq 0$. Results (a) and (b) are immediate.

Consider the functions $g_{\Theta,1}, \dots, g_{\Theta,N_\Theta}$ and $g_{L,1}, \dots, g_{L,N_L}$ describing $\Theta$ and $L$ respectively as in Lemma \ref{lem:C_finitely_many_g}. Because $\dot{\hat{\theta}}(t) \in T_\Theta(\hat{\theta}(t))$ and $\dot{\hat{\lambda}}(t) \in T_L(\hat{\lambda}(t))\ \forall t \geq 0$, we know that whenever $g_{\Theta,k}(\hat{\theta}) = 0$ for any $k$, we have $\frac{d}{dt}g_{\Theta,k}(\hat{\theta}) = \nabla g_{\Theta,k}(\hat{\theta})^\top\dot{\hat{\theta}} \geq 0$, and likewise that whenever $g_{L,k}(\hat{\lambda}) = 0$ for any $k$, we have $\frac{d}{dt}g_{L,k}(\hat{\lambda}) = \nabla g_{L,k}(\hat{\lambda})^\top\dot{\hat{\lambda}} \geq 0$. Results (a) and (b) is immediate from the fact that $\hat{\theta}(0) \in \Theta$ and $\hat{\lambda}(0) \in L$.

We propose the following Lyapunov function candidate, which is standard in adaptive control \cite{Narendra05}:
\begin{equation} \label{eqn:lyapunov_function_input_uncertainty}
    V = \vec{e}_x^\top P\vec{e}_x + \frac{\|\tilde{\theta}\|^2}{\gamma_\theta} + \frac{\|\tilde{\lambda}\|^2}{\gamma_\lambda}.
\end{equation}
Since $\theta_*$ and $\lambda_*$ are assumed constant, we have $\dot{\tilde{\theta}} = \dot{\hat{\theta}}$ and $\dot{\tilde{\lambda}} = \dot{\hat{\lambda}}$. Additionally, note that $\diag(\tilde{\lambda})\vec{u} = \diag(\vec{u})\tilde{\lambda}$. Furthermore, since $\hat{\theta}, \theta_* \in \Theta$, $\hat{\lambda}, \lambda_* \in L$, $\dot{\hat{\theta}} \in T_\Theta(\hat{\theta})$, and $\dot{\hat{\lambda}} \in T_L(\hat{\lambda})$, from Lemma \ref{lem:projection_to_tangent_cone} and \eqref{eqn:theta_adaptive_law_2}-\eqref{eqn:lambda_adaptive_law}, we have
\begin{align}
    \tilde{\theta}^\top\dot{\hat{\theta}} &\leq \tilde{\theta}^\top(-\gamma_\theta F(\vec{x})^\top B^\top P\vec{e}_x), \label{eqn:theta_adaptive_law_2_property} \\
    \tilde{\lambda}^\top\dot{\hat{\lambda}} &\leq \tilde{\lambda}^\top(\gamma_\lambda \diag(\vec{u})B^\top P\vec{e}_x). \label{eqn:lambda_adaptive_law_property}
\end{align}
Then, using \eqref{eqn:lyapunov_eqn}, \eqref{eqn:error_model_input_uncertainty}, and \eqref{eqn:theta_adaptive_law_2_property}-\eqref{eqn:lambda_adaptive_law_property}, one can show through straightforward algebra that
\begin{equation}
    \dot{V} \leq -\vec{e}_x^\top Q\vec{e}_x \leq 0
\end{equation}
and thus that $V(t) \leq V(0)\ \forall t \geq 0$. Result (c) is immediate.

Finally, if $\vec{r}_s(t)$ is bounded, then $\vec{x}_m(t)$ is bounded because $A_m$ is Hurwitz. Then, by boundedness of $\vec{e}_x(t)$, we know also that $\vec{x}(t)$ is bounded, and thus from \eqref{eqn:error_model_input_uncertainty} that $\vec{u}(t)$ and $\dot{\vec{e}}_x(t)$ are bounded. Result (d) follows from Barbalat's lemma \cite{Narendra05}.

\subsection{Proof of Theorem \ref{thm:safety_objective_input_uncertainty}}
\label{app:safety_objective_input_uncertainty}

\noindent\textbf{Proof of Claim (a)}

Using \eqref{eqn:general_plant}, \eqref{eqn:input_input_uncertainty}, and \eqref{eqn:z_p}, we obtain
\begin{equation}
    \frac{d}{dt}\bar{h}(\vec{x}) = \frac{\partial\bar{h}}{\partial\vec{x}}\Big|_{\vec{x}}\vec{z}_p(\vec{x}, \vec{r}_s, \hat{\theta}, \hat{\lambda}, \theta_*, \lambda_*).
\end{equation}
% The definition of relative degree $r$ implies that $\frac{\partial h_i}{\partial\vec{x}}|_{\vec{x}}B = 0\ \forall i < r$. Thus, using \eqref{eqn:z_p} and \eqref{eqn:hocbf_recursion}, we have
% \begin{align}
%     \frac{d}{dt}h_i(\vec{x}) &= \frac{\partial h_i}{\partial\vec{x}}\Big|_{\vec{x}}A\vec{x} \nonumber \\
%     &= h_{i+1}(\vec{x}) - \alpha_ih_i(\vec{x})\ \forall i \in [1, r-1]. \label{eqn:Hdot_pi}
% \end{align}
Whenever $\bar{h}(\vec{x}) \leq \frac{\delta}{3\alpha_r}$, we know from \eqref{eqn:ebcg_interpolation} that $\beta_{ebsf} = 1$ and thus that \eqref{eqn:governor_input_uncertainty} becomes
\begin{gather}
    \begin{gathered} \label{eqn:governor_input_uncertainty_beta=1}
        \vec{r}_s = \argminbelow_{\vec{r} \in \bbR^m} \|\vec{r} - \vec{r}_*\|^2\ \mathrm{s.t.} \\
        \begin{aligned}
        \min_{\theta \in \Theta, \lambda \in L}\frac{\partial\bar{h}}{\partial\vec{x}}\Big|_{\vec{x}}\vec{z}_p(\vec{x}, \vec{r}, \hat{\theta}, \hat{\lambda}, \theta, \lambda) \geq -\alpha_r\bar{h}(\vec{x}) + \delta.
        \end{aligned}
    \end{gathered}
\end{gather}
Therefore, using \eqref{eqn:governor_input_uncertainty_beta=1} and Assumption \ref{asn:lambda_in_set}, whenever $\bar{h}(\vec{x}) \leq \frac{\delta}{3\alpha_r}$, we have
\begin{equation} \label{eqn:hdot_p}
    \frac{d}{dt}\bar{h}(\vec{x}) \geq -\alpha_r\bar{h}(\vec{x}) + \delta \geq -\alpha_r\bar{h}(\vec{x}).
\end{equation}
The initial condition requirement states that $\bar{h}(\vec{x}(0)) \geq 0$. It follows immediately from Nagumo's theorem \cite{Nagumo_1942} that $\bar{h}(\vec{x}(t)) \geq 0\ \forall t \geq 0$. Additionally, from the construction of $\bar{h}$, we have
\begin{equation} \label{eqn:h_pr>0_input_uncertainty}
    \begin{gathered}
        \bar{h}(\vec{x}(t)) \geq \frac{\delta}{\alpha_r} \implies h_r(\vec{x}(t)) > 0.
     \end{gathered}
\end{equation}
If $r = 1$, the proof is complete. If $r > 1$, we complete the proof using \eqref{eqn:hdot_pi}-\eqref{eqn:h_pi>0}.

\noindent\textbf{Proof of Claim (b)}

From Theorem \ref{thm:lyapunov_input_uncertainty}, in order to prove claim (b), we need only to prove that $\vec{r}_s(t)$ is bounded. As in the proof of Theorem \ref{thm:safety_objective_no_input_uncertainty}, since we know that $\vec{r}_*(t)$ is bounded by assumption, it suffices to find a feasible solution to \eqref{eqn:governor_input_uncertainty} and to show that that feasible solution is bounded. We have just shown that $\bar{h}(\vec{x})(t) \geq 0\ \forall t \geq 0$, and therefore that $\vec{x}(t) \in \bar{S}\ \forall t \geq 0$. It follows from the construction of $\bar{h}$ and the initial condition requirements that $\vec{x}(t) \in \bar{\calC}\ \forall t \geq 0$, and thus that $\vec{x}(t)$ is bounded. Define $d = \min\{d_1, d_2\}$ with $d_1, d_2 > 0$ defined in Assumptions \ref{asn:gradient_nonzero_near_boundary} and \ref{asn:reference_dynamics_governor_automatic_when_gradient_zero}. In what follows, we will consider two cases: $\|\frac{\partial\bar{h}}{\partial\vec{x}}|_{\vec{x}}B\| < d$ and $\|\frac{\partial\bar{h}}{\partial\vec{x}}|_{\vec{x}}B\| \geq d$. \\

\noindent\underline{Case 1: $\|\frac{\partial\bar{h}}{\partial\vec{x}}|_{\vec{x}}B\| < d$}

In this case, it follows from Assumption \ref{asn:gradient_nonzero_near_boundary}, Theorem \ref{thm:lyapunov_input_uncertainty}, and \eqref{eqn:ebcg_interpolation} that
\if \numsides 2
    \begin{align}
        &\bar{h}(\vec{x}) \geq \max\left\{\alpha_{ebsf}(E_2), \frac{2\delta}{3\alpha_r}\right\} \nonumber \\
        &\indenti{h_r(\vec{x})} \geq \max\left\{\alpha_{ebsf}(\|\vec{e}_x\|), \frac{2\delta}{3\alpha_r}\right\} \nonumber \\
        &\implies \beta_{ebsf} = 0.
    \end{align}
\else
    \begin{equation}
        h_r(\vec{x}) \geq \max\left\{\alpha_{ebsf}(E_2), \frac{2\delta}{3\alpha_r}\right\} \geq \max\left\{\alpha_{ebsf}(\|\vec{e}_x\|), \frac{2\delta}{3\alpha_r}\right\} \implies \beta_{ebsf} = 0.
    \end{equation}
\fi
Then, using \eqref{eqn:z_m}, \eqref{eqn:governor_input_uncertainty} is equivalent to
\begin{equation} \label{eqn:governor_input_uncertainty_beta=0}
    \begin{gathered}
        \vec{r}_s = \argminbelow_{\vec{r} \in \bbR^m} \|\vec{r} - \vec{r}_*\|^2\ \mathrm{s.t.} \\
        \frac{\partial\bar{h}}{\partial\vec{x}}\Big|_{\vec{x}}\left[A_m\vec{x} + B\vec{r}\right] \geq -\alpha_r\bar{h}(\vec{x}) + \delta.
    \end{gathered}
\end{equation}
It then follows from Assumption \ref{asn:reference_dynamics_governor_automatic_when_gradient_zero} that $\vec{r}_s = 0$ is a feasible solution to \eqref{eqn:governor_input_uncertainty_beta=0}. \\

\noindent\underline{Case 2: $\|\frac{\partial\bar{h}}{\partial\vec{x}}|_{\vec{x}}B\| \geq d$}

Define the quantity
\if \numsides 2
\begin{equation} \label{eqn:xi}
    \begin{aligned}
        \xi(\vec{x}, \vec{e}_x, \hat{\theta}, \hat{\lambda}) = &-\alpha_r\bar{h}(\vec{x}) + \delta \\
        &- \frac{\partial\bar{h}}{\partial\vec{x}}\Big|_{\vec{x}}(A_m - \beta_{ebsf}BK)\vec{x} \\
        &+ \beta_{ebsf}\max_{\theta \in \Theta}\frac{\partial\bar{h}}{\partial\vec{x}}\Big|_{\vec{x}}BF(\vec{x})\theta \\
        &- \beta_{ebsf}\min_{\lambda \in L}\frac{\partial\bar{h}}{\partial\vec{x}}\Big|_{\vec{x}}B\diag(\lambda) \\
        &\indenti{-} \times \diag(\hat{\lambda})^{-1}(K\vec{x} + F(\vec{x})\hat{\theta}),
    \end{aligned}
\end{equation}
\else
\begin{equation} \label{eqn:xi}
    \begin{aligned}
        \xi(\vec{x}, \vec{e}_x, \hat{\theta}, \hat{\lambda}) = &-\alpha_r\bar{h}(\vec{x}) + \delta - \frac{\partial\bar{h}}{\partial\vec{x}}\Big|_{\vec{x}}(A_m - \beta_{ebsf}BK)\vec{x} + \beta_{ebsf}\max_{\theta \in \Theta}\frac{\partial\bar{h}}{\partial\vec{x}}\Big|_{\vec{x}}BF(\vec{x})\theta \\
        &- \beta_{ebsf}\min_{\lambda \in L}\frac{\partial\bar{h}}{\partial\vec{x}}\Big|_{\vec{x}}B\diag(\lambda) \diag(\hat{\lambda})^{-1}(K\vec{x} + F(\vec{x})\hat{\theta}),
    \end{aligned}
\end{equation}
\fi
dropping the dependence of $\beta_{ebsf}$ on $\vec{x}$ and $\vec{e}_x$ for ease of exposition. Then, \eqref{eqn:governor_input_uncertainty} with \eqref{eqn:z_m}-\eqref{eqn:z_p} can be rewritten as
\if \numsides 2
\begin{gather}
    \begin{gathered} \label{eqn:governor_input_uncertainty_rewritten}
        \vec{r}_s = \argminbelow_{\vec{r} \in \bbR^m} \|\vec{r} - \vec{r}_*\|^2\ \mathrm{s.t.} \\
        \begin{aligned}
            \min_{\lambda \in L}\frac{\partial\bar{h}}{\partial\vec{x}}\Big|_{\vec{x}}B\Big(&(1 - \beta_{ebsf})\diag(\hat{\lambda}) \\
            &+ \beta_{ebsf}\diag(\lambda)\Big)\diag(\hat{\lambda})^{-1}\vec{r}
        \end{aligned} \\
        \geq \xi(\vec{x}, \vec{e}_x, \hat{\theta}, \hat{\lambda}).
    \end{gathered}
\end{gather}
\else
\begin{gather}
    \begin{gathered} \label{eqn:governor_input_uncertainty_rewritten}
        \vec{r}_s = \argminbelow_{\vec{r} \in \bbR^m} \|\vec{r} - \vec{r}_*\|^2\ \mathrm{s.t.} \\
        \min_{\lambda \in L}\frac{\partial\bar{h}}{\partial\vec{x}}\Big|_{\vec{x}}B\left((1 - \beta_{ebsf})\diag(\hat{\lambda}) + \beta_{ebsf}\diag(\lambda)\right)\diag(\hat{\lambda})^{-1}\vec{r} \geq \xi(\vec{x}, \vec{e}_x, \hat{\theta}, \hat{\lambda}).
    \end{gathered}
\end{gather}
\fi
Now note that $\hat{\lambda}(t) \in L\ \forall t \geq 0$ from Theorem \ref{thm:lyapunov_input_uncertainty}, and note that from Assumption \ref{asn:lambda_in_set}, $\diag(\lambda)$ is symmetric positive-definite for any $\lambda \in L$ with eigenvalues lower-bounded by $\ubar{\lambda}_1, \dots, \ubar{\lambda}_m$. Then, because $\beta_{ebsf} \in [0, 1]$, it follows that $(1 - \beta_{ebsf})\diag(\hat{\lambda}) + \beta_{ebsf}\diag(\lambda)$ is symmetric positive-definite for any $\lambda \in L$ with eigenvalues lower-bounded by $\ubar{\lambda}_1, \dots, \ubar{\lambda}_m$. Define $\vec{\ubar{\lambda}} = [\ubar{\lambda}_1, \dots, \ubar{\lambda}_m]^\top$. Then, whenever $\|\frac{\partial\bar{h}}{\partial\vec{x}}|_{\vec{x}}B\| \geq d$, we obtain in a straightforward manner from \eqref{eqn:governor_input_uncertainty_rewritten} that
\begin{equation} \label{eqn:problem_2_feasible_solution}
    \vec{r}_s = \frac{\max\{\xi(\vec{x}, \vec{e}_x, \hat{\theta}, \hat{\lambda}), 0\}}{\frac{\partial\bar{h}}{\partial\vec{x}}|_{\vec{x}}B\diag(\vec{\ubar{\lambda}})(\frac{\partial\bar{h}}{\partial\vec{x}}|_{\vec{x}}B)^\top}\left(\frac{\partial\bar{h}}{\partial\vec{x}}\Big|_{\vec{x}}B\diag(\hat{\lambda})\right)^\top
\end{equation}
is a feasible solution to \eqref{eqn:governor_input_uncertainty}. As previously stated, we know that $\vec{x}(t)$ is bounded. Additionally, we know from Theorem \ref{thm:lyapunov_input_uncertainty} that $\hat{\theta}(t) \in \Theta$ and $\hat{\lambda}(t) \in L\ \forall t \geq 0$, and that $\vec{e}_x(t)$ is bounded. Thus, boundedness of $\vec{x}(t)$, $\vec{e}_x(t)$, $\hat{\theta}(t)$, and $\hat{\lambda}(t)$, along with the guarantee that $\hat{\lambda}(t)$ is invertible $\forall t \geq 0$ from the definition of $L$ in Assumption \ref{asn:lambda_in_set}, implies that $\xi(\vec{x}(t), \vec{e}_x(t), \hat{\theta}(t), \hat{\lambda}(t))$ is bounded. As we are specifically considering the case where $\|\frac{\partial\bar{h}}{\partial\vec{x}}|_{\vec{x}}B\| \geq d$ and we know that $\diag(\vec{\ubar{\lambda}})$ is symmetric positive-definite, we also know that the denominator in \eqref{eqn:problem_2_feasible_solution} is lower-bounded. Therefore, the feasible solution in \eqref{eqn:problem_2_feasible_solution} is bounded. \\

\noindent\underline{Combining Cases 1 and 2}

As the above cases cover all possible scenarios, $\vec{r}_s(t)$ is bounded, and claim (b) follows. \\

\noindent\textbf{Proof of Claim (c)}

Define $\vec{g}(\vec{x}) = (\frac{\partial\bar{h}}{\partial\vec{x}})^\top$, and define the error signals $e_h(t) = \bar{h}(\vec{x}(t)) - \bar{h}(\vec{x}_m(t))$ and $\vec{e}_g(t) = g(\vec{x}(t)) - g(\vec{x}_m(t))$. Choosing $\vec{r}_s$ according to \eqref{eqn:governor_input_uncertainty} implies that
\if \numsides 2
\begin{equation} \label{eqn:ebcg_inequality}
    \begin{gathered}
        \begin{aligned}
            \min_{\theta \in \Theta, \lambda \in L}\vec{g}(\vec{x})^\top\Big[&(1 - \beta_{ebsf}(\vec{x}, \vec{e}_x))\vec{z}_m(\vec{x}, \vec{r}_s) \\
            &+ \beta_{ebsf}(\vec{x}, \vec{e}_x)\vec{z}_p(\vec{x}, \vec{r}_s, \hat{\theta}, \hat{\lambda}, \theta, \lambda)\Big]
        \end{aligned} \\
        \geq -\alpha_r\bar{h}(\vec{x}) + \delta
    \end{gathered}
\end{equation}
\else
\begin{equation} \label{eqn:ebcg_inequality}
    \min_{\theta \in \Theta, \lambda \in L}\vec{g}(\vec{x})^\top\left[(1 - \beta_{ebsf}(\vec{x}, \vec{e}_x))\vec{z}_m(\vec{x}, \vec{r}_s) + \beta_{ebsf}(\vec{x}, \vec{e}_x)\vec{z}_p(\vec{x}, \vec{r}_s, \hat{\theta}, \hat{\lambda}, \theta, \lambda)\right] \geq -\alpha_r\bar{h}(\vec{x}) + \delta
\end{equation}
\fi
In particular, we know from Theorem \ref{thm:lyapunov_input_uncertainty} that $\hat{\theta}(t) \in \Theta$ and $\hat{\lambda}(t) \in L\ \forall t \geq 0$. Thus, \eqref{eqn:ebcg_inequality} holds with $\theta = \hat{\theta}$ and $\lambda = \hat{\lambda}$. Substituting \eqref{eqn:z_m} and \eqref{eqn:z_p} with $\theta = \hat{\theta}$ and $\lambda = \hat{\lambda}$ and simplifying, we find that \eqref{eqn:ebcg_inequality} implies that
\begin{equation} \label{eqn:ebcg_inequality_2}
    \vec{g}(\vec{x})^\top(A_m\vec{x} + B\vec{r}_s) \geq -\alpha_r\bar{h}(\vec{x}) + \delta.
\end{equation}
Now, consider again the signal $H_m(t)$ defined in \eqref{eqn:H_m} and first used in the proof of Proposition \ref{prop:reference_safe_no_input_uncertainty}. We can apply \eqref{eqn:Hdot_m} and \eqref{eqn:ebcg_inequality_2} to obtain
\if \numsides 2
\begin{align}
    \dot{H}_m &= \vec{g}(\vec{x}_m)^\top(A_m\vec{x}_m + B\vec{r}_s) \nonumber \\
    &= \vec{g}(\vec{x})^\top(A_m\vec{x} + B\vec{r}_s) - \vec{e}_g^\top(A_m\vec{x}_m + B\vec{r}_s) \nonumber \\
    &\indenti{=}\ - \vec{g}(\vec{x})^\top A_m\vec{e}_x \nonumber \\
    &\geq -\alpha_r\bar{h}(\vec{x}) + \delta - \vec{e}_g^\top(A_m\vec{x}_m + B\vec{r}_s) - \vec{g}(\vec{x})^\top A_m\vec{e}_x \nonumber \\
    &= -\alpha_rH_m - \alpha_re_h - \vec{e}_g^\top(A_m\vec{x}_m + B\vec{r}_s) \nonumber \\
    &\indenti{=}\ - \vec{g}(\vec{x})^\top A_m\vec{e}_x. \label{eqn:Hdot_mr_plm_2}
\end{align}
\else
\begin{align}
    \dot{H}_{mr} &= \vec{g}(\vec{x}_m)^\top(A_m\vec{x}_m + B\vec{r}_s) \nonumber \\
    &= \vec{g}(\vec{x})^\top(A_m\vec{x} + B\vec{r}_s) - \vec{e}_g^\top(A_m\vec{x}_m + B\vec{r}_s) - \vec{g}(\vec{x})^\top A_m\vec{e}_x \nonumber \\
    &\geq -\alpha_rh_r(\vec{x}) + \delta - \vec{e}_g^\top(A_m\vec{x}_m + B\vec{r}_s) - \vec{g}(\vec{x})^\top A_m\vec{e}_x \nonumber \\
    &= -\alpha_rH_{mr} - \alpha_re_h - \vec{e}_g^\top(A_m\vec{x}_m + B\vec{r}_s) - \vec{g}(\vec{x})^\top A_m\vec{e}_x. \label{eqn:Hdot_mr_plm_2}
\end{align}
\fi
Finally, from Theorem \ref{thm:lyapunov_input_uncertainty} and claims (a) and (b), we know that $\lim_{t \to \infty} \|\vec{e}_x(t)\| = 0$ and that $\vec{x}(t)$, $\vec{x}_m(t)$, and $\vec{r}_s(t)$ are all bounded. By smoothness of $\bar{h}$, we also know that $\lim_{t \to \infty} |e_h(t)| = \lim_{t \to \infty} \|\vec{e}_g(t)\| = 0$. It follows immediately that the following three finite times must exist:
\begin{enumerate}
    \item[(a)] $\exists T_1 \geq 0$ such that $|\alpha_re_h(t) + \vec{e}_g(t)^\top(A_m\vec{x}_m(t) + B\vec{r}_s(t)) + \vec{g}(\vec{x}(t))^\top A_m\vec{e}_x(t)| \leq \frac{\delta}{12}\ \forall t \geq T_1$,
    \item[(b)] $\exists T_2 \geq 0$ such that $|e_h(t)| \leq \frac{\delta}{6\alpha_r}\ \forall t \geq T_2$, and
    \item[(c)] $\exists T_3 \geq 0$ such that $\alpha_{ebsf}(\|\vec{e}_x(t)\|) \leq \frac{2\delta}{3\alpha_r}\ \forall t \geq T_3$.
\end{enumerate}
Then, for all $t \geq T_1$ such that $H_m(t) \leq -\frac{\delta}{6\alpha_r}$, we have $\dot{H}_m \geq \frac{\delta}{12}$, implying that there exists a finite time $T_4 \geq T_1$ such that $H_m(t) \geq -\frac{\delta}{6\alpha_r}\ \forall t \geq T_4 \implies \bar{h}(\vec{x}_m(t)) \geq \frac{5\delta}{6\alpha_r}\ \forall t \geq T_4 \implies \bar{h}(\vec{x}(t)) \geq \frac{2\delta}{3\alpha_r}\ \forall t \geq \max\{T_2, T_4\}$. It follows from the definition of $\beta_{ebsf}$ in \eqref{eqn:ebcg_interpolation} that $\beta_{ebsf}(\vec{x}(t), \vec{e}_x(t)) = 0\ \forall t \geq \max\{T_2, T_3, T_4\} := T$. \\

\noindent\textbf{Proof of Claim (d)}

Proof of the safety objective is immediate from claim (a) of Theorem \ref{thm:safety_objective_input_uncertainty}. For the control objective, $\limsup_{t \to \infty} \|\vec{x}(t) - \vec{x}_*(t)\| \leq M$ for a constant $M > 0$ follows immediately from the facts that $\vec{x}(t)$ is bounded as previously shown, and $\vec{x}_*(t)$ is bounded due to boundedness of $\vec{r}_*$ and Hurwitzness of $A_m$. Furthermore, whenever $\beta_{ebsf} = 0$, \eqref{eqn:governor_input_uncertainty} becomes equivalent to \eqref{eqn:governor_input_uncertainty_beta=0}, meaning that for $T$ from claim (c), $\vec{r}_s(t)$ and therefore $\vec{x}_m(t)$ are independent of the parametric uncertainty for all $t \geq T$ except for initial conditions at time $T$. Independence of $M$ from the parametric uncertainty follows from the fact that $\lim_{t \to \infty} \|\vec{x}(t) - \vec{x}_m(t)\| = 0$.

\subsection{Proof of Theorem \ref{thm:EBSF_R-CBF}} \label{app:EBSF_R-CBF}

The proof of Theorem \ref{thm:EBSF_R-CBF} largely follows the proof of Theorem \ref{thm:safety_objective_input_uncertainty}, with the following key differences. \\

\noindent\textbf{Theorem \ref{thm:safety_objective_input_uncertainty}, Claim (a)}

Whenever $\bar{h}(\vec{x}) \leq \min\{\frac{\delta}{3\alpha_r}, d_0\}$, we know from \eqref{eqn:ebcg_interpolation} and \eqref{eqn:ebsf_interpolation_2} that $\beta_{ebsf} = \rho_{ebsf} = 1$, and therefore that \eqref{eqn:governor_input_uncertainty_beta=1} holds. The remainder of the proof follows identically to Appendix \ref{app:safety_objective_input_uncertainty}. \\

\noindent\textbf{Theorem \ref{thm:safety_objective_input_uncertainty}, Claim (b)}

As in the proof of Theorem \ref{thm:safety_objective_input_uncertainty}, we conclude immediately that $\vec{x}(t) \in \bar{\calC}\ \forall t \geq 0$, and thus that $\vec{x}(t)$ is bounded. The remainder of the proof of claim (b) is split into three cases: $\rho_{ebsf} = 0$, $\rho_{ebsf} = 1$, and $\rho_{ebsf} \in (0, 1)$. We will obtain a bounded feasible solution in each case, and the claim will follow as in Appendix \ref{app:safety_objective_input_uncertainty}. \\

\noindent\underline{Case 1: $\rho_{ebsf} = 0$}

In this case, we know from \eqref{eqn:ebsf_interpolation_2} that $\bar{h}(\vec{x}) \geq \xi(\vec{x}) \geq d_1$. Since $d_1 > d_0 > \frac{\delta}{\alpha_r}$, we further conclude $-\alpha_r\bar{h}(\vec{x}_m) + \delta < 0$. Therefore, the solution to \eqref{eqn:governor_input_uncertainty_r-cbf} is $\vec{r}_s = \vec{r}_*$, which is bounded. \\

\noindent\underline{Case 2: $\rho_{ebsf} = 1$}

In this case, we know from \eqref{eqn:ebsf_interpolation_2} that $\bar{h}(\vec{x}) \leq d_0 < \xi(\vec{x})$. Additionally, we found in proving Claim (a) that $\vec{x} \in S_1 \cap \cdots \cap S_{r-1} \cap \bar{S}$. In then follows from Assumption \ref{asn:max_error_relaxed_cbf_input_uncertainty} that $\|\frac{\partial\bar{h}}{\partial\vec{x}}|_{\vec{x}}B\| \geq d_2$. Therefore, $\vec{r}_s$ in \eqref{eqn:problem_2_feasible_solution} is a bounded feasible solution to \eqref{eqn:governor_input_uncertainty_r-cbf}. \\

\noindent\underline{Case 3: $\rho_{ebsf} \in (0, 1)$}

First, we note that $\rho_{ebsf} < 1 \implies \bar{h}(\vec{x}) > d_0 \implies -\alpha_r\bar{h}(\vec{x}) + \delta < 0$ as in Case 1. Additionally, $\rho_{ebsf} > 0 \implies \bar{h}(\vec{x}) < \xi(\vec{x}) \implies \|\frac{\partial\bar{h}}{\partial\vec{x}}|_{\vec{x}}B\| \geq d_2$ as in Case 2. Thus, similarly to Appendix \ref{app:EBSB_R-CBF}, we conclude that $\exists \bar{\rho}_{ebsf} \in (0, 1)$ such that $\vec{r}_s = 0$ is a feasible solution to \eqref{eqn:governor_input_uncertainty_r-cbf} whenever $\rho_{ebsf} \leq \bar{\rho}_{ebsf}$, and \eqref{eqn:problem_2_feasible_solution} divided by $\rho_{ebsf}(\vec{x})$ is a feasible solution whenever $\rho_{ebsf} > \bar{\rho}_{ebsf}$. \\

\noindent\textbf{Theorem \ref{thm:safety_objective_input_uncertainty}, Claims (c) and (d)}

Instead of \eqref{eqn:ebcg_inequality_2}, EBSF ensures
\begin{equation} \label{eqn:ebsf_inequality_3}
    \rho_{ebsf}(\vec{x})\vec{g}(\vec{x})^\top(A_m\vec{x} + B\vec{r}_s) \geq -\alpha_r\bar{h}(\vec{x}) + \delta.
\end{equation}
Defining the error signal $e_\rho(t) = \rho_{ebsf}(\vec{x}) - \rho_{ebsf}(\vec{x}_m(t))$, in lieu of \eqref{eqn:Hdot_mr_plm_2}, we obtain
\if \numsides 2
\begin{equation} \label{eqn:Hdot_mr_plm_2_r-cbf}
    \begin{aligned}
        &\rho_{ebsf}(\vec{x}_m)\vec{g}(\vec{x}_m)^\top(A_m\vec{x}_m + B\vec{r}_s) \geq \\
        &\indenti{sp} -\alpha_r\bar{h}(\vec{x}_m) + \delta - \alpha_re_h - \rho_{ebsf}(\vec{x})\vec{e}_g^\top(A_m\vec{x}_m + B\vec{r}_s) \\
        &\indenti{sp} - \rho_{ebsf}(\vec{x})\vec{g}(\vec{x})^\top A_m\vec{e}_x - e_\rho\vec{g}(\vec{x}_m)^\top(A_m\vec{x}_m + B\vec{r}_s).
    \end{aligned}
\end{equation}
\else
\begin{equation} \label{eqn:Hdot_mr_plm_2_r-cbf}
    \begin{aligned}
        &\rho_{ebsf}(\vec{x}_m)\vec{g}(\vec{x}_m)^\top(A_m\vec{x}_m + B\vec{r}_s) \geq \\
        &\indenti{sp} -\alpha_r\bar{h}(\vec{x}_m) + \delta - \alpha_re_h - \rho_{ebsf}(\vec{x})\vec{e}_g^\top(A_m\vec{x}_m + B\vec{r}_s) - \rho_{ebsf}(\vec{x})\vec{g}(\vec{x})^\top A_m\vec{e}_x - e_\rho\vec{g}(\vec{x}_m)^\top(A_m\vec{x}_m + B\vec{r}_s).
    \end{aligned}
\end{equation}
\fi
The remainder of Claim (c) follows nearly identically to the proof in Appendix \ref{app:safety_objective_input_uncertainty}. Claim (d) also follows identically.
\if \arxivversion 1
    % \section{Rewriting EBSF as a Quadratic Programming Problem} \label{app:ebcg_quadratic_program}
\section{} \label{app:ebcg_quadratic_program}

% \pfcomment{This appendix will go in arXiv}

In this appendix, we show how the EBSF in \eqref{eqn:governor_input_uncertainty} can be rewritten as a quadratic programming problem. Define $\xi(\vec{x}, \vec{e}_x, \hat{\theta}, \hat{\lambda})$ as in \eqref{eqn:xi}, and further define $\vec{w}(\vec{x}) = (\frac{\partial h_r}{\partial\vec{x}}|_{\vec{x}}B)^\top$ and $\vec{z}_* = \diag(\hat{\lambda})^{-1}\vec{r}_*$. Then, dropping dependencies for ease of exposition, \eqref{eqn:governor_input_uncertainty} is equivalent to
\begin{subequations}
\begin{gather}
    \vec{r}_s = \diag(\hat{\lambda})\vec{z}_s, \label{eqn:ebcg_qp_1} \\
    \begin{gathered} \label{eqn:ebcg_qp_2}
        \vec{z}_s = \argminbelow_{\vec{z} \in \bbR^m} (\vec{z} - \vec{z}_*)^\top\diag(\hat{\lambda})^2(\vec{z} - \vec{z}_*)\ \mathrm{s.t.} \\
        \min_{\lambda \in L}\vec{w}^\top\left((1 - \beta_{ebsf})\diag(\hat{\lambda}) + \beta_{ebsf}\diag(\lambda)\right)\vec{z} \geq \xi.
    \end{gathered}
\end{gather}
\end{subequations}
Then, using the index $j$ to refer to the $j$th element of a vector, \eqref{eqn:ebcg_qp_2} can be further rewritten as
\begin{equation}
    \begin{gathered} \label{eqn:ebcg_qp_3}
        \vec{z}_s = \argminbelow_{\vec{z} \in \bbR^m} (\vec{z} - \vec{z}_*)^\top\diag(\hat{\lambda})^2(\vec{z} - \vec{z}_*)\ \mathrm{s.t.} \\
        \min_{\lambda \in L}\sum_{j=1}^m ((1 - \beta_{ebsf})\hat{\lambda}_j + \beta_{ebsf}\lambda_j)w_jz_j \geq \xi,
    \end{gathered}
\end{equation}
and using Assumption \ref{asn:lambda_in_set} and the fact that $\beta_{ebsf} \in [0, 1]$, \eqref{eqn:ebcg_qp_3} can be finally rewritten as
\begin{equation}
    \begin{gathered} \label{eqn:ebcg_qp_4}
        \vec{z}_s = \argminbelow_{\vec{z} \in \bbR^m} (\vec{z} - \vec{z}_*)^\top\diag(\hat{\lambda})^2(\vec{z} - \vec{z}_*)\ \mathrm{s.t.} \\
        \sum_{j=1}^m \left((1 - \beta_{ebsf})\hat{\lambda}_j + \beta_{ebsf}\begin{Bmatrix} \bar{\lambda}_j, & w_jz_j < 0 \\ \ubar{\lambda}_j, & w_jz_j \geq 0 \end{Bmatrix}\right)w_jz_j \geq \xi.
    \end{gathered}
\end{equation}

Now, we examine how the optimization problem in \eqref{eqn:ebcg_qp_4} behaves when $\diag(\vec{w})\vec{z}$ is in each orthant of $\bbR^m$. Consider the set of indices $\bbJ \subseteq \{1, \dots, m\}$ such that $w_jz_j \geq 0\ \forall j \in \bbJ$ and $w_jz_j < 0\ \forall j \in \{1, \dots, m\} \backslash \bbJ$. Define the vector $\vec{\ell}_\bbJ \in \bbR^m$ whose $j$th element is $(1 - \beta_{ebsf})\hat{\lambda}_j + \beta_{ebsf}\left\{\begin{smallmatrix} \bar{\lambda}_j, & j \notin \bbJ \\ \ubar{\lambda}_j, & j \in \bbJ \end{smallmatrix}\right\}$. There are $2^m$ unique sets $\bbJ$ corresponding to $\diag(\vec{w})\vec{z}$ being in each orthant of $\bbR^m$, and thus $\vec{\ell}_\bbJ$ can point in $2^m$ directions which are linearly independent unless $\beta_{ebsf} = 0$, in which case all $2^m$ possibilities for $\vec{\ell}_\bbJ$ coincide. As the constraint in \eqref{eqn:ebcg_qp_4} should be continuous on the boundaries between orthants and the vectors $\vec{\ell}_\bbJ$ are linearly independent in general, one would expect that each of the $2^m$ vectors $\vec{\ell}_\bbJ$ would produce its own half-space constraint and that the solution to \eqref{eqn:ebcg_qp_4} should lie in the intersection of all of the $2^m$ resulting half-spaces.

There are two orthants which are exceptions to this reasoning: the first orthant, where $\bbJ = \{1, \dots, m\}$, and the last orthant, where $\bbJ = \emptyset$. First, consider the case where $\xi \leq 0$. In this case, as all elements of $\vec{\ell}_\bbJ$ are positive for every possible $\bbJ$, any $\diag(\vec{w})\vec{z}$ lying in the first orthant is a feasible solution to \eqref{eqn:ebcg_qp_4}, and this orthant does not produce a half-space constraint. Instead, one can see through simple geometric reasoning that the half-space constraint produced by the first orthant permits $\diag(\vec{w})\vec{z}$ to lie in the first orthant, and thus the one constraint is sufficient for both orthants. Second, consider the case where $\xi > 0$. In this case, any $\diag(\vec{w})\vec{z}$ lying in the last orthant is infeasible, and this orthant does not produce a half-space constraint because no feasible solution can lie in it. Instead, one can see through simple geometric reasoning that the half-space constraint produced by the first orthant prevents $\diag(\vec{w})\vec{z}$ from lying in the last orthant, and thus the one constraint is sufficient for both orthants.

In summary, each of the $2^m - 2$ possibilities for $\bbJ$ such that $\bbJ \neq \emptyset$ and $\bbJ \neq \{1, \dots, m\}$ produces its own half-space constraint which must be considered separately from the others. Index the $2^m - 2$ such possible sets as $\bbJ^i$, $i \in \{1, \dots, 2^m - 2\}$. Additionally, when $\xi \leq 0$, $\bbJ = \emptyset$ produces a half-space constraint and $\bbJ = \{1, \dots, m\}$ does not, while when $\xi > 0$, $\bbJ = \{1, \dots, m\}$ produces a half-space constraint and $\bbJ = \emptyset$ does not. Thus, \eqref{eqn:ebcg_qp_4} can be rewritten with only half-space constraints, and is equivalent to the quadratic programming problem
\begin{equation}
    \begin{gathered} \label{eqn:ebcg_qp_final}
        \vec{z}_s = \argminbelow_{\vec{z} \in \bbR^m} (\vec{z} - \vec{z}_*)^\top\diag(\hat{\lambda})^2(\vec{z} - \vec{z}_*)\ \mathrm{s.t.} \\
        \vec{\ell}_{\bbJ^i}^\top\diag(\vec{w})\vec{z} \geq \xi\ \forall i \in \{1, \dots, 2^m - 2\}, \\
        \begin{cases} \vec{\ell}_{\emptyset}^\top\diag(\vec{w})\vec{z} \geq \xi, & \xi \leq 0 \\ \vec{\ell}_{\{1, \dots, m\}}^\top\diag(\vec{w})\vec{z} \geq \xi, & \xi > 0 \end{cases}
    \end{gathered}
\end{equation}
which has $2^m - 1$ constraints.
    % \section{Set Membership Identification as a Performance-Enhancing Augmentation} \label{app:SMID}
\section{} \label{app:SMID}

% \section{Set Membership Identification as a Performance-Enhancing Augmentation} \label{app:SMID}

In \cite{lopez2021RaCBFs}, Set-Membership Identification (SMID) was proposed as a way to reduce conservatism of the RaCBF approach, allowing $\vec{x}_m$ to approach $\vec{x}_*$ more closely in response to excitation. As shown in Theorems \ref{thm:safety_objective_no_input_uncertainty} and \ref{thm:safety_objective_input_uncertainty}, EBSB and EBSF have conservatism which vanishes regardless of excitation. Nevertheless, in the event that excitation is present, we show in this section how to leverage SMID in EBSB and EBSF to further reduce conservatism in response to excitation. Without making assumptions on the amount of excitation available to learn from, the addition of SMID can neither strengthen nor weaken the theoretical performance guarantees. Thus, we prove here that all results of Theorems \ref{thm:lyapunov_no_input_uncertainty}-\ref{thm:safety_objective_input_uncertainty} hold under augmentation with SMID, and we will explore the practical benefits of SMID in simulation in Section \ref{sec:simulations}. It should be noted that SMID cannot easily be extended to the setting where parameters are time-varying, and thus is only applicable when parameters are constant for all time.

\subsection{Set Membership Identification (SMID)}

Suppose that Assumptions \ref{asn:theta_in_set} and \ref{asn:lambda_in_set} are initially satisfied with parameter sets $\Theta_0$ and $L_0$ respectively. For the purposes of this work, an SMID algorithm is any method of updating the parameter sets at times $t_1, t_2, \dots, t_k, \dots$ such that:
\begin{enumerate}
    \item[(a)] $\Theta_{k+1} \subseteq \Theta_k\ \forall k \geq 0$, \label{eqn:smid_1}
    \item[(b)] $L_{k+1} \subseteq L_k\ \forall k \geq 0$, \label{eqn:smid_2}
    \item[(c)] $\theta_* \in \Theta_k\ \forall k \geq 0$, \label{eqn:smid_3}
    \item[(d)] $\lambda_* \in L_k\ \forall k \geq 0$, and \label{eqn:smid_4}
    \item[(e)] $\Theta_k$ and $L_k$ are closed and convex $\forall k \geq 0$.
\end{enumerate}
Then, $\Theta(t)$ and $L(t)$ are piecewise-constant known parameter sets containing $\theta_*$ and $\lambda_*$ respectively, where $t_0 = 0$, $\Theta(t) = \Theta_k\ \forall t \in [t_k, t_{k+1})$, and $L(t) = L_k\ \forall t \in [t_k, t_{k+1})$.

\subsection{Adding SMID to EBSB} \label{subsec:ebsb_smid}

Consider the following modifications to the control design in Section \ref{sec:no_input_uncertainty}. First, replace the adaptive law in \eqref{eqn:theta_adaptive_law} with
\begin{equation} \label{eqn:theta_adaptive_law_3}
    \dot{\hat{\theta}} = \mathrm{proj}_{T_{\Theta(t)}(\hat{\theta})}[-\gamma F(\vec{x})^\top B^\top P\vec{e}_x]
\end{equation}
with an additional modification at each SMID update time $t_k$ given by
\begin{equation} \label{eqn:theta_adaptive_law_smid}
    \hat{\theta}(t_k^+) = \mathrm{proj}_{\Theta_k}[\hat{\theta}(t_k^-)].
\end{equation}
Second, simply replace $\Theta$ in \eqref{eqn:error_based_safety_buffer} with $\Theta(t)$.
Then, given these two modifications, it is straightforward to show that the main results of Section \ref{sec:no_input_uncertainty} hold as follows:
\begin{theorem} \label{thm:ebsb_smid}
    For $\calC_0$ defined in \eqref{eqn:initial_condition_set}, let $\vec{x}_m(0) \in \calC_0$, $\vec{x}(0) \in \calC_0$, $\bar{h}(\vec{x}_m(0)) \geq \max\{\kappa\|\vec{e}_x(0)\|, \frac{\delta}{\alpha_r}\}$, and $\hat{\theta}(0) \in \Theta$. Then, for the closed-loop adaptive system consisting of \eqref{eqn:plant_no_input_uncertainty}-\eqref{eqn:hocbf_recursion} and \eqref{eqn:W}-\eqref{eqn:sat} under Assumptions \ref{asn:theta_in_set} and \ref{asn:max_error_relaxed_cbf} with the modifications above and $\Theta(t)$ given by an SMID algorithm, all results in Theorems \ref{thm:lyapunov_no_input_uncertainty} and \ref{thm:EBSB_R-CBF} hold.
\end{theorem}
\begin{proof}
    The extensions of the proofs of Theorems \ref{thm:lyapunov_no_input_uncertainty}- \ref{thm:EBSB_R-CBF} are straightforward. For $V$ in \eqref{eqn:lyapunov_function}, $\dot{V}$ remains unchanged for all $t \notin \{t_1, t_2, \dots\}$, and it is straightforward to verify that \eqref{eqn:theta_adaptive_law_smid} along with $\Theta_{k+1} \subseteq \Theta_k\ \forall k$ ensures that $V(t_k^+) \leq V(t_k^-)\ \forall k$. Therefore, the results of Theorem \ref{thm:lyapunov_no_input_uncertainty} hold. Furthermore, \eqref{eqn:theta_adaptive_law_3} and \eqref{eqn:theta_adaptive_law_smid} ensure that $\hat{\theta}(t) \in \Theta(t)\ \forall t \geq 0$.
    % , and thus $\Psi$ in \eqref{eqn:psi} with $\Theta$ replaced by $\Theta(t)$ has the property $\|\tilde{\theta}\| \leq \Psi(\hat{\theta})$.
    Therefore, the proof of Theorem \ref{thm:EBSB_R-CBF} remains unchanged.
\end{proof}

\subsection{Adding SMID to EBSF} \label{subsec:ebcg_smid}

Consider the following modifications to the control design in Section \ref{sec:input_uncertainty}. First, replace the adaptive laws in \eqref{eqn:theta_adaptive_law_2}-\eqref{eqn:lambda_adaptive_law} with
\begin{gather}
    \dot{\hat{\theta}} = \mathrm{proj}_{T_{\Theta(t)}(\hat{\theta})}[-\gamma_\theta F(\vec{x})^\top B^\top P\vec{e}_x], \label{eqn:theta_adaptive_law_4} \\
    \dot{\hat{\lambda}} = \mathrm{proj}_{T_{L(t)}(\hat{\lambda})}[\gamma_\lambda \diag(\vec{u})B^\top P\vec{e}_x] \label{eqn:lambda_adaptive_law_2}
\end{gather}
with an additional modification at each SMID update time $t_k$ given by
\begin{gather}
    \hat{\theta}(t_k^+) = \mathrm{proj}_{\Theta_k}[\hat{\theta}(t_k^-)], \label{eqn:theta_adaptive_law_smid_2} \\
    \hat{\lambda}(t_k^+) = \mathrm{proj}_{L_k}[\hat{\lambda}(t_k^-)]. \label{eqn:lambda_adaptive_law_smid}
\end{gather}
Second, simply replace $\Theta$ and $L$ in \eqref{eqn:governor_input_uncertainty_r-cbf} with $\Theta(t)$ and $L(t)$. Then, given these two modifications, it is straightforward to verify that the main results of Section \ref{sec:input_uncertainty} hold as follows:
\begin{theorem} \label{thm:ebcg_smid}
    For $\calC_0$ defined in \eqref{eqn:initial_condition_set}, let $\vec{x}(0) \in \calC_0$, $\hat{\theta}(0) \in \Theta$, and $\hat{\lambda}(0) \in L$. Then, for the closed-loop adaptive system consisting of \eqref{eqn:general_plant}, \eqref{eqn:reference_model}, \eqref{eqn:hocbf_base_case}-\eqref{eqn:hocbf_recursion}, \eqref{eqn:input_input_uncertainty}-\eqref{eqn:lambda_adaptive_law}, and \eqref{eqn:z_m}-\eqref{eqn:ebsf_interpolation_2} under Assumptions \ref{asn:lambda_in_set} and \ref{asn:max_error_relaxed_cbf_input_uncertainty} with the modifications above and $\Theta(t)$, $L(t)$ given by an SMID algorithm, all results in Theorems \ref{thm:lyapunov_input_uncertainty} and \ref{thm:EBSF_R-CBF} hold.
\end{theorem}
\begin{proof}
    The extensions of the proofs of Theorems \ref{thm:lyapunov_input_uncertainty} and \ref{thm:safety_objective_input_uncertainty} are straightforward. For $V$ in \eqref{eqn:lyapunov_function_input_uncertainty}, $\dot{V}$ remains unchanged for all $t \notin \{t_1, t_2, \dots\}$, and it is straightforward to verify that \eqref{eqn:theta_adaptive_law_smid_2}-\eqref{eqn:lambda_adaptive_law_smid} along with $\Theta_{k+1} \subseteq \Theta_k, L_{k+1} \subseteq L_k\ \forall k$ ensures that $V(t_k^+) \leq V(t_k^-)\ \forall k$. Therefore, the results of Theorem \ref{thm:lyapunov_input_uncertainty} hold. Furthermore, \eqref{eqn:theta_adaptive_law_4}-\eqref{eqn:lambda_adaptive_law_smid} ensure that $\hat{\theta}(t) \in \Theta(t)\ \forall t \geq 0$ and $\hat{\lambda}(t) \in L(t)\ \forall t \geq 0$. Therefore, the proof of Theorem \ref{thm:EBSF_R-CBF} remains unchanged.
\end{proof}
    % \section{Simulation Details} \label{app:simulation}
\section{} \label{app:simulation}

% \pfcomment{This appendix will go in arXiv}

\subsection{Ideal Control Design if All Parameters Are Known} \label{app:simulation_nominal_control}

For the simulation setting in Section \ref{sec:simulations}, if all parameters were known, one could satisfy the control and safety objectives as follows. Choose $\vec{u}$ as
\begin{equation}
    \vec{u} = \frac{1}{\lambda_*}(K\vec{x} + \vec{r}_s + F(\vec{x})\theta_*)
\end{equation}
where $K$ is chosen by LQR on the dynamics $(A, B)$. Define $A_m = A + BK$. Then, choose the nominal tracking reference input as
\begin{equation}
    \vec{r}_* = T(s)^{-1}\vec{p}_*, \quad T(s) = \begin{bmatrix} I_2 & 0_{2 \times 2} \end{bmatrix}(sI_4 - A_m)^{-1}B,
\end{equation}
where $T(s)$ is the transfer function from $\vec{r}_s$ to $\vec{p}$. Finally, describe the safe set $S$ as in \eqref{eqn:safe_set_1}-\eqref{eqn:safe_set_3} using the function
\begin{equation} \label{eqn:simulation_h}
    h(\vec{x}) = \sqrt{(x - x_{\rm pillar})^2 + (y - y_{\rm pillar})^2} - r_{\rm pillar}.
\end{equation}
Then, as $h$ has relative degree 2 with respect to the dynamics in \eqref{eqn:simulation_plant}, choose $\vec{r}_s$ according to
\begin{subequations}
    \begin{gather}
        h_1(\vec{x}) = h(\vec{x}), \quad h_2(\vec{x}) = \frac{\partial h_1}{\partial\vec{x}}\Big|_{\vec{x}}A_m\vec{x} + \alpha_1h_1(\vec{x}), \label{eqn:simulation_h_2} \\
        \begin{gathered} \label{eqn:simulation_ideal_governor}
            \vec{r}_s = \argmin_{\vec{r} \in \bbR^2} \|\vec{r} - \vec{r}_*\|^2\ \mathrm{s.t.} \\
            \frac{\partial h_2}{\partial\vec{x}}\Big|_{\vec{x}}(A_m\vec{x} + B\vec{r}) \geq -\alpha_2h(\vec{x}) + \delta
        \end{gathered}
    \end{gather}
\end{subequations}
with $\alpha_1, \alpha_2, \delta > 0$.

\subsection{aCBF Implementation} \label{app:aCBF}

In our simulations, for simplicity and for ease of comparison, we integrate adaptive control with the aCBF approach in a way which mirrors the structure of our EBSB and EBSF control designs, differing from the aCLF-aCBF-QP approach proposed in \cite{taylor2020aCBFs}. In this section, we consider Problem \ref{prob:input_matrix} with the plant given in \eqref{eqn:general_plant}, and treat Problem \ref{prob:unforced_dynamics} as merely a special case. Thus, we also extend the approach in \cite{taylor2020aCBFs} to handle uncertainties in the input matrix. We first require Assumptions \ref{asn:theta_in_set}-\ref{asn:lambda_in_set}, and design a standard model-reference adaptive controller using the safe reference model in \eqref{eqn:reference_model} and the control design in \eqref{eqn:input_input_uncertainty}-\eqref{eqn:lambda_adaptive_law}. Then, substituting \eqref{eqn:input_input_uncertainty} into \eqref{eqn:general_plant} and simplifying, we obtain
\begin{equation} \label{eqn:aCBF_cl_plant}
    \dot{\vec{x}} = A_m\vec{x} + B(\vec{r}_s + F(\vec{x})(\hat{\theta} - \theta_*) - \diag(\vec{u})(\hat{\lambda} - \lambda_*)).
\end{equation}

Now, to ensure safety of the plant, we construct a high-order CBF as in \eqref{eqn:hocbf_base_case}-\eqref{eqn:hocbf_recursion} for any $\alpha_1, \dots, \alpha_{r-1} > 0$, and following \cite{taylor2020aCBFs}, we design the adaptive control barrier function as
\begin{equation} \label{eqn:aCBF}
    h_a(\vec{x}) = \begin{cases} \Delta_{acbf}^2, & h_r(\vec{x}) \geq \Delta_{acbf}, \\ \Delta_{acbf}^2 - (h_r(\vec{x}) - \Delta_{acbf})^2, & h_r(\vec{x}) < \Delta_{acbf} \end{cases}
\end{equation}
for a constant $\Delta_{acbf} > 0$ to be specified later. Then, we choose $\vec{r}_s$ according to
\if \numsides 2
\begin{equation} \label{eqn:aCBF_governor}
    \begin{gathered}
        \vec{r}_s = \argminbelow_{\vec{r} \in \bbR^m} \|\vec{r} - \vec{r}_*\|^2\ \mathrm{s.t.} \\
        \begin{aligned}
            &\frac{\partial h_a}{\partial\vec{x}}\Big|_{\vec{x}}\Big(A_m\vec{x} \\
            &\indenti{sp} + B\left(\vec{r} + F(\vec{x})(\hat{\theta} - \hat{\theta}_s) - \diag(\vec{u})(\hat{\lambda} - \hat{\lambda}_s)\right)\Big) \geq 0
        \end{aligned}
    \end{gathered}
\end{equation}
\else
\begin{equation} \label{eqn:aCBF_governor}
    \begin{gathered}
        \vec{r}_s = \argminbelow_{\vec{r} \in \bbR^m} \|\vec{r} - \vec{r}_*\|^2\ \mathrm{s.t.} \\
        \frac{\partial h_a}{\partial\vec{x}}\Big|_{\vec{x}}\left(A_m\vec{x} + B\left(\vec{r} + F(\vec{x})(\hat{\theta} - \hat{\theta}_s) - \diag(\vec{u})(\hat{\lambda} - \hat{\lambda}_s)\right)\right) \geq 0
    \end{gathered}
\end{equation}
\fi
which is a quadratic programming problem in $\vec{r}$ by substituting \eqref{eqn:input_input_uncertainty} for $\vec{u}$. $\hat{\theta}_s$ and $\hat{\lambda}_s$ are auxiliary parameter estimates with adaptive laws given by
\begin{gather}
    \dot{\hat{\theta}}_s = \mathrm{proj}_{T_\Theta(\hat{\theta}_s)}\left[\gamma_{\theta,s}\left(\frac{\partial h_a}{\partial\vec{x}}\Big|_{\vec{x}}BF(\vec{x})\right)^\top\right], \label{eqn:aCBF_theta_adaptive_law} \\
    \dot{\hat{\lambda}}_s = \mathrm{proj}_{T_L(\hat{\lambda}_s)}\left[-\gamma_{\lambda,s}\left(\frac{\partial h_a}{\partial\vec{x}}\Big|_{\vec{x}}B\diag(\vec{u})\right)^\top\right] \label{eqn:aCBF_lambda_adaptive_law}
\end{gather}
for any adaptive gains $\gamma_{\theta,s}, \gamma_{\lambda,s} > 0$. Finally, as in \cite{taylor2020aCBFs}, defining $\tilde{\theta}_s := \hat{\theta}_s - \theta_*$ and $\tilde{\lambda}_s := \hat{\lambda}_s - \lambda_*$, we require $\Delta_{acbf} \geq \sqrt{\frac{1}{2\gamma_{\theta,s}}\|\tilde{\theta}_s(0)\|^2 + \frac{1}{2\gamma_{\lambda,s}}\|\tilde{\lambda}_s(0)\|^2}$. The proof of safety follows the proof of Theorem 3 in \cite{taylor2020aCBFs}.

\subsection{RaCBF Implementation} \label{app:RaCBF}

In our simulations, for simplicity and for ease of comparison, we integrate adaptive control with the RaCBF approach in a way which mirrors the structure of our EBSB and EBSF control designs, differing from the approach proposed in \cite{lopez2021RaCBFs}. In this section, we consider Problem \ref{prob:input_matrix} with the plant given in \eqref{eqn:general_plant}, and treat Problem \ref{prob:unforced_dynamics} as merely a special case. Thus, we also extend the approach in \cite{lopez2021RaCBFs} to handle uncertainties in the input matrix. We first require Assumptions \ref{asn:theta_in_set}-\ref{asn:lambda_in_set}, and design a standard model-reference adaptive controller using the safe reference model in \eqref{eqn:reference_model} and the control design in \eqref{eqn:input_input_uncertainty}-\eqref{eqn:lambda_adaptive_law}. Then, substituting \eqref{eqn:input_input_uncertainty} into \eqref{eqn:general_plant} and simplifying, we obtain the closed-loop plant in \eqref{eqn:aCBF_cl_plant}.

Now, to ensure safety of the plant, we construct a high-order CBF as in \eqref{eqn:hocbf_base_case}-\eqref{eqn:hocbf_recursion} for any $\alpha_1, \dots, \alpha_{r-1} > 0$, and we choose $h_r(\vec{x})$ as the robust adaptive control barrier function. Then, we choose $\vec{r}_s$ according to
\if \numsides 2
\begin{equation} \label{eqn:RaCBF_governor}
    \begin{gathered}
        \vec{r}_s = \argminbelow_{\vec{r} \in \bbR^m} \|\vec{r} - \vec{r}_*\|^2\ \mathrm{s.t.} \\
        \frac{\partial h_r}{\partial\vec{x}}\Big|_{\vec{x}}\left(A_m\vec{x} + B\left(\vec{r} + F(\vec{x})(\hat{\theta} - \hat{\theta}_s) - \diag(\vec{u})(\hat{\lambda} - \hat{\lambda}_s)\right)\right) \\
        \geq -\alpha_rh_r(\vec{x}) + \Delta_{racbf},
    \end{gathered}
\end{equation}
\else
\begin{equation} \label{eqn:RaCBF_governor}
    \begin{gathered}
        \vec{r}_s = \argminbelow_{\vec{r} \in \bbR^m} \|\vec{r} - \vec{r}_*\|^2\ \mathrm{s.t.} \\
        \frac{\partial h_r}{\partial\vec{x}}\Big|_{\vec{x}}\left(A_m\vec{x} + B\left(\vec{r} + F(\vec{x})(\hat{\theta} - \hat{\theta}_s) - \diag(\vec{u})(\hat{\lambda} - \hat{\lambda}_s)\right)\right) \geq -\alpha_rh_r(\vec{x}) + \Delta_{racbf},
    \end{gathered}
\end{equation}
\fi
which is a quadratic programming problem in $\vec{r}$ by substituting \eqref{eqn:input_input_uncertainty} for $\vec{u}$, for a constant $\Delta_{racbf} > 0$ to be specified later. $\hat{\theta}_s$ and $\hat{\lambda}_s$ are auxiliary parameter estimates with adaptive laws given by
\begin{gather}
    \dot{\hat{\theta}}_s = \mathrm{proj}_{T_\Theta(\hat{\theta}_s)}\left[\gamma_{\theta,s}\left(\frac{\partial h_r}{\partial\vec{x}}\Big|_{\vec{x}}BF(\vec{x})\right)^\top\right], \label{eqn:RaCBF_theta_adaptive_law} \\
    \dot{\hat{\lambda}}_s = \mathrm{proj}_{T_L(\hat{\lambda}_s)}\left[-\gamma_{\lambda,s}\left(\frac{\partial h_r}{\partial\vec{x}}\Big|_{\vec{x}}B\diag(\vec{u})\right)^\top\right] \label{eqn:RaCBF_lambda_adaptive_law}
\end{gather}
for any adaptive gains $\gamma_{\theta,s}, \gamma_{\lambda,s} > 0$. Finally, as in \cite{lopez2021RaCBFs}, defining $\tilde{\theta}_s := \hat{\theta}_s - \theta_*$ and $\tilde{\lambda}_s := \hat{\lambda}_s - \lambda_*$, we require $\Delta_{racbf} \geq \sup_{t \geq 0} \frac{\alpha_r}{2}(\frac{1}{\gamma_{\theta,s}}\|\tilde{\theta}_s(t)\|^2 + \frac{1}{\gamma_{\lambda,s}}\|\tilde{\lambda}_s(t)\|^2)$. The proof of safety follows the proof of Theorem 2 in \cite{lopez2021RaCBFs}.

\subsection{Set Membership Identification Implementation} \label{app:SMID_implementation}

Our implementation of Set Membership Identification maintains high-probability upper and lower bounds on each uncertain parameter, and updating the bounds in response to input-output data. We will describe our SMID approach for Problem \ref{prob:input_matrix} here, and the approach for Problem \ref{prob:input_matrix} is simply a special case. To begin, we Euler-discretize the dynamics in \eqref{eqn:general_plant} with a time step $\Delta t > 0$, resulting in the discrete-time dynamics
\begin{equation} \label{eqn:discretized_dynamics}
    \vec{x}_{k+1} = (I_n + A\Delta t)\vec{x}_k + B\Delta t(\diag(\lambda_*)\vec{u}_k - F(\vec{x}_k)\theta_*) + \vec{w}_{k+1}
\end{equation}
where $\vec{w}_{k+1}$ represents the error in the discretization. Define the quantities $\vec{y}_{k+1} = (B\Delta t)^\dagger(\vec{x}_{k+1} - (I_n + A\Delta t)\vec{x}_k)$, $\Phi_k = [-F(\vec{x}_k), \diag(\vec{u}_k)]$, $\xi_* = [\theta_*^\top, \lambda_*^\top]^\top$, and $\eta_{k+1} = (B\Delta t)^\dagger\vec{w}_{k+1}$, where $\dagger$ represents the Moore-Penrose pseudo-inverse. We then make the following modeling assumptions:
\begin{assumption} \label{asn:B_pinv}
    $B$ has linearly independent columns, so that $B^\dagger = (B^\top B)^{-1}B^\top$.
\end{assumption}
\begin{assumption}
    At each time step $k \geq 0$, the scaled discretization error $\eta_{k+1}$ is drawn from a zero-mean Gaussian distribution with symmetric positive-definite covariance $\Sigma_{k+1}$.
\end{assumption}
Then, \eqref{eqn:discretized_dynamics} is equivalent to
\begin{equation} \label{eqn:linear_regression}
    \vec{y}_{k+1} = \Phi_k\zeta_* + \eta_{k+1}
\end{equation}
with $\Phi_k \in \bbR^{m \times (p+m)}$ and $\vec{y}_{k+1} \in \bbR^m$ measured, $\zeta_* \in \bbR^{p+m}$ the unknown parameter vector, and $\eta_{k+1} \sim \calN(0, \Sigma_{k+1})\ \forall k \geq 0$.

One can show using Bayes' rule that, if the probability distribution of $\zeta_*$ at time step $k$ is characterized by the mean and covariance $\hat{\zeta}_k$ and $P_k$, then the updated mean and covariance in response to the data $(\Phi_k, \vec{y}_{k+1})$ are given by
\begin{subequations}
    \begin{align}
        \hat{\zeta}_{k+1} &= \hat{\zeta}_k + P_k\Phi_k^\top(\Sigma_{k+1} + \Phi_kP_k\Phi_k^\top)^{-1}(\vec{y}_{k+1} - \Phi_k\hat{\zeta}_k), \label{eqn:RLS_mean_update} \\
        P_{k+1} &= P_k - P_k\Phi_k^\top(\Sigma_{k+1} + \Phi_kP_k\Phi_k^\top)^{-1}\Phi_kP_k. \label{eqn:RLS_covariance_update}
    \end{align}
\end{subequations}
Now, given a mean and covariance for the probability distribution of $\zeta_*$ at each time step, we can calculate high-probability upper and lower bounds for each element of $\zeta_*$ as follows.

At time step $k$, we have $\zeta_* \sim \calN(\hat{\zeta}_k, P_k)$. Then, given any $\delta > 0$, we want to find $\ubar{\zeta}_{ki}$ and $\bar{\zeta}_{ki}$ such that $\bbP[\zeta_{*i} \in [\ubar{\zeta}_{ki}, \bar{\zeta}_{ki}]\ \forall i \in [1, p+m]] \geq 1 - \delta$, where $\bbP[X]$ denotes the probability of event $X$. First, we diagonalize the covariance matrix using SVD to obtain
\begin{equation} \label{eqn:cov_SVD}
    P_k = V_kD_kV_k^\top,
\end{equation}
where $V_k$ is orthogonal and $D_k = \diag(\vec{d}_k)$. Then, define the vectors
\begin{equation} \label{eqn:rotated_zeta}
    \vec{v}_* = V_k^\top\zeta_*, \quad \hat{\vec{v}}_k = V_k^\top\hat{\zeta}_k.
\end{equation}
It is straightforward to show that $\vec{v}_* \sim \calN(\hat{\vec{v}}_k, D_k)$, or in other words, that $v_{*i} \sim \calN(\hat{v}_{ki}, d_{ki})$ for each $i \in [1, p+m]$. Given that the elements $v_{*i}$ are independent for all $i$, we can straightforwardly produce high-probability bounds using their cumulative distribution functions: defining
\begin{equation} \label{eqn:high_probability_offsets}
    \tilde{v}_{ki} = \sqrt{2d_{ki}}\erf^{-1}\left(1 - \frac{\delta}{p+m}\right)
\end{equation}
where $\erf^{-1}$ denotes the inverse error function, it is easy to show that $\bbP[v_{*i} \in [\hat{v}_{ki} - \tilde{v}_{ki}, \hat{v}_{ki} + \tilde{v}_{ki}]] \geq 1 - \frac{\delta}{p+m}$ for each $i$. Finally, denoting the $(i,j)$th element of $V_k$ as $v_{kij}$, for each $j \in [1, p+m]$, with probability at least $1 - \frac{\delta}{p+m}$, we have
\if \numsides 2
\begin{subequations}
    \begin{align}
        \zeta_{*j} &= \sum_{j=1}^{p+m} v_{kij}v_{*j} \geq \sum_{j=1}^{p+m} v_{kij}\begin{Bmatrix} \hat{v}_{kj} - \tilde{v}_{kj}, & v_{kij} \geq 0 \\ \hat{v}_{kj} + \tilde{v}_{kj}, & v_{kij} < 0 \end{Bmatrix} \nonumber \\
        &= \sum_{j=1}^{p+m} (v_{kij}\hat{v}_{kj} - |v_{kij}|\tilde{v}_{kj}), \label{eqn:zeta_lower_bound} \\
        \zeta_{*j} &= \sum_{j=1}^{p+m} v_{kij}v_{*j} \leq \sum_{j=1}^{p+m} v_{kij}\begin{Bmatrix} \hat{v}_{kj} + \tilde{v}_{kj}, & v_{kij} \geq 0 \\ \hat{v}_{kj} - \tilde{v}_{kj}, & v_{kij} < 0 \end{Bmatrix} \nonumber \\
        &= \sum_{j=1}^{p+m} (v_{kij}\hat{v}_{kj} + |v_{kij}|\tilde{v}_{kj}). \label{eqn:zeta_upper_bound}
    \end{align}
\end{subequations}
\else
\begin{subequations}
    \begin{align}
        \zeta_{*j} &= \sum_{j=1}^{p+m} v_{kij}v_{*j} \geq \sum_{j=1}^{p+m} v_{kij}\begin{Bmatrix} \hat{v}_{kj} - \tilde{v}_{kj}, & v_{kij} \geq 0 \\ \hat{v}_{kj} + \tilde{v}_{kj}, & v_{kij} < 0 \end{Bmatrix} = \sum_{j=1}^{p+m} (v_{kij}\hat{v}_{kj} - |v_{kij}|\tilde{v}_{kj}), \label{eqn:zeta_lower_bound} \\
        \zeta_{*j} &= \sum_{j=1}^{p+m} v_{kij}v_{*j} \leq \sum_{j=1}^{p+m} v_{kij}\begin{Bmatrix} \hat{v}_{kj} + \tilde{v}_{kj}, & v_{kij} \geq 0 \\ \hat{v}_{kj} - \tilde{v}_{kj}, & v_{kij} < 0 \end{Bmatrix} = \sum_{j=1}^{p+m} (v_{kij}\hat{v}_{kj} + |v_{kij}|\tilde{v}_{kj}). \label{eqn:zeta_upper_bound}
    \end{align}
\end{subequations}
\fi
Then, defining $\ubar{\zeta}_{ki} = \sum_{j=1}^{p+m} (v_{kij}\hat{v}_{kj} - |v_{kij}|\tilde{v}_{kj})$ and $\bar{\zeta}_{ki} = \sum_{j=1}^{p+m} (v_{kij}\hat{v}_{kj} + |v_{kij}|\tilde{v}_{kj})$, a simple union bound shows that $\bbP[\zeta_{*i} \in [\ubar{\zeta}_{ki}, \bar{\zeta}_{ki}]\ \forall i \in [1, p+m]] \geq 1 - \delta$.

As a final step before implementation, we must initialize $\hat{\zeta}_0$ and $P_0$. We can simply choose $\hat{\zeta}_0 = [\hat{\theta}(0), \hat{\lambda}(0)]^\top$, using the initializations from the adaptive laws in \eqref{eqn:theta_adaptive_law_2}-\eqref{eqn:lambda_adaptive_law}. Then, to initialize $P_0$, we require Assumptions \ref{asn:theta_in_set}-\ref{asn:lambda_in_set}, and choose $P_0$ such that the element-wise upper and lower bounds on $\zeta_*$ resulting from \eqref{eqn:cov_SVD}-\eqref{eqn:zeta_upper_bound} fully enclose $\Theta$ and $L$.
\fi

%----------------------------------------------------------------
%----------------------------------------------------------------
% FUNDING ACKNOWLEDGMENT
%----------------------------------------------------------------
%----------------------------------------------------------------

% \section*{Acknowledgment}
% This work was supported by the Boeing Strategic University Initiative and by the Air Force Research Laboratory.

%----------------------------------------------------------------
%----------------------------------------------------------------
% REFERENCES
%----------------------------------------------------------------
%----------------------------------------------------------------

\bibliographystyle{IEEEtran}
\bibliography{references/references}

@Book{Ioannou1996,
  title     = {Robust Adaptive Control},
  publisher = {Prentice-Hall},
  year      = {1996},
  author    = {Petros A Ioannou and Jing Sun},
}

@Book{Sastry_1989,
  title     = {Adaptive Control: Stability, Convergence and Robustness},
  publisher = {Prentice-Hall},
  year      = {1989},
  author    = {Shankar Sastry and Marc Bodson},
}

@book{Ast13,
	title={Adaptive control},
	author={{\AA}str{\"o}m, Karl J and Wittenmark, Bj{\"o}rn},
	year={2013},
	publisher={Courier Corporation}
}

@Book{Slotine1991,
  author    = {Slotine, Jean-Jacques E and Li, Weiping},
  publisher = {Prentice hall Englewood Cliffs, NJ},
  title     = {Applied nonlinear control},
  year      = {1991},
}

@book{Krstic1995,
  title={Nonlinear and adaptive control design},
  author={Krstic, Miroslav and Kokotovic, Petar V and Kanellakopoulos, Ioannis},
  year={1995},
  publisher={John Wiley \& Sons, Inc.}
}

@Book{Narendra05,
  author    = {Kumpati S. Narendra and Anuradha M. Annaswamy},
  publisher = {Dover Publications},
  title     = {Stable Adaptive Systems},
  year      = {2005},
  address   = {NJ},
  note      = {(original publication by Prentice-Hall Inc., 1989)},
}

@INPROCEEDINGS{gurriet2018,
  author={Gurriet, Thomas and Singletary, Andrew and Reher, Jacob and Ciarletta, Laurent and Feron, Eric and Ames, Aaron},
  booktitle={2018 ACM/IEEE 9th International Conference on Cyber-Physical Systems (ICCPS)}, 
  title={Towards a Framework for Realizable Safety Critical Control through Active Set Invariance}, 
  year={2018},
  volume={},
  number={},
  pages={98-106},
  keywords={Safety;Kernel;Uncertainty;Optimization;Control systems;Lyapunov methods;Cyber-physical systems;Set invariance;Safety;Non linear control;Real time optimization;Barrier functions},
  doi={10.1109/ICCPS.2018.00018}}

@INPROCEEDINGS{xiao2019HOCBF,
  author={Xiao, Wei and Belta, Calin},
  booktitle={2019 IEEE 58th Conference on Decision and Control (CDC)}, 
  title={Control Barrier Functions for Systems with High Relative Degree}, 
  year={2019},
  volume={},
  number={},
  pages={474-479},
  keywords={Cruise control;Optimization;Lyapunov methods;Optimal control;Vehicle dynamics},
  doi={10.1109/CDC40024.2019.9029455}}

@INPROCEEDINGS{ames2019CBF,
  author={Ames, Aaron D. and Coogan, Samuel and Egerstedt, Magnus and Notomista, Gennaro and Sreenath, Koushil and Tabuada, Paulo},
  booktitle={2019 18th European Control Conference (ECC)}, 
  title={Control Barrier Functions: Theory and Applications}, 
  year={2019},
  volume={},
  number={},
  pages={3420-3431},
  keywords={},
  doi={10.23919/ECC.2019.8796030}}

@INPROCEEDINGS{nguyen2016walkingCBFs,
  author={Nguyen, Quan and Hereid, Ayonga and Grizzle, Jessy W. and Ames, Aaron D. and Sreenath, Koushil},
  booktitle={2016 IEEE 55th Conference on Decision and Control (CDC)}, 
  title={3D dynamic walking on stepping stones with control barrier functions}, 
  year={2016},
  volume={},
  number={},
  pages={827-834},
  keywords={Legged locomotion;Foot;Three-dimensional displays;Robot kinematics;Control systems;Humanoid robots},
  doi={10.1109/CDC.2016.7798370}}

@INPROCEEDINGS{gunter2022automatedvehicles,
  author={Gunter, George and Nice, Matthew and Bunting, Matt and Sprinkle, Jonathan and Work, Daniel B.},
  booktitle={2022 2nd Workshop on Data-Driven and Intelligent Cyber-Physical Systems for Smart Cities Workshop (DI-CPS)}, 
  title={Experimental testing of a control barrier function on an automated vehicle in live multi-lane traffic}, 
  year={2022},
  volume={},
  number={},
  pages={31-35},
  keywords={Actuators;Smart cities;Radar measurements;Conferences;Radar;Traffic control;Cyber-physical systems;Autonomous-Vehicles;Transportation;Safe-Control},
  doi={10.1109/DI-CPS56137.2022.00011}}

@ARTICLE{alan2023automatedvehicles,
  author={Alan, Anil and Taylor, Andrew J. and He, Chaozhe R. and Ames, Aaron D. and Orosz, Gábor},
  journal={IEEE Transactions on Control Systems Technology}, 
  title={Control Barrier Functions and Input-to-State Safety With Application to Automated Vehicles}, 
  year={2023},
  volume={31},
  number={6},
  pages={2744-2759},
  keywords={Safety;Control systems;Automotive engineering;Simulation;Robustness;Control system synthesis;Connected vehicles;Connected automated vehicles (CAVs);control barrier functions (CBFs);input-to-state safety (ISSf);robust safety-critical control},
  doi={10.1109/TCST.2023.3286090}}

@InProceedings{pmlr-v229-zhang23h,
  title = 	 {Neural Graph Control Barrier Functions Guided Distributed Collision-avoidance Multi-agent Control},
  author =       {Zhang, Songyuan and Garg, Kunal and Fan, Chuchu},
  booktitle = 	 {Proceedings of The 7th Conference on Robot Learning},
  pages = 	 {2373--2392},
  year = 	 {2023},
  editor = 	 {Tan, Jie and Toussaint, Marc and Darvish, Kourosh},
  volume = 	 {229},
  series = 	 {Proceedings of Machine Learning Research},
  month = 	 {06--09 Nov},
  publisher =    {PMLR},
  pdf = 	 {https://proceedings.mlr.press/v229/zhang23h/zhang23h.pdf},
  url = 	 {https://proceedings.mlr.press/v229/zhang23h.html}
}

@INPROCEEDINGS{taylor2020aCBFs,
  author={Taylor, Andrew J. and Ames, Aaron D.},
  booktitle={2020 American Control Conference (ACC)}, 
  title={Adaptive Safety with Control Barrier Functions}, 
  year={2020},
  volume={},
  number={},
  pages={1399-1405},
  keywords={Safety;Adaptive control;Lyapunov methods;Adaptation models;Uncertainty;Stability analysis},
  doi={10.23919/ACC45564.2020.9147463}}

@ARTICLE{lopez2021RaCBFs,
  author={Lopez, Brett T. and Slotine, Jean-Jacques E. and How, Jonathan P.},
  journal={IEEE Control Systems Letters}, 
  title={Robust Adaptive Control Barrier Functions: An Adaptive and Data-Driven Approach to Safety}, 
  year={2021},
  volume={5},
  number={3},
  pages={1031-1036},
  keywords={Safety;Adaptation models;Aerodynamics;Adaptive control;Uncertainty;Nonlinear dynamical systems;Constrained control;uncertain systems;adaptive control},
  doi={10.1109/LCSYS.2020.3005923}}

@article{Nagumo_1942,
  title={{\"U}ber die Lage der Integralkurven gew{\"o}hnlicher Differentialgleichungen},
  author={Mitio Nagumo},
  journal={Nippon Sugaku-Buturigakkwai Kizi Dai 3 Ki},
  volume={24},
  number={ },
  pages={551-559},
  year={1942},
  doi={10.11429/ppmsj1919.24.0\_551}
}

@ARTICLE{alan2023DisturbanceObserverACBF,
  author={Alan, Anil and Molnar, Tamas G. and Daş, Ersin and Ames, Aaron D. and Orosz, Gábor},
  journal={IEEE Control Systems Letters}, 
  title={Disturbance Observers for Robust Safety-Critical Control With Control Barrier Functions}, 
  year={2023},
  volume={7},
  number={},
  pages={1123-1128},
  keywords={Safety;Uncertainty;Disturbance observers;Roads;Transient analysis;Vehicle dynamics;TV;Robust safety;disturbance observers;control barrier functions},
  doi={10.1109/LCSYS.2022.3232059}}

@InProceedings{pmlr-v120-taylor20a,
  title =    {Learning for Safety-Critical Control with Control Barrier Functions},
  author =       {Taylor, Andrew and Singletary, Andrew and Yue, Yisong and Ames, Aaron},
  booktitle =    {Proceedings of the 2nd Conference on Learning for Dynamics and Control},
  pages =    {708--717},
  year =   {2020},
  editor =   {Bayen, Alexandre M. and Jadbabaie, Ali and Pappas, George and Parrilo, Pablo A. and Recht, Benjamin and Tomlin, Claire and Zeilinger, Melanie},
  volume =   {120},
  series =   {Proceedings of Machine Learning Research},
  month =    {10--11 Jun},
  publisher =    {PMLR},
  pdf =    {http://proceedings.mlr.press/v120/taylor20a/taylor20a.pdf},
  url =    {https://proceedings.mlr.press/v120/taylor20a.html},
  abstract =   {Modern nonlinear control theory seeks to endow systems with properties of stability and safety, and have been deployed successfully in multiple domains. Despite this success, model uncertainty remains a significant challenge in synthesizing safe controllers, leading to degradation in the properties provided by the controllers. This paper develops a machine learning framework utilizing Control Barrier Functions (CBFs) to reduce model uncertainty as it impact the safe behavior of a system. This approach iteratively collects data and updates a controller, ultimately achieving safe behavior. We validate this method in simulation and experimentally on a Segway platform.}
}

@ARTICLE{das2025RobustCBFUncertaintyEstimation,
  author={Daş, Ersin and Burdick, Joel W.},
  journal={IEEE Transactions on Automatic Control}, 
  title={Robust Control Barrier Functions Using Uncertainty Estimation With Application to Mobile Robots}, 
  year={2025},
  volume={70},
  number={7},
  pages={4766-4773},
  keywords={Uncertainty;Safety;Control systems;Estimation;Robust control;Mobile robots;Control design;Upper bound;Training;Robustness;Constrained control;control barrier functions (CBFs);robotics;robust control;uncertainty estimation},
  doi={10.1109/TAC.2025.3538742}}

@ARTICLE{isaly2024AdaptiveSafetyRISE,
  author={Isaly, Axton and Patil, Omkar Sudhir and Sweatland, Hannah M. and Sanfelice, Ricardo G. and Dixon, Warren E.},
  journal={IEEE Transactions on Automatic Control}, 
  title={Adaptive Safety with a RISE-Based Disturbance Observer}, 
  year={2024},
  volume={69},
  number={7},
  pages={4883-4890},
  keywords={Safety;Uncertainty;Disturbance observers;Upper bound;Estimation error;Trajectory;Control systems;Adaptive control;control barrier functions;estimation;nonlinear systems;uncertain systems},
  doi={10.1109/TAC.2024.3358210}}

@ARTICLE{sun2024DOBSCC,
  author={Sun, Jiankun and Yang, Jun and Zeng, Zhigang},
  journal={IEEE Transactions on Automatic Control}, 
  title={Safety-Critical Control With Control Barrier Function Based on Disturbance Observer}, 
  year={2024},
  volume={69},
  number={7},
  pages={4750-4756},
  keywords={Safety;Dynamical systems;Control design;Time-varying systems;Control systems;Disturbance observers;Estimation error;Control barrier function;disturbance/uncertain estimation and attenuation;nonlinear disturbance observer;safety-critical control},
  doi={10.1109/TAC.2024.3352707}}

@INPROCEEDINGS{autenrieb2023EBR,
  author={Autenrieb, Johannes and Annaswamy, Anuradha},
  booktitle={2023 62nd IEEE Conference on Decision and Control (CDC)}, 
  title={Safe and Stable Adaptive Control for a Class of Dynamic Systems}, 
  year={2023},
  volume={},
  number={},
  pages={5059-5066},
  keywords={Damping;Missiles;Uncertainty;Stability analysis;Real-time systems;Safety;Quadratic programming},
  doi={10.1109/CDC49753.2023.10383779}}

@misc{solanocastellanos2025SafeFormationControl,
      title={Safe and Stable Formation Control with Autonomous Multi-Agents Using Adaptive Control (Extended Version)}, 
      author={Jose A. Solano-Castellanos and Peter A. Fisher and Anuradha Annaswamy},
      year={2025},
      eprint={2403.15674},
      archivePrefix={arXiv},
      primaryClass={eess.SY},
      url={https://arxiv.org/abs/2403.15674}, 
}

@article{xu2015CBFRobustness,
title = {Robustness of Control Barrier Functions for Safety Critical Control},
journal = {IFAC-PapersOnLine},
volume = {48},
number = {27},
pages = {54-61},
year = {2015},
note = {Analysis and Design of Hybrid Systems ADHS},
issn = {2405-8963},
doi = {https://doi.org/10.1016/j.ifacol.2015.11.152},
url = {https://www.sciencedirect.com/science/article/pii/S2405896315024106},
author = {Xiangru Xu and Paulo Tabuada and Jessy W. Grizzle and Aaron D. Ames},
keywords = {Barrier function, Invariant set, Quadratic program, Robustness, Continuity}
}

@ARTICLE{kolathaya2019ISSf,
  author={Kolathaya, Shishir and Ames, Aaron D.},
  journal={IEEE Control Systems Letters}, 
  title={Input-to-State Safety With Control Barrier Functions}, 
  year={2019},
  volume={3},
  number={1},
  pages={108-113},
  keywords={Safety;Stability criteria;Uncertainty;Real-time systems;Robots;Sensors;Safety critical control;barrier functions;input-to-state safety;autonomous systems},
  doi={10.1109/LCSYS.2018.2853698}}

@ARTICLE{alan2023PBF,
  author={Alan, Anil and Molnar, Tamas G. and Ames, Aaron D. and Orosz, Gábor},
  journal={IEEE Control Systems Letters}, 
  title={Parameterized Barrier Functions to Guarantee Safety Under Uncertainty}, 
  year={2023},
  volume={7},
  number={},
  pages={2077-2082},
  keywords={Safety;Uncertainty;Robustness;Robust control;Additives;Sufficient conditions;Stability analysis;Safety guarantee;uncertainty;robust control barrier functions},
  doi={10.1109/LCSYS.2023.3285188}}

@ARTICLE{ohnishi2019BarrierCertifiedAdaptiveRL,
  author={Ohnishi, Motoya and Wang, Li and Notomista, Gennaro and Egerstedt, Magnus},
  journal={IEEE Transactions on Robotics}, 
  title={Barrier-Certified Adaptive Reinforcement Learning With Applications to Brushbot Navigation}, 
  year={2019},
  volume={35},
  number={5},
  pages={1186-1205},
  keywords={Safety;Adaptation models;Kernel;Robots;Optimization;Reinforcement learning;Lyapunov methods;Brushbot;control barrier certificate;kernel adaptive filter;safe learning;sparse optimization},
  doi={10.1109/TRO.2019.2920206}}

@article{cheng2019SafeRL,
  title={End-to-End Safe Reinforcement Learning through Barrier Functions for Safety-Critical Continuous Control Tasks},
  volume={33},
  url={https://ojs.aaai.org/index.php/AAAI/article/view/4213},
  DOI={10.1609/aaai.v33i01.33013387},
  number={01},
  journal={Proceedings of the AAAI Conference on Artificial Intelligence},
  author={Cheng, Richard and Orosz, Gábor and Murray, Richard M. and Burdick, Joel W.},
  year={2019},
  month={Jul.},
  pages={3387-3395} }

@ARTICLE{fisac2019GeneralSafetyFramework,
  author={Fisac, Jaime F. and Akametalu, Anayo K. and Zeilinger, Melanie N. and Kaynama, Shahab and Gillula, Jeremy and Tomlin, Claire J.},
  journal={IEEE Transactions on Automatic Control}, 
  title={A General Safety Framework for Learning-Based Control in Uncertain Robotic Systems}, 
  year={2019},
  volume={64},
  number={7},
  pages={2737-2752},
  keywords={Safety;Computational modeling;Analytical models;Robots;Aerodynamics;Bayes methods;Vehicle dynamics;Safety;robot learning;autonomous systems;robust optimal control;Gaussian processes},
  doi={10.1109/TAC.2018.2876389}}

@INPROCEEDINGS{fan2020BayesianLearningAdaptive,
  author={Fan, David D. and Nguyen, Jennifer and Thakker, Rohan and Alatur, Nikhilesh and Agha-mohammadi, Ali-akbar and Theodorou, Evangelos A.},
  booktitle={2020 IEEE International Conference on Robotics and Automation (ICRA)}, 
  title={Bayesian Learning-Based Adaptive Control for Safety Critical Systems}, 
  year={2020},
  volume={},
  number={},
  pages={4093-4099},
  keywords={Safety;Adaptation models;Bayes methods;Stochastic processes;Uncertainty;Computational modeling;Switches;Robust/Adaptive Control of Robotic Systems;Robot Safety;Probability and Statistical Methods;Bayesian Adaptive Control;Deep Learning;Mars Rover},
  doi={10.1109/ICRA40945.2020.9196709}}

@INPROCEEDINGS{lopez2023UnmatchedCBFs,
  author={Lopez, Brett T. and Slotine, Jean-Jacques E.},
  booktitle={2023 American Control Conference (ACC)}, 
  title={Unmatched Control Barrier Functions: Certainty Equivalence Adaptive Safety}, 
  year={2023},
  volume={},
  number={},
  pages={3662-3668},
  keywords={Adaptation models;Uncertainty;Simulation;Control systems;Safety;Parametric statistics;Dynamical systems},
  doi={10.23919/ACC55779.2023.10156496}}

@article{wang2024AdaptiveCBFs,
title = {Adaptive safety-critical control for a class of nonlinear systems with parametric uncertainties: A control barrier function approach},
journal = {Systems \& Control Letters},
volume = {188},
pages = {105798},
year = {2024},
issn = {0167-6911},
doi = {https://doi.org/10.1016/j.sysconle.2024.105798},
url = {https://www.sciencedirect.com/science/article/pii/S0167691124000860},
author = {Yujie Wang and Xiangru Xu},
keywords = {Safety-critical control, Control barrier functions, Adaptive control, Data-driven approach, Nonlinear programming}
}

@misc{fisher2025EBSBarXiv,
      title={An Error-Based Safety Buffer for Safe Adaptive Control (Extended Version)}, 
      author={Peter A. Fisher and Johannes Autenrieb and Anuradha M. Annaswamy},
      year={2025},
      eprint={2510.23491},
      archivePrefix={arXiv},
      primaryClass={eess.SY},
      url={https://arxiv.org/abs/2510.23491}, 
}

@misc{rabiee2024SoftminCBFs,
      title={A Closed-Form Control for Safety Under Input Constraints Using a Composition of Control Barrier Functions}, 
      author={Pedram Rabiee and Jesse B. Hoagg},
      year={2024},
      eprint={2406.16874},
      archivePrefix={arXiv},
      primaryClass={eess.SY},
      url={https://arxiv.org/abs/2406.16874}, 
}

%----------------------------------------------------------------
%----------------------------------------------------------------
% AUTHOR BIOS
%----------------------------------------------------------------
%----------------------------------------------------------------

\if \arxivversion 0
\noindent\textbf{Peter Fisher} is a PhD Candidate in the Active-Adaptive Control Laboratory at the Massachusetts Institute of Technology (MIT), Cambridge, MA, USA. His research interests include safety-critical and constrained adaptive control, control barrier functions, adaptive optimal control, and discrete-time adaptive control and parameter learning.

\noindent\textbf{Johannes Autenrieb} is a Research Scientist with the Institute of Flight Systems, Department of Flight Dynamics and Simulation, at the German Aerospace Center (DLR), Braunschweig, Germany, and a Ph.D. Candidate at the Technical University of Braunschweig. He was a Visiting Scholar at MIT, where he worked on control barrier functions for adaptive and safety-critical flight control systems, and a Visiting Ph.D. Student at the University of Oxford, U.K., in 2024, focusing on robust online optimization algorithms for uncertain dynamical systems. His research interests include nonlinear guidance and control of aerospace systems, adaptive control theory, safety-critical control of autonomous systems, as well as multi-agent systems.

\noindent\textbf{Anuradha Annaswamy} is Founder and Director of the Active-Adaptive Control Laboratory in the Department of Mechanical Engineering at MIT. Her research interests span adaptive control theory and its applications to aerospace, automotive, propulsion, and energy systems as well as cyber physical systems such as Smart Grids, Smart Cities, and Smart Infrastructures. She has received best paper awards (Axelby, 1986; CSM, 2010), as well as Distinguished Member and Distinguished Lecturer awards from the IEEE Control Systems Society (CSS), best paper award from the IFAC journal Annual Reviews in Control for 2021-23, and a Presidential Young Investigator award from NSF, 1991-97. She is a Fellow of IEEE and International Federation of Automatic Control. She is the recipient of the Distinguished Alumni award from Indian Institute of Science. She received the IEEE Control Systems Technology Award from CSS in 2024.
\fi

\end{document}